\renewcommand\bf\bfseries
\addspace\printfield{pages}\addspace
\newcommand{\leqnomode}{\tagsleft@true\let\veqno\@@leqno}
\newcommand{\reqnomode}{\tagsleft@false\let\veqno\@@eqno}
\numberwithin{equation}{section}
\newcommand\myshade{85}
\colorlet{mylinkcolor}{violet}
\colorlet{mycitecolor}{YellowOrange}
\colorlet{myurlcolor}{Aquamarine}
\tikzset{arrow data/.style 2 args={%
      decoration={%
         markings,
         mark=at position #1 with \arrow{#2}},
         postaction=decorate}
      }%
\definecolor{ct_black}{HTML}{000000}
\definecolor{ct_orange}{HTML}{ED872D}
\definecolor{ct_purple}{HTML}{7A68A6}
\definecolor{ct_blue}{HTML}{348ABD}
\definecolor{ct_turquoise}{HTML}{188487}
\definecolor{ct_red}{HTML}{E32636}
\definecolor{ct_pink}{HTML}{CF4457}
\definecolor{ct_green}{HTML}{467821}
\definecolor{ct2_green}{HTML}{9FF781}
\definecolor{ct2_green_dark}{HTML}{088A08}
\theoremstyle{plain}
\newtheorem{thm}{\protect\theoremname}[section]
\theoremstyle{plain}
\newtheorem{lem}[thm]{\protect\lemmaname}
\theoremstyle{plain}
\newtheorem{cor}[thm]{\protect\corollaryname}
\theoremstyle{plain}
\newtheorem{prop}[thm]{\protect\propositionname}
\theoremstyle{plain}
\theoremstyle{remark}
\newtheorem{rem}[thm]{\protect\remarkname}
\theoremstyle{definition}
\newtheorem{defn}[thm]{\protect\definitionname}
\theoremstyle{plain}
\newtheorem{example}[thm]{\protect\examplename}
\providecommand{\assumptionname}{Assumption}
\providecommand{\claimname}{Claim}
\providecommand{\corollaryname}{Corollary}
\providecommand{\definitionname}{Definition}
\providecommand{\lemmaname}{Lemma}
\providecommand{\propositionname}{Proposition}
\providecommand{\remarkname}{Remark}
\providecommand{\theoremname}{Theorem}
\providecommand{\examplename}{Example}
\crefname{section}{Section}{Sections}
\crefname{example}{Example}{Examples}
\crefname{appendix}{Appendix}{Appendices}
\crefname{figure}{Figure}{Figures}
\crefname{assumption}{Assumption}{Assumptions}
\crefname{thm}{Theorem}{Theorems}
\crefname{lem}{Lemma}{Lemmas}
\crefname{table}{Table}{Tables}
\newtheorem*{lem*}{\protect\lemmaname}
\newcommand{\ii}{\operatorname{i}}
\newcommand{\ZZ}{\mathbb{Z}}
\newcommand{\bS}{\mathbb{S}}
\newcommand{\NN}{\mathbb{N}}
\newcommand{\RR}{\mathbb{R}}
\newcommand{\QQ}{\mathbb{Q}}
\newcommand{\CC}{\mathbb{C}}
\newcommand{\calB}{\mathcal{B}}
\newcommand{\calF}{\mathcal{F}}
\newcommand{\calN}{\mathcal{N}}
\newcommand{\calG}{\mathcal{G}}
\newcommand{\calD}{\mathcal{D}}
\newcommand{\calU}{\mathcal{U}}
\newcommand{\calSU}{\mathcal{SU}}
\newcommand{\calSG}{\mathcal{SG}}
\newcommand{\calH}{\mathcal{H}}
\newcommand{\calK}{\mathcal{K}}
\newcommand{\calL}{\mathcal{L}}
\newcommand{\calJ}{\mathcal{J}}
\newcommand{\calS}{\mathcal{S}}
\newcommand{\bbLambda}{\mathbb{\Lambda}}
\newcommand{\bbDelta}{\mathbb{\Delta}}
\newcommand{\ti}[1]{\widetilde{#1}} 
\newcommand{\LamNT}{{\Lambda\mathrm{nt}}}
\newcommand\norm[1]{\left\lVert#1\right\rVert}
\newcommand{\dif}{\operatorname{d}\!} % shorten the space afterwards
\newcommand{\tr}{\operatorname{tr}}
\newcommand{\szpan}{\operatorname{span}}
\newcommand{\ve}{\varepsilon}
\newcommand{\vf}{\varphi}
\newcommand{\Id}{\mathds{1}}
\newcommand{\dist}{\mathrm{dist}}
\newcommand{\sgn}{\operatorname{sgn}}
\newcommand{\findex}{\operatorname{ind}}
\newcommand{\im}{\operatorname{im}}
\newcommand{\diam}{\operatorname{diam}}
\newcommand{\sigmaess}{\sigma_{\mathrm{ess}}}
\newcommand{\eq}[1]{\begin{align*}#1\end{align*}}
\newcommand{\eql}[1]{\begin{align}#1\end{align}}
\newcommand{\polar}{\operatorname{pol}}
\newcommand{\calT}{\mathcal{T}}
\title{Topological Classification of Insulators:\\ II. Quasi-Two-Dimensional Locality}
\author{\href{mailto:jc1220@math.princeton.edu}{Jui-Hui Chung}\\
	{\footnotesize Program in Applied and Computational Mathematics, Princeton University }\\
	\href{mailto:jacobshapiro@princeton.edu}{Jacob Shapiro}\\
	{\footnotesize Department of Mathematics, Princeton University}
}
\begin{document}
	\reqnomode
	
	\maketitle
		\begin{abstract}
		We provide an alternative characterization of two-dimensional locality (necessary e.g. to define the Hall conductivity of a Fermi projection) using the spectral projections of the Laughlin flux operator. Using this abstract characterization, we define generalizations of this locality, which we term quasi-2D. We go on to calculate the path-connected components of spaces of unitaries or orthogonal projections which are quasi-2D-local and find a starkly different behavior compared with the actual 2D column of the Kitaev table, exhibiting e.g., in the unitary chiral case, infinitely many $\ZZ$-valued indices.
	\end{abstract}
	\tableofcontents
	\section{Introduction}
    Topological insulators \cite{Hasan_Kane_2010} are exotic solids which are insulating in their bulk but which are robust conductors along their boundary. Many of their special properties, e.g., robust quantization of experimental quantities, are explained by the fact the space of quantum mechanical Hamiltonians associated with such classes of materials is \emph{not} path-connected. Hence the topological non-triviality is not meant in real physical space but rather in the abstract space of systems. To mathematically unpack this statement one must define the ambient topology on spaces of operators on Hilbert space which correspond to (quantum mechanical) Hamiltonians associated with solids and calculate its path-connected components ($\pi_0$ henceforth). The problem of calculating $\pi_0$ of various spaces of operators on Hilbert space is of course classical, e.g., Kuiper's theorem on the path-connectedness of invertibles, or the Atiyah-J\"anich theorem whose consequence is that $\pi_0(\calF)\cong\ZZ$. In the condensed matter physics setting of topological insulators, however, one needs further constraints on operators and so, after appropriate translation from quantum mechanics into functional analysis, we are left with a classification problem of custom spaces of operators.
    
    We now describe the problem more precisely. Define a space of Hamiltonians (bounded self-adjoint linear operators on the Hilbert space $\calH = \ell^2(\ZZ^d)\otimes\CC^N$) with three additional constraints: (1) locality, (2) the insulator condition and (3) possible symmetry. Define a topology on this space. Calculate $\pi_0$ and show that the bijection from that set into a set of numbers (in case the space ends up being not path-connected) is the lift of a physically measurable quantity $\calN$, such as electric conductivity. This special measurable quantity $\calN$ is referred to as \emph{topological invariant} or \emph{index}. We refer the reader to \cite{ChungShapiro2023} for more context on this problem as well as existing literature on the topic (we avoid repeating this survey here).

    There are various levels of generality to the insulator \emph{and} locality condition, each of which corresponds to some interesting aspect of the problem. E.g. one may define insulators via the theory of Anderson localization. Doing so is conjectured not to affect the classification. At present we are rather concerned with different notions of \emph{locality}. In \cite{ChungShapiro2023} we studied the $d=1$ case using the simplest insulator condition, which stipulates that $H$ is an insulator iff $0\notin\sigma(H)$; however, we generalized one-dimensional locality in a convenient way (although we refer the reader to \cite[Section 5.6]{ChungShapiro2023} for a discussion of classification without generalizing locality). 
    
    Locality in quantum mechanics stems from the basic physical principle that interaction strength diminishes with distance. Say a system follows a time evolution generated by the Hamiltonian $H$. Then the quantum mechanical amplitude to transition from state $\vf$ to state $\psi$ is associated with the expression $\braket{\psi,H\vf}$. If these vectors are two wave-packets concentrated in far away points in space then this number should decay with that distance. For this reason many popular models in physics are in fact merely \emph{nearest-neighbor} and for that reason it is common in physics to work with operators $A$ that are $R$-hopping-local  (for some fixed $R\in\NN$), i.e., \eq{A_{xy}\equiv\braket{\delta_x,A\delta_y}=0\qquad(x,y\in\ZZ^d:\norm{x-y}>R)} where $\Set{\delta_x}_{x\in\ZZ^d}$ is the standard (position) basis for $\ell^2(\ZZ^d)$. More common in mathematical physics (and defining a norm-closed set) is \emph{exponential locality}, given by the condition that $A\in\calB(\calH)$ is exponentially local iff there exist $C,\mu\in(0,\infty)$ such that \eql{\label{eq:exp-locality} \norm{A_{xy}}\leq C \exp\left(-\mu\norm{x-y}\right)\qquad(x,y\in\ZZ^d)\,.} One could instead take different, logarithmic metrics to get various rates of polynomial decay. We want to generalize this notion in a more drastic way, guided by the principle that any generalization should keep the topological invariant $\calN$ well-defined. In one dimension, it was possible to relax exponential locality while keeping $\calN$ as follows. On $\calH:=\ell^2(\ZZ)\otimes\CC^N$, with $X$ the position operator, let \eql{\label{eq:Lambda in 1D}\Lambda := \chi_\NN(X)} be the self-adjoint projection onto the right-hand side of space. Then we define an operator $A\in\calB(\calH)$ to be $\Lambda$-local iff \eql{\label{eq:Lambda-locality}[A,\Lambda]\in\calK\,.}

    We observe that \cref{eq:exp-locality} implies \cref{eq:Lambda-locality} (see e.g. \cite[Lemma 2 (b)]{Graf_Shapiro_2018_1D_Chiral_BEC}) whereas the converse is false. Indeed, operators which are $\Lambda$-local allow for hopping of arbitrary distance just so long as they are limited to strictly the left or the right-hand side of the system; it is merely the hopping from left to right and vice versa which should be compact. 
    
    Adopting the philosophy (customary in condensed matter physics) that a particular model is unimportant and rather we focus on studying the mathematically simplest question which still yields non-trivial physical results, one should not object to replacing \cref{eq:Lambda-locality} with \cref{eq:exp-locality}: the former yields $\pi_0$ that agrees with the one-dimensional column of the Kitaev table \cite{Kitaev2009}, as was shown in \cite{ChungShapiro2023}. A mathematical physicist might want then to carry out a separate study for exponential locality (as in \cite[Section 5.6]{ChungShapiro2023}) or, ideally, show that exponential locality is a deformation retract of the weaker generalized locality.

    What is the analog of $\Lambda$-locality in two dimensions? Unsurprisingly, as we shift dimensions the roles of unitaries and projections swap; one is guided by the Laughlin flux insertion formulation of the Hall conductivity, which in its index form first appeared in \cite{Bellissard_1994JMP....35.5373B} (we refer the reader to that paper for context on where the operator $L$ below comes from). On $\calH := \ell^2(\ZZ^2)\otimes\CC^N$, define the so-called Laughlin flux insertion unitary operator as \eql{\label{eq:Laughlin flux operator} L := \exp\left(\ii\arg\left(X_1+\ii X_2\right)\right)} where $X_1,X_2$ are the two position operators on $\calH$. We term an operator $A\in\calB$ as $L$-local iff \eql{\label{eq:L-locality} [A,L]\in\calK\,. } It is true that \cref{eq:exp-locality} implies \cref{eq:L-locality} (see e.g. \cite[Lemma A.1]{BSS23}) and again the converse is false. If we think about the operator $L$ as the polar part of the position operator considered as a complex-valued number on the plane, $L$-locality roughly corresponds to hopping "compactly" between rays but arbitrarily far radially within rays, which is intuitively why this notion is weaker than exponential locality. Moreover, it is well-known that \cref{eq:L-locality} suffices to get well-defined topological indices, in particular, the two-dimensional Chern number of the integer quantum Hall effect (this result first appeared in \cite{Bellissard_1994JMP....35.5373B}).
	
	\begin{table}[b]
		\begin{center}
			\begin{tabular}{c|ccc|cccccccc}
				\multicolumn{4}{c|}{Symmetry } & \multicolumn{8}{c}{dimension} \\
				\multicolumn{1}{c}{AZ} &$\hspace{1.5mm}\Theta\hspace{1.5mm} $ &
				$\hspace{1.5mm} \Xi\hspace{1.5mm} $ &
				$\hspace{1.5mm} \Pi\hspace{1.5mm} $ &
				$1$   &  $2$ &  $3$ &  $4$ &  $5$ & $6$ & $7$& $8$ \\
				\hline
				A & $0$ & $0$ & $0$  &$0$& $\mathbb{Z}$ &$0$& $\mathbb{Z}$ &$0$& $\mathbb{Z}$ &$0$& $\mathbb{Z}$\\
				AIII & $0$ & $0$ & $1$ & $\mathbb{Z}$ &$0$& $\mathbb{Z}$ &$0$& $\mathbb{Z}$ &$0$& $\mathbb{Z}$& $0$\\
				\hline
				AI & $1$ & $0$ & $0$  &$0$&$0$&$0$&$\mathbb{Z}$&$0$&$\mathbb{Z}_2$&$\mathbb{Z}_2$& $\mathbb{Z}$ \\
				BDI & $1$ &$1$ &$1$ & $\mathbb{Z}$ &$0$&$0$&$0$&$\mathbb{Z}$&$0$&$\mathbb{Z}_2$& $\mathbb{Z}_2$\\
				D & $0$ &$1$ &$0$ & $\mathbb{Z}_2$& $\mathbb{Z}$ &$0$&$0$&$0$&$\mathbb{Z}$&$0$&$\mathbb{Z}_2$\\
				DIII&$-1$ &$1$ &$1$ &$\mathbb{Z}_2$& $\mathbb{Z}_2$& $\mathbb{Z}$ &$0$&$0$&$0$&$\mathbb{Z}$&$0$\\
				AII & $-1$ & $0$ & $0$ &$0$&$\mathbb{Z}_2$& $\mathbb{Z}_2$& $\mathbb{Z}$ &$0$&$0$& $0$&$\mathbb{Z}$\\
				CII & $-1$ &$-1$ & $1$&$\mathbb{Z}$ & $0$&$\mathbb{Z}_2$& $\mathbb{Z}_2$& $\mathbb{Z}$ &$0$&$0$&$0$ \\
				C & $0$ & $-1$& $0$ & $0$ &$\mathbb{Z}$ &$0$&$\mathbb{Z}_2$& $\mathbb{Z}_2$& $\mathbb{Z}$ &$0$& $0$\\
				CI & $1$ & $-1$ & $1$& $0$ & $0$&$\mathbb{Z}$&$0$&$\mathbb{Z}_2$& $\mathbb{Z}_2$& $\mathbb{Z}$& $0$ \\
			\end{tabular}
		\end{center}
		\caption{The Kitaev periodic table. The entries stand for the respective K-theory groups in a given dimension and symmetry class.}
		\label{table:Kitaev}
	\end{table}

    Hamiltonians break into two main classes for purposes of classification, depending on the presence or absence of a so-called "chiral" symmetry: unitaries and self-adjoint unitaries (this is really patterned after K-theory). Indeed, flat Hamiltonians (such that $H=\sgn(H)$) are self-adjoint unitary operators, and in the chiral case, each flat Hamiltonian is in one-to-one correspondence with a unitary operator, directly via the definition of chiral symmetry. The space of flat Hamiltonians is a deformation retract of non-flat ones. Hence, given the first two rows of the $d=2$ column of \cref{table:Kitaev}, we are led to ask the

    \emph{Question}: Is the space of $L$-local unitaries path-connected, whereas $\pi_0$ of the space of $L$-local orthogonal projections (which are just self-adjoint unitaries via $P\equiv\frac12(\Id-V)$) bijective with $\ZZ$, under the bijection \eql{ P \mapsto \findex\left(P LP +P^\perp\right)\in\ZZ\,? }
    Not necessarily, since, at least in \cite{ChungShapiro2023} for $d=1$ we had to require certain "$\Lambda$-non-triviality" conditions on self-adjoint unitaries in order to match with the Kitaev table. These essentially amounted to the fact that the system is honestly infinite-dimensional as a bulk system and not only as an edge system.  We shall make similar assumptions in the sequel, acknowledging that the naive locality and insulator conditions we have stipulated so far are probably not sufficient to yield the expected $\pi_0$ according to \cref{table:Kitaev}.

    In this report we do not directly address $\pi_0$ of $L$-local systems; this is the subject of ongoing work. Instead, we give an abstract characterization of $L$-locality and using that, explore other forms of locality which also could have been plausible in two dimensions and show that they lead to starkly different results than those predicted by the first two rows of the two-dimensional column of \cref{table:Kitaev}. We present this first and foremost as a curious study of spaces of operators, which may or may not be relevant to physics--we don't know.

    This manuscript is organized as follows. In \cref{sec:Abstract L locality} we re-cast $L$-locality using the spectral projections of $L$, in an abstract way that does away with the geometry of $\ZZ^2$. In \cref{sec:other forms of locality} we use this abstract characterization as a springboard for generalizations into other forms of plausible locality. In \cref{sec:classification of quasi2D unitaries,sec:classification of quasi 2D SAUs} we calculate $\pi_0$ of unitaries or self-adjoint unitaries which obey these generalized forms of locality. It is there that we exhibit an infinite number of $\ZZ$-valued indices. Finally, in \cref{sec:physical consequences} we discuss some physical consequences of this scheme, which are especially compelling for star-shaped wire systems. \cref{sec:operator integrals} contains some standard results about integrals of operator-valued functions which we include here for convenience.
    \begin{rem}
        In principle one could also embark on a classification of these unitaries and self-adjoint unitaries which obey various real symmetries (corresponding to the Altland-Zirnbauer classes), as we have done in the one-dimensional case in \cite{ChungShapiro2023}. We leave this question to a later occasion since it is somewhat perpendicular to our present line of questioning.
    \end{rem}

    \section{Abstract $L$-locality}\label{sec:Abstract L locality}
    Towards studying $\pi_0$ of $L$-local operators, it is sensible to expect that there is utility in rephrasing $L$-locality \cref{eq:L-locality} in other, equivalent ways. We would like to work towards a "representation independent" characterization of $L$-locality. For example, let $\ti\calH $ be any other separable Hilbert space and let $R$ be a unitary operator on $\ti \calH$ s.t. $\sigma(R)=\bS^1$. We say that $T\in\calB(\ti \calH)$ is $R$-local iff $[T,R]\in\calK(\ti\calH)$. Then we have the following
    \begin{prop}
    The space of $L$-local operators is isometrically isomorphic to the space of $R$-local operators.
    \end{prop}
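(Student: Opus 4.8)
The plan is to realize the isomorphism as conjugation by a single unitary. Since $R$ is a unitary with $\sigma(R)=\bS^1$, the space $\ti\calH$ is necessarily infinite-dimensional (a unitary on a finite-dimensional space has finite spectrum) and is separable by hypothesis, so I may fix a unitary identification $\ti\calH\cong\calH$ and assume $R\in\calB(\calH)$ from the outset. It then suffices to produce a unitary operator $W$ on $\calH$ with $WLW^{*}-R\in\calK(\calH)$: for such a $W$, conjugation $A\mapsto WAW^{*}$ is an isometric $*$-automorphism of $\calB(\calH)$ preserving $\calK(\calH)$, and the identity
\[
[WAW^{*},R]=W[A,L]W^{*}+[WAW^{*},\,R-WLW^{*}]
\]
shows, the last commutator being a product of a bounded and a compact operator, that $A$ is $L$-local iff $WAW^{*}$ is $R$-local. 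Restricting the automorphism to the two locality spaces (which are unital $C^{*}$-subalgebras of $\calB(\calH)$, so that ``isometrically isomorphic'' may be read in either the Banach-space or the $C^{*}$-category) then yields the claimed isometric isomorphism.

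The heart of the matter is thus the existence of $W$, i.e.\ that $L$ and $R$ are unitarily equivalent modulo compacts; this is where the only hypothesis on $R$, namely $\sigma(R)=\bS^1$, gets used. Since $\bS^1$ has no isolated points, a point of $\sigma(R)\setminus\sigmaess(R)$ would be an isolated eigenvalue, which is impossible; hence $\sigmaess(R)=\bS^1$, and likewise $\sigmaess(L)=\bS^1$. Both $L$ and $R$ being genuine unitaries they are invertible, so $\findex(L)=\findex(R)=0$; and for $\lambda\in\CC\setminus\bS^1$ the Fredholm operators $L-\lambda$ and $R-\lambda$ have index that is locally constant on $\CC\setminus\bS^1$, hence equal to $0$ on the unbounded component (large $|\lambda|$) and equal to $\findex(L)=\findex(R)=0$ on the bounded component (the disk $\DD$). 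Thus $L$ and $R$ are essentially normal operators with the same essential spectrum and the same index data, so the Brown--Douglas--Fillmore classification of essentially normal operators produces a unitary $W$ with $WLW^{*}-R\in\calK(\calH)$, as required.

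I expect this Brown--Douglas--Fillmore input to be the only substantive point; everything else is bookkeeping with the commutator identity above. If one prefers not to invoke the full BDF theorem, the same conclusion is reachable more concretely through the Weyl--von Neumann--Berg theorem: write $L=D_{1}+K_{1}$ and $R=D_{2}+K_{2}$ with $D_{i}$ diagonal unitaries and $K_{i}$ compact, note that $\overline{\sigma(D_{i})}=\bS^1$ for $i=1,2$, and deduce that each $D_{i}$ is unitarily equivalent modulo a compact perturbation to the fixed model unitary ``multiplication by $z$ on $L^{2}(\bS^1)\otimes\ell^{2}(\NN)$''; transitivity then gives $W$. Either route reduces the proposition to the single fact that a unitary with full spectrum $\bS^1$ is, modulo compacts, unitarily equivalent to a universal model depending only on $\bS^1$ — which is precisely the ``representation-independence'' the section is after.
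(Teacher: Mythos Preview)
Your proof is correct and follows essentially the same route as the paper: produce a unitary $W$ with $WLW^{*}-R\in\calK$ and then conjugate, using the commutator identity to transport locality. The paper invokes Weyl--von~Neumann--Berg directly (which suffices here since $L$ and $R$ are genuinely normal, not merely essentially normal), whereas you lead with BDF and mention Weyl--von~Neumann--Berg as an alternative; your additional care in justifying $\sigmaess(L)=\sigmaess(R)=\bS^1$ and the index data is welcome but not a substantive departure.
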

    \begin{proof}
    We will use the Weyl-von Neumann-Berg theorem \cite[Theorem 39.8]{conway2000course} which says that if two normal operators on two separable Hilbert spaces have the same essential spectrum, then they are unitarily equivalent modulo a compact operator. Here, since $L,R$ are normal, their essential spectra equal $\bS^1$ and in particular are equal. Therefore, there exists a unitary operator $U:\calH\to\ti\calH$ and a compact operator $K\in\calB(H)$ such that
    \eq{
        L = U^\ast R U + K\,.
    }
    We argue that the map $T\mapsto U^\ast T U$ sends $R$-local operators to $L$-local operators and is an isometric isomorphism, whose inverse is $A\mapsto UAU^\ast$. Indeed, let $T$ be $R$ local. Then
    \eq{
        [U^\ast TU,L] = [U^\ast TU,U^\ast R U + K] = U^\ast[T,R]U + U^\ast TUK + KU^\ast T U\in\calK
    }
    and hence $U^\ast TU$ is $L$-local.
    \end{proof}

    \begin{cor}[Existence of arbitrary index $R$-local projections] On $\ell^2(\ZZ)$, let $R$ be the bilateral right shift operator and $\Lambda$ as in \cref{eq:Lambda in 1D}. In \cite{ChungShapiro2023} we saw that $\bbLambda R\equiv\Lambda R \Lambda+\Lambda^\perp$ is a Fredholm operator and its index is $-1$. A natural question is: is there an $R$-local projection $P$ (so $\mathbb{P} R$ is automatically Fredholm) for which $\mathbb{P} R$ can take any arbitrary integer? The answer is yes by the above proposition.
    \end{cor}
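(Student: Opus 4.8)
The plan is to realize every integer as an index already on the $L$-local side — where this is classically known — and then carry the examples over to the $R$-local side along the isometric isomorphism supplied by the Proposition. It is a classical fact of the integer quantum Hall effect (see e.g. \cite{Bellissard_1994JMP....35.5373B}) that for every $n\in\ZZ$ there is an orthogonal projection $Q$ on $\ell^2(\ZZ^2)\otimes\CC^N$ which is exponentially local — hence $L$-local by \cite[Lemma A.1]{BSS23} — and whose Laughlin index $\findex(QLQ+Q^\perp)$, which coincides with the two-dimensional Chern number as recalled in the Introduction, equals $n$; indeed the Fermi projections of suitable discrete magnetic Schrödinger operators realize every Chern number. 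So the statement already holds with $L$ in place of $R$, and it remains to transport it.

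For the transport, use the unitary $U\colon\calH\to\ti\calH=\ell^2(\ZZ)$ and the compact $K$ from the proof of the Proposition, so that $L=U^\ast RU+K$ and the map $T\mapsto UTU^\ast$ (the inverse of the one in that proof) carries $L$-local operators to $R$-local operators. This map sends orthogonal projections to orthogonal projections, and — the single point requiring care — it respects the Fredholm index: for an $L$-local projection $Q$ one checks, using $(UQU^\ast)^\perp=UQ^\perp U^\ast$ and absorbing the terms coming from $K$,
\[
U\bigl(QLQ+Q^\perp\bigr)U^\ast=\bigl(UQU^\ast\bigr)R\bigl(UQU^\ast\bigr)+\bigl(UQU^\ast\bigr)^\perp+\widetilde K ,\qquad \widetilde K\in\calK .
\]
Since unitary conjugation and compact perturbations leave the Fredholm index unchanged, $\findex\bigl((UQU^\ast)R(UQU^\ast)+(UQU^\ast)^\perp\bigr)=\findex(QLQ+Q^\perp)=n$, so $P:=UQU^\ast$ is the sought $R$-local projection with $PRP+P^\perp$ of index $n$.

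Should one prefer to avoid two-dimensional input, the same transport argument — which uses only that two normal operators with equal essential spectrum are unitarily equivalent modulo compacts — applies on $\ell^2(\ZZ)$ alone, with the shift power $R^m$ (for any $m\in\ZZ$; note $\sigma(R^m)=\bS^1$) playing the role of $L$: the projection $\Lambda=\chi_\NN(X)$ is $R^m$-local because $[\Lambda,R]$ has finite rank and hence so does $[\Lambda,R^m]$, and a short computation on the standard basis gives $\findex(\Lambda R^m\Lambda+\Lambda^\perp)=-m$ — with $m=1$ reproducing the index $-1$ of $\bbLambda R$ — so transporting $\Lambda$ produces $R$-local projections of every index $-m$. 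In either route the only non-routine step is the displayed identity: one has to verify that substituting $L=U^\ast RU+K$ and commuting $U,U^\ast$ past $Q$ introduces nothing worse than a compact error, so that Fredholmness, and with it the value of the index, survives the transport.
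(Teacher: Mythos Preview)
Your proposal is correct. In fact, your \emph{alternative} route (the final paragraph, using powers $R^m$ of the shift together with $\Lambda=\chi_\NN(X)$ and transporting via Weyl--von Neumann--Berg) is precisely the paper's proof: the paper takes $R^k$, observes that $\Lambda$ is $R^k$-local with $\findex\bbLambda R^k=-k$, finds a unitary $U$ with $R^k=U^\ast RU+K$, sets $P:=U\Lambda U^\ast$, and computes $\mathbb{P}R=U\bbLambda(R^k-K)U^\ast$, so the index is $-k$.

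Your \emph{primary} route --- importing two-dimensional input (Fermi projections of discrete magnetic Schr\"odinger operators realizing every Chern number) and then transporting along the isomorphism of the Proposition --- is a genuinely different, and also valid, argument; the transport identity you display is exactly right. It does, however, invoke considerably heavier machinery (the construction of projections of prescribed Chern number, and the identification of the Laughlin index with that Chern number) than is needed here. The paper's approach, coinciding with your alternative, is entirely self-contained on $\ell^2(\ZZ)$: it uses only the elementary index computation for shift powers together with the Weyl--von Neumann--Berg theorem already present in the proof of the Proposition.
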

    \begin{proof}
        Consider $R^k$ for $k\in\ZZ$. Then $\Lambda$ is $R^k$-local and $\findex \bbLambda R^k=-k$. By the Weyl-von Neumann-Berg theorem, there exists a unitary operator $U:\ell^2(\ZZ)\to\ell^2(\ZZ)$ such that $R^k=U^\ast RU+K$. Since  $\Lambda$ is $R^k$-local, it follows that the projection $P:=U\Lambda U^\ast$ is $R$-local. In particular \eq{\mathbb{P} R = U\Lambda U^\ast R U\Lambda U^\ast + U\Lambda^\perp U^\ast = U \bbLambda (R^k-K) U^\ast} has index $-k$.
    \end{proof}
    
    \begin{rem}[Importance of spectral type and the spectrum of the unitary operator]\label{rem:spectral type of L}
        The preceding example illustrates that the spectral type of the chosen unitary operator with which to define locality does not matter. Indeed, the operator $L$ has a pure-point spectrum, while the bilateral right shift $R$ has absolutely continuous spectrum. However, it is important that the spectrum covers the whole circle so that the Weyl-von Neumann-Berg theorem applies. Indeed, if on $\calH$ we choose for locality a unitary operator $V$ whose spectrum is not the whole circle, then the $V$-local projections (those projections $P$ such that $[P,V]\in\calK$) necessarily have vanishing index, i.e., $\findex \mathbb{P}V=0$ (see \cite[Theorem 3.1]{brown1973unitary}). This space of $V$-local projections might still be not path-connected, but we would exclude such $V$ for the following reasons. With such a choice for $V$, one could not use this formalism to understand the integer quantum Hall effect, since $\findex \mathbb{P}V$ is necessarily the formula for the Hall conductance. We postpone the study of $V$-locality with $V$ not covering the circle to another occasion.
    \end{rem}

    From now on, we will let $\calH$ be an abstract separable Hilbert space and $L$ an abstract unitary operator on $\calH$ whose spectrum covers the circle $\bS^1$. We say that an operator $A\in\calB(\calH)$ is $L$-local iff
    \eql{\label{eq:abstract L-locality}
        [A,L]\in\calK\,.
    }

    Our next goal is to characterize $L$-locality in terms of the spectral projections of $L$ rather than $L$ itself. To motivate the analysis in the sequel, we present a heuristic argument. Let $J\subseteq \bS^1$ be an interval and let $\Lambda_J=\chi_J(L)$ be the spectral projection of $L$. Let $\gamma$ be a contour that surrounds $J$ with $\gamma$ intersecting $\bS^1$ only on the endpoints of $J$ (see \cref{fig:contour}). If 
    \eql{
        \Lambda_J = \frac{-1}{2\pi\ii}\oint_\gamma R(z)\dif{z} \label{eq:spectral projection as contour integral}
    }
    holds, where $R(z) \equiv (L-z\Id)^{-1}$ is the resolvent of $L$, then we have
    \eql{
        [A,\Lambda_J] &= \frac{-1}{2\pi\ii}\oint_\gamma\left[A, R(z)\right]\dif{z} \notag\\
        &=  \frac{1}{2\pi\ii}\oint_\gamma  R(z)\left[A,L\right]R(z)\dif{z} \,.\label{eq:commutator resolvent troublesome}
    }
    Since $[A,L]\in\calK$, the above suggests that $[A,\Lambda_J]\in\calK$ for all intervals $J\subseteq\bS^1$. The trouble with this argument is that \cref{eq:spectral projection as contour integral} does not hold in general, e.g., if the contour passes through an eigenvalue. Suppose for the argument's sake that \cref{eq:spectral projection as contour integral} \emph{does} hold. This argument is \emph{still} wrong: the problem is that the strong operator limit (the limit entailed in carrying out the contour integral) may not preserve compactness.

    In fact, there exists an $L$-local operator $A$ such that $[A,\Lambda_J]$ fails to be compact for some interval $J$:
    \begin{example}\label{ex:counterexample}
    On $\ell^2(\ZZ^2)$, choose $L$ as the Laughlin flux as in \cref{eq:Laughlin flux operator} and $R_1$ be the right shift operator, i.e., \eq{R_1\delta_{(x_1,x_2)} = \delta_{(x_1+1,x_2)}\qquad(x\in\ZZ^2)\,.} Since $R_1$ is a finite-hopping operator, it follows from \cite[Lemma A.1]{BSS23} that $R_1$ is $L$-local. Let \eq{I=[\pi/2,3\pi/4]\cap\QQ\subseteq \bS^1\cap\QQ\cong [0,2\pi)\cap\QQ} and let $\Lambda_I=\chi_I(L)$ be the spectral projection of $L$ onto the interval $I$. Let $Q:=\chi_{\{\pi/2\}}(L)$. Then we have
    \eq{
        R_1^\ast (\Lambda_I^\perp R_1\Lambda_I)Q = R_1^\ast (\Lambda_I^\perp R_1Q) = R_1^\ast R_1 Q = Q
    }
    where in the second equality we observe that $\im R_1Q\subseteq \im\Lambda_I^\perp$. This suggests that $[R_1,\Lambda_I]$ cannot be compact. If it is, then $\Lambda_I^\perp[R_1,\Lambda_I]=\Lambda_I^\perp R_1\Lambda_I$ will be compact and so is $Q$. However, $Q$ is a projection onto an infinite-dimensional subspace which cannot be compact.
    \end{example}
    
    Given these precautions, we now proceed with the actual analysis. Let $I,J\subseteq \bS^1\subseteq \CC$ be two intervals with $\dist(I,J)>0$. Let $\gamma$ be a contour around $J$ where $I$ lies in the exterior of $\gamma$ (see \cref{fig:contour}). In particular, both $\dist(I,\gamma),\dist(J,\gamma)>0$ are non-vanishing. We define the resolvent of $L$ \emph{on an interval $I$} as
    \eql{
        R_I(z) = \chi_I(L) (L\chi_I(L)-z\Id)^{-1}\,.
    }
    Even though the resolvent $R(z)$ of $L$ is defined only on $\CC\setminus \bS^1$, the resolvent $R_I(z)$ on an interval $I$ is defined on $\CC\setminus (\overline{I} \cup\{0\})$ since $\sigma(L\chi_I(L))\subseteq \overline{I}\cup\{0\}$. Here \eq{\frac{-1}{2\pi\ii}\oint_\gamma R_I(z)\dif{z}=\Lambda_I} and $\norm{R_I(z)}$ is uniformly bounded as $z$ varies on the counter $\gamma$. Indeed, $R_I(z)$ is integrable:
    \eq{
        \int_0^1 \norm{R_I(z)} |\gamma'(t)| \dif{t} \leq \frac{\int_0^1|\gamma'(t)|\dif t}{\dist(I,\gamma)}<\infty
    }
    using the fact that 
    \eq{
        \norm{R_I(z)}\leq \sup_{\lambda\in I} \frac{1}{\lambda\chi_I(\lambda)-z} = \frac{1}{\dist(I,z)}\leq \frac{1}{\dist(I,\gamma)}\,.
    }

    Inspired by \cref{eq:commutator resolvent troublesome}, it will turn out that the correct formula is
    \eql{
        \frac{1}{2\pi \ii}\oint_\gamma R_I(z)[A,L]R_J(z)\dif{z} =  \Lambda_IA\Lambda_J \label{eq:operator integral on laughlin commutator}
    }
    This suggests that if $A$ is $L$-local, then $\Lambda_IA\Lambda_J\in\calK$ for all intervals $I,J\subseteq\bS^1$ such that $\dist(I,J)>0$. In fact, the converse is also true and we obtain another characterization of $L$-locality:

    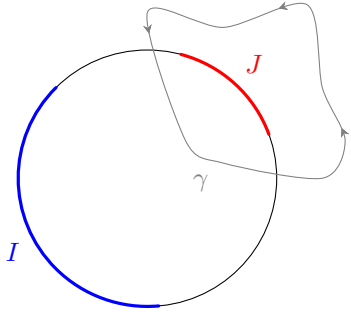
\begin{figure}[t]               
    \centering
    \begin{tikzpicture}[draw=black]
        \def\R{1.7}
        \coordinate (O) at (0,0);
        \draw (O) circle (\R);
        \draw[red,very thick,line cap=round] (20:\R) arc (20:75:\R) node[midway,above right] {$J$};
        \draw[blue,very thick,line cap=round] (135:\R) arc (135:275:\R) node[midway,below left] {$I$};
        \draw [gray,arrow data={0.25}{stealth},
               arrow data={0.5}{stealth},
               arrow data={0.75}{stealth}] plot [smooth cycle, tension=4] coordinates {(0.5,0.5) (1,0.2) (2.3,0) (2.6,0.5) (2.3,1.2) (2.1,2.3) (1,2.0) (0,2.2)};
        \node[label={[text=gray]below:$\gamma$}] (A) at (0.7,0.3) {}; 
    \end{tikzpicture}
    \caption{Contour $\gamma$ around the interval $J$ with the interval $I$ lying in the exterior of $\gamma$.}
    \label{fig:contour}
    \end{figure}

    \begin{thm}[Abstract characterization of $L$-locality]\label{thm:abstract-L-locality} On $\calH$, an operator $A$ is $L$-local as in \cref{eq:abstract L-locality} iff for any two intervals $I,J\subseteq \bS^1$ such that $\dist(I,J)>0$ (the usual notion of distance between sets on the circle) we have \eql{\label{eq:abstract-L-locality}\Lambda_I A \Lambda_J\in\calK} where $\Lambda_I=\chi_I(L)$ is the spectral projections of $L$.
    \end{thm}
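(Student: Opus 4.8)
The plan is to establish the two implications of \cref{thm:abstract-L-locality} separately: the forward one will rest on the operator-integral identity \cref{eq:operator integral on laughlin commutator}, and the converse on chopping $\bS^1$ into many short arcs.

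\emph{The forward direction.} Assume $[A,L]\in\calK$; we want $\Lambda_I A\Lambda_J\in\calK$ whenever $\dist(I,J)>0$. First I would prove \cref{eq:operator integral on laughlin commutator} by a short resolvent computation. Since $\chi_I(L)$ commutes with $L$, the operator $L\chi_I(L)$ acts on $\im\Lambda_I$ as $L$ (spectrum inside $\overline I$) and as $0$ on $\im\Lambda_I^\perp$, which yields the pointwise identities $R_I(z)L=\Lambda_I+zR_I(z)$ and, symmetrically, $LR_J(z)=\Lambda_J+zR_J(z)$. Feeding these into $R_I(z)(AL-LA)R_J(z)$, the two terms proportional to $zR_I(z)AR_J(z)$ cancel, leaving $R_I(z)[A,L]R_J(z)=R_I(z)A\Lambda_J-\Lambda_I AR_J(z)$. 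Integrating over $\gamma$: the first term contributes $\bigl(\tfrac{1}{2\pi\ii}\oint_\gamma R_I(z)\dif z\bigr)A\Lambda_J=0$, because $R_I(\cdot)$ is holomorphic on $\CC\setminus(\overline I\cup\{0\})$ (as noted before the theorem) and $\gamma$ encircles neither $\overline I$ nor $0$ (see \cref{fig:contour}); the second contributes $-\Lambda_I A\bigl(\tfrac{1}{2\pi\ii}\oint_\gamma R_J(z)\dif z\bigr)=\Lambda_I A\Lambda_J$ by the Riesz functional calculus for $L\chi_J(L)$, whose spectrum inside $\gamma$ is exactly $\sigma(L|_{\im\Lambda_J})\subseteq\overline J$. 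This proves \cref{eq:operator integral on laughlin commutator}. The crucial point — exactly the one that \cref{ex:counterexample} flags as delicate — is that on $\gamma$ the \emph{interval} resolvents satisfy $\norm{R_I(z)}\leq 1/\dist(I,\gamma)$ and $\norm{R_J(z)}\leq 1/\dist(J,\gamma)$ uniformly, so $z\mapsto R_I(z)[A,L]R_J(z)$ is a norm-continuous map from the compact contour into the \emph{closed} ideal $\calK$, whence by the elementary facts collected in \cref{sec:operator integrals} its integral — a norm limit of Riemann sums lying in $\calK$ — again lies in $\calK$. Thus $\Lambda_I A\Lambda_J\in\calK$.

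\emph{The converse.} Assume $\Lambda_I A\Lambda_J\in\calK$ whenever $\dist(I,J)>0$. For each $n$ I would partition $\bS^1$ into arcs $J_1,\dots,J_n$ of length $\asymp 1/n$ whose common endpoints avoid the (at most countable) point spectrum of $L$, set $\Lambda_k:=\chi_{J_k}(L)$ so that $\sum_k\Lambda_k=\Id$, and split $[A,L]=\sum_{k,l}\Lambda_k[A,L]\Lambda_l=F_n+D_n+N_n$, where $F_n$ collects the pairs at cyclic distance $\geq 2$, $D_n$ the diagonal $k=l$, and $N_n$ the cyclically adjacent pairs. Because $L$ commutes with each $\Lambda_k$ one has $\Lambda_k[A,L]\Lambda_l=[\Lambda_k A\Lambda_l,L]$, so $F_n$ is a finite sum of commutators of $L$ with the operators $\Lambda_k A\Lambda_l$, each of which is compact by hypothesis since then $\dist(J_k,J_l)>0$; hence $F_n\in\calK$. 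For $D_n$ and $N_n$ I would write $\Lambda_k L\Lambda_k=\zeta_k\Lambda_k+E_k$ with $\zeta_k\in J_k$ and $\norm{E_k}\leq\diam(\overline{J_k})=O(1/n)$; substituting, the $\zeta_k$-terms cancel on the diagonal, leaving $\Lambda_k[A,L]\Lambda_k=\Lambda_k A\Lambda_k E_k-E_k\Lambda_k A\Lambda_k$ of norm $O(\norm{A}/n)$, while $\Lambda_k[A,L]\Lambda_{k\pm1}=(\zeta_{k\pm1}-\zeta_k)\Lambda_k A\Lambda_{k\pm1}+O(\norm{A}/n)$, again of norm $O(\norm{A}/n)$. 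Since for distinct $k$ the operators $\Lambda_k(\,\cdot\,)\Lambda_k$ act between mutually orthogonal subspaces — and likewise once the $l=k+1$ family is separated from the $l=k-1$ family in $N_n$ — each such block sum has norm equal to the supremum of the norms of its summands, so $\norm{D_n},\norm{N_n}=O(\norm{A}/n)$. Therefore $\norm{[A,L]-F_n}=\norm{D_n+N_n}\to 0$ with $F_n\in\calK$, and since $\calK$ is norm-closed, $[A,L]\in\calK$.

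\emph{Anticipated obstacle.} The forward direction is essentially routine once one commits to using the bounded interval resolvents $R_I,R_J$ in place of the genuine resolvent of $L$ — that is precisely what makes the contour integral converge in norm and so land inside $\calK$, and it is exactly the pitfall the failed heuristic and \cref{ex:counterexample} are designed to expose. In the converse the real work sits in the near-diagonal: the hypothesis says nothing about $\Lambda_k A\Lambda_l$ for equal or adjacent arcs, and a naive triangle inequality over the $\sim n$ such blocks would only give an $O(1)$ bound; the remedy is to route the countably many eigenvalues of $L$ off the partition endpoints (so the arcs genuinely partition $\sigma(L)$) and to use block-orthogonality, which pushes the sum of $O(n)$ terms of size $O(1/n)$ back down to $O(1/n)$. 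Carrying out those two bookkeeping points carefully is where the attention goes.
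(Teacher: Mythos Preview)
Your proposal is correct, and the converse direction matches the paper's structure closely; the forward direction takes a genuinely different (and cleaner) route.

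\textbf{Forward direction.} The paper proves \cref{eq:operator integral on laughlin commutator} by expanding matrix elements via the spectral measure of $L$, rewriting as a trace, invoking Fubini across three integrals, and computing a residue by hand. Your resolvent identity $R_I(z)[A,L]R_J(z)=R_I(z)A\Lambda_J-\Lambda_I AR_J(z)$ (from $R_I(z)L=\Lambda_I+zR_I(z)$ and its mirror) reduces the evaluation to two one-variable contour integrals handled by holomorphy and the Riesz projection formula. This bypasses the spectral-measure bookkeeping entirely. One minor point: you say $\gamma$ avoids $0$, but you do not actually need this --- the factor $\Lambda_I$ (resp.\ $\Lambda_J$) in front of the interval resolvent kills the $(-z)^{-1}$ pole on $\im\Lambda_I^\perp$, so $R_I$ extends holomorphically across $0$; the conclusion is unaffected either way. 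Compactness of the integral then follows exactly as you say (and as in \cref{prop:double operator integrals over compact operators}).

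\textbf{Converse direction.} The paper separates the argument into three steps: (1) far-apart arcs, (2) touching arcs, where they prove $\Lambda_I[A,L]\Lambda_J\in\calK$ by a dyadic exhaustion argument, and (3) the diagonal, bounded via \cref{lem:bound on difference of spectral projections}. You collapse (2) and (3): rather than first showing adjacent blocks are \emph{compact}, you bound their \emph{norm} by $O(\norm{A}/n)$ using the same $\norm{L\Lambda_k-\zeta_k\Lambda_k}\leq\diam(J_k)$ estimate, then use block-orthogonality to control the whole near-diagonal sum. This is a legitimate shortcut; the paper's Step~2 is logically stronger than needed for the theorem but yields an independently interesting intermediate statement. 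Your remark about placing partition endpoints off the point spectrum is harmless but unnecessary --- half-open arcs (as the paper uses) already give a genuine partition of $\bS^1$.
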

    \begin{proof}        
        
        Let us assume that $[A,L]$ is compact and establish \cref{eq:abstract-L-locality} for two intervals $I,J$ with strictly positive distance between them. For the integral in \cref{eq:operator integral on laughlin commutator}, the integrability condition \cref{eq:integralbility condition} is satisfied. Indeed, we have
        \eq{
            \int_0^1 \norm{R_I(z)}\norm{R_J(z)} |\gamma'(t)|\dif t \leq \frac{\int_0^1|\gamma'(t)|\dif t}{\dist(I,\gamma)\dist(J,\gamma)}<\infty \,.
        }
        Since $[A,L]\in\calK$, it follows from \cref{prop:double operator integrals over compact operators} that the LHS of \cref{eq:operator integral on laughlin commutator} is compact.

        Let us now establish the equality within \cref{eq:operator integral on laughlin commutator}. We have
        \eq{
            \braket{\psi, \oint_\gamma R_I(z)[A,L]R_J(z)\dif{z} \vf} &\equiv \oint_\gamma \braket{\psi,R_I(z)[A,L]R_J(z)\vf}\dif{z} \\
            &= \oint_\gamma \tr \vf\otimes \psi^\ast R_I(z)[A,L]R_J(z) \dif{z} \\
            &= \oint_\gamma \tr A [L,R_J(z)\vf\otimes\psi^\ast R_I(z)]\dif{z} \tag{cyclicity}\\
            &= \tr A \oint_\gamma [L,R_J(z)\vf\otimes\psi^\ast R_I(z)] \dif{z}\,.
        }
        In the last equality, we used \cref{prop:properties of weak integrals} to interchange trace and integral. We now evaluate the contour integral. Let $E$ be the spectral measure of $L$. We have
        \eq{
            \braket{\eta,\oint_\gamma LR_J(z)\vf\otimes\psi^\ast R_I(z) \dif{z}\xi} &\equiv \oint \braket{\eta, LR_J(z)\vf\otimes\psi^\ast R_I(z)\xi} \dif{z} \\
            &= \oint_\gamma \braket{\eta,LR_J(z)\vf}\braket{\psi,R_I(z)\xi}  \dif{z} \\
            &= \oint_\gamma \int_{J}\frac{\mu}{\mu-z}\dif{E_{L,\eta,\vf}(\mu)} \int_{I}\frac{1}{\lambda-z}\dif{E_{L,\psi,\xi}(\mu)}\,.
        }
        The integrals here involve only finite measures and bounded functions. Thus Fubini's theorem applies and we can interchange the integrals freely. To that end, we first compute
        \eq{
        \int_\gamma \frac{1}{\mu-z} \frac{1}{\lambda-z}\dif z = 2\pi \ii \frac{1}{\mu-\lambda}
        }
        using Cauchy's integral formula, and we get
        \eq{
        2\pi \ii \int_{I} \left(\int_{J} \frac{\mu}{\mu-\lambda} \dif{E_{L,\eta,\vf}(\mu)}\right) \dif{E_{L,\psi,\xi}(\lambda)} \,.
        }
        In a similar fashion, we have
        \eq{
            \braket{\eta,\oint R_J(z)\vf\otimes\psi^\ast R_I(z)L \dif{z}\xi} = 2\pi \ii \int_{I}\left(\int_{J} \frac{\lambda}{\mu-\lambda}  \dif{E_{L,\eta,\vf}(\mu)}\right) \dif{E_{L,\psi,\xi}(\lambda)} \,.
        }
        Therefore
        \eq{
            \braket{\eta,\oint [L,R_J(z)\vf\otimes\psi^\ast R_I(z)] \dif{z}\xi} &= 2\pi \ii \int_{I}\left(\int_{J}  \dif{E_{L,\eta,\vf}(\mu)}\right) \dif{E_{L,\psi,\xi}(\lambda)} \\
            &= 2\pi \ii \braket{\eta,E(J) \vf}\braket{\psi,E(I)\xi} \\
            &= 2\pi \ii \braket{\eta, \Lambda_J \vf\otimes \psi^\ast \Lambda_I \xi}\,.
        }
        Then
        \eq{
            \tr A \oint [L,R_J(z)\vf\otimes\psi^\ast R_I(z)] \dif{z} = 2\pi \ii \tr A \Lambda_J \vf\otimes \psi^\ast \Lambda_I = 2\pi \ii \braket{\psi, \Lambda_IA\Lambda_J\vf}
        }
        and since $\vf,\psi$ were arbitrary, this concludes one direction of the proof.

        We now prove the converse statement, namely, given an operator $A$, we want to establish that $[A,L]\in\calK$ if all operators $\Lambda_I A \Lambda_J$ are compact, with $I,J$ having strictly positive distance. The proof consists of three steps: 
        \begin{itemize}
            \item \emph{Step 1}: We will first show that $\Lambda_I[A,L]\Lambda_J$ is compact for any two intervals $I,J\subseteq\bS^1$ with $\dist(I,J)>0$.
            \item \emph{Step 2}: In the second step, we will show that $\Lambda_I[A,L]\Lambda_J$ is compact even when the closures of $I,J$ intersect at a point. In this case, even though $\Lambda_IA\Lambda_J$ fails to be compact, with the help of the commutator with $L$, the expression $\Lambda_I[A,L]\Lambda_J$ recovers compactness.
            \item \emph{Step 3}: The third step is to show that $[A,L]$ is indeed compact.
        \end{itemize}  
        
        \paragraph{Step 1}
        First of all, we have 
        \eq{
        \Lambda_I[A,L]\Lambda_J = [\Lambda_IA\Lambda_J,L]\in\calK
        } 
        for any two intervals $I,J\subseteq\bS^1$ such that $\dist(I,J)>0$ by assumption. 
        
        \paragraph{Step 2} 
        Let $I,J\subseteq \bS^1$ be two intervals whose closure intersects at a point $z_0\in\bS^1$. We shall establish that $\Lambda_I[A,L]\Lambda_J\in\calK$. To that end, consider the dyadic partitions of the interval $I$, where we divide $I$ into $2, 4, 8,\dots$ sub-intervals of equal length. Let $I_n$ be the subinterval at the $n$-th partition whose closure contains the point $z_0$. We construct the sequence $J_n$ of subintervals out of $J$ similarly. We observe that 
        \eq{
        \Lambda_I[A,L]\Lambda_J - \Lambda_{I_n}[A,L]\Lambda_{J_n} =  \Lambda_I[A,L]\Lambda_{J\setminus J_n} +  \Lambda_{I\setminus I_n}[A,L]\Lambda_J +  \Lambda_{I\setminus I_n}[A,L]\Lambda_{J\setminus J_n}
        }
        is a compact operator by assumption, since $\dist(I,J\setminus J_n),\dist(I\setminus I_n,J),\dist(I\setminus I_n,J\setminus J_n)$ are all strictly positive. Therefore, to show that $\Lambda_I[A,L]\Lambda_J$ is compact, it suffices to show that $\norm{\Lambda_{I_n}[A,L]\Lambda_{J_n}}\to 0$ using the fact the compact operators form a closed set. To this end, we consider
        \eq{
        \norm{\Lambda_{I_n}[A,L]\Lambda_{J_n}} &\leq \norm{\Lambda_{I_n}AL\Lambda_{J_n} - z_0 \Lambda_{I_n}A\Lambda_{J_n}} + \norm{z_0\Lambda_{I_n}A\Lambda_{J_n} - \Lambda_{I_n}LA\Lambda_{J_n}} \\
        &\leq \norm{A}\norm{L\Lambda_{J_n} - z_0 \Lambda_{J_n}} +  \norm{z_0\Lambda_{I_n} - \Lambda_{I_n}L}\norm{A} \\
        &\leq 2\max(\diam(I_n),\diam(J_n))\norm{A} \xrightarrow{ n \to \infty } 0
        }
        where we have used \cref{lem:bound on difference of spectral projections} right below for the last inequality.
        
        \paragraph{Step 3}
        Finally, we will show that $[A,L]$ is compact. Consider the dyadic partition of $\bS^1$, where at $n$-th partition, there are subintervals $I^n_k$ indexed by $1\leq k\leq 2^n$. In particular, we construct each $I^n_k$ to have one open and one closed end so that the collection $\{\Lambda_{I^n_k}\}_{k=1}^{2^n}$ consists of pairwise orthogonal projections. For any two disjoint subintervals $I^n_j,I^n_k$ for $j\neq k$ within each level ($n\geq 2$), either their closure contains a point, or they have a finite distance. In either case, we have $\Lambda_{I^n_j}[A,L]\Lambda_{I^n_k}\in\calK$. Let us define 
        \eq{
            T_n:=\sum_{k=1}^{2^n}\Lambda_{I^n_k}[A,L]\Lambda_{I^n_k}\,.
        } 
        We have 
        \eq{
            [A,L]-T_n& = \sum_{j=1}^{2^n} \Lambda_{I^n_j} [A,L] \sum_{k=1}^{2^n} \Lambda_{I^n_k} - \sum_{k=1}^{2^n}\Lambda_{I^n_k}[A,L]\Lambda_{I^n_k} \\
            &= \sum_{j\neq k}^{2^n}  \Lambda_{I^n_j} [A,L] \Lambda_{I^n_k} \in \calK
        }
        Therefore, to show that $[A,L]\in\calK$, it suffices to show that $\norm{T_n}\to 0$. Since the collection $\{\Lambda_{I^n_k}\}_{k=1}^{2^n}$ consists of pairwise orthogonal projections, we have
        \eq{
            \norm{T_n}=\max_{1\leq k\leq 2^n}\norm{\Lambda_{I^n_k}[A,L]\Lambda_{I^n_k}}\,.
        }
        We will now run an argument similar to the previous case. Let $z_0$ be any point in the interval $I^n_k$. Then
        \eq{
            \norm{\Lambda_{I^n_k}[A,L]\Lambda_{I^n_k}} & \leq \norm{\Lambda_{I^n_k}AL\Lambda_{I^n_k}-z_0\Lambda_{I^n_k}A\Lambda_{I^n_k}} + \norm{z_0\Lambda_{I^n_k}A\Lambda_{I^n_k}-\Lambda_{I^n_k}LA\Lambda_{I^n_k}} \\
            &\leq \norm{A}\norm{L\Lambda_{I^n_k}-z_0\Lambda_{I^n_k}} + \norm{z_0\Lambda_{I^n_k}-\Lambda_{I^n_k}L}\norm{A} \\
            &\leq 2\diam(I^n_k)\norm{A}
        }
        where we have used \cref{lem:bound on difference of spectral projections} on the last inequality again. Since $\diam(I^n_j)=\diam(I^n_k)$ for all $j,k$, it follows that
        \eq{
        \norm{T_n}\leq \max_{1\leq k\leq 2^n}2\diam(I^n_k)\norm{A}=2\diam(I^n_k)\norm{A}\to 0\,.
        }
        This completes the proof.

    \end{proof}

    \begin{lem}\label{lem:bound on difference of spectral projections}
    Let $A$ be a normal operator. Let $z\in\CC$ be a point and $S\subseteq \sigma(A)$ a subset such that $z$ lies in the closure $S$. Then \eq{\norm{A\chi_{S}(A)-z\chi_{S}(A)}\leq \operatorname{diam}(S)\,.}
    \end{lem}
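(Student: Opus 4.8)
The plan is to reduce everything to the continuous (or bounded Borel) functional calculus for the normal operator $A$ and then to a pointwise estimate on the spectrum. Write
\eq{
    A\chi_S(A)-z\chi_S(A) = (A-z\Id)\chi_S(A) = f(A)\,,\qquad f(\lambda):=(\lambda-z)\chi_S(\lambda)\,,
}
so that $f$ is a bounded Borel function on $\sigma(A)$ and the spectral theorem gives $\norm{f(A)} = \norm{f}_{L^\infty(E)}$, where $E$ is the spectral measure of $A$. Since $\norm{f}_{L^\infty(E)}\leq \sup_{\lambda\in\sigma(A)}|f(\lambda)| = \sup_{\lambda\in S}|\lambda - z|$, it suffices to show $\sup_{\lambda\in S}|\lambda-z|\leq \diam(S)$.

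For the latter, fix $\lambda\in S$ and let $\ve>0$. Because $z$ lies in the closure of $S$, there exists $w\in S$ with $|w-z|<\ve$. Then the triangle inequality gives $|\lambda-z|\leq |\lambda-w| + |w-z| \leq \diam(S) + \ve$. Letting $\ve\downarrow 0$ yields $|\lambda-z|\leq \diam(S)$, and taking the supremum over $\lambda\in S$ finishes the argument.

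I expect no real obstacle here: the statement is essentially the observation that multiplication by $\lambda-z$ is norm-bounded by $\diam(S)$ on the spectral subspace $\chi_S(A)\calH$. The only point requiring a line of care is that $z$ need not lie in $S$ itself, only in its closure, which is handled by the $\ve$-approximation above; everything else is the standard norm identity $\norm{f(A)}=\norm{f}_{L^\infty(E)}$ for normal $A$, for which one may simply cite the spectral theorem.
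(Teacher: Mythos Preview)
Your proof is correct and follows exactly the same line as the paper's: apply the functional calculus to bound the norm by $\sup_{\lambda\in S}|\lambda-z|$, then use that $z\in\overline{S}$ to conclude this supremum is at most $\diam(S)$. Your $\ve$-approximation for the last step is in fact slightly more explicit than the paper's one-line version, which simply writes the chain of inequalities without further comment.
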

    \begin{proof}
    It follows from
    \eq{
        \norm{A\chi_{S}(A)-z\chi_{S}(A)}\leq \sup_{\lambda\in\sigma(A)}|\lambda\chi_{S}(\lambda)-z\chi_{S}(\lambda)| =\sup_{\lambda\in S}|\lambda-z|\leq \operatorname{diam}(S)\,.
    }
    \end{proof}
    In conclusion, even though this characterization of $L$-locality only calls for the spectral projections, there is a \emph{shadow} of the $\ZZ^2$ geometry in the constraint about distances between intervals.
    \section{$L$-locality and infinitely many non-trivial projections}\label{sec:other forms of locality}
    Now that we have established an abstract characterization of $L$-locality via its spectral projections, we want to contrast it with other forms of locality, also defined via an infinite number of orthogonal projections, and ask whether these different forms are equivalent.
    
    Let us set up some terminology. We shall use the standard
    \begin{defn}[non-trivial projection]
        An orthogonal projection $P\in\calB$ is termed non-trivial iff both $\im P,\ker P$ are infinite dimensional subspaces. 
    \end{defn}

    Let us assume WLOG (via \cref{rem:spectral type of L}) that $L$ has a pure-point spectrum with each eigenvalue infinitely degenerate. Then its orthogonal eigenbasis spans the Hilbert space and the collection of projections onto each eigenvalue is a decomposition of the identity. Since each eigenvalue is infinitely degenerate, the associated projection is in fact non-trivial. Motivated by this, we have the abstract
    \begin{defn}[non-trivial decomposition of $\Id$]\label{defn:non-trivial decomposition of identity} Let $\Set{\Lambda_j}_{j\in \calJ}$ be a set of orthogonal projections indexed by a (finite or infinite) countable set $\calJ$ such that: \begin{enumerate}
        \item For each $j\in\calJ$, the projection $\Lambda_j$ is non-trivial.
        \item The projections are pairwise orthogonal: $\Lambda_j \Lambda_j = 0$ if $j\neq k$.
        \item The projections add up to the identity \eql{\Id = \sum_{j\in\calJ}\Lambda_j\,.}
    \end{enumerate}
    \end{defn}

	Given a fixed  decomposition of $\Id$, $\Set{\Lambda_j}_{j\in\calJ}$, and an operator $A$, we shall throughout use the shorthand $A_j$ for the operator restricted to the $j$th block, i.e., \eq{A_j\equiv\Lambda_j A \Lambda_j : \im(\Lambda_j)\to\im(\Lambda_j)\,.}
 
    \subsection{Other plausible notions of locality in two-dimensions}
    Motivated by the above abstract characterization, it is tempting to ask whether there are other, simpler characterizations of $L$-locality, and if so, are they equivalent, stronger or weaker. In the process of searching for such, we have come up with various strictly different notions.

    In the sequel, we let $\calJ$ be a countable indexed set and let $\Set{\Lambda_j}_{j\in\calJ}$ be a fixed abstract non-trivial decomposition of $\Id$. When $\calJ$ is countable, then it is bijective with the set of points on the circle with rational angles, $\bS^1_\QQ\cong[0,2\pi)\cap\QQ$. In that case, we may furnish $\calJ$ with a notion of \emph{distance} by pulling back the distance from the circle under this bijection. We shall use that notion in the sequel. Moreover, this bijection also allows one to refer to an "interval" within $\calJ$.  Let us introduce the "super" operator $\bbDelta:\calB\to\calB$ given by \eql{ A \mapsto \bbDelta(A) := \sum_{j\in\calJ}\Lambda_j A \Lambda_j\,.}
    It is well-known that $\bbDelta$ is continuous and maps compact operators to compact operators.

    \begin{defn}[alternative locality]
    An operator $A\in\calB$ is termed:
    \begin{itemize}
    \item \emph{$\Lambda$-local of Type I} iff $[A,\Lambda_j]\in\calK$ for all $j\in\calJ$.
    \item \emph{$\Lambda$-local of Type II} iff $A-\bbDelta(A)\in\calK$.
    \item \emph{$\Lambda$-local of Type III} iff $[A,\Lambda_S]\in\calK$ for all $S\subseteq\calJ$, where $\Lambda_S := \sum_{j\in S}\Lambda_j$.
    \item \emph{$\Lambda$-local of Type IV} iff $\Lambda_I A \Lambda_J\in\calK$ for all intervals $I,J\subseteq\calJ$ with $\dist(I,J)>0$ where $\Lambda_I:=\sum_{j\in I}\Lambda_j$.
    \item \emph{$\Lambda$-local of Type V} iff $\Lambda_I A \Lambda_J\in\calK$ for all intervals $I,J\subseteq\calJ$ that are disjoint, or equivalently, iff $[A,\Lambda_J]\in\calK$ for all intervals $J\subseteq \calJ$. 
    \end{itemize}
    We will denote the space of operators obeying this condition $\calL_{\sharp}$ with $\sharp$ the type, omitting $\Lambda$ from the notation since it is fixed throughout. We will write $\calL$ when $\sharp$ is clear from the context.
    \end{defn}

    \cref{fig:flow chart localities} shows some of the implications between various types of localities, and we prove them in
    \begin{itemize}
    \item $\boxed{\mathrm{II}}\Rightarrow\boxed{\mathrm{III}}$ and $\boxed{\mathrm{II}}\Rightarrow\boxed{\mathrm{V}}$: Let $A\in\calL_\mathrm{II}$ so $A-\bbDelta A =: K \in \calK$ and $S\subseteq\calJ$ be any set, then $[A,\Lambda_S]=[\bbDelta A+K,\Lambda_S]=[K,\Lambda_S]\in\calK$. Thus $A\in\calL_\mathrm{III}$. Let $I,J\subseteq\calJ$ be two disjoint intervals. Then $\Lambda_I(\bbDelta A+K)\Lambda_J=\Lambda_IK\Lambda_J\in\calK$. Thus $A\in\calL_\mathrm{V}$.
    \item $\boxed{\mathrm{III}}\Rightarrow\boxed{\mathrm{I}}$ and $\boxed{\mathrm{V}}\Rightarrow\boxed{\mathrm{IV}}$: These are clear from the definition.
    \item $\boxed{\mathrm{III}}\Rightarrow\boxed{\mathrm{V}}$: Let $I,J$ be two disjoint intervals. Since $I\subseteq J^c$, we have $\Lambda_I\Lambda_J^\perp=\Lambda_I$. Let $A\in\calL_\mathrm{III}$, then we have $[A,\Lambda_J]\in\calK$. In particular, $\Lambda_I\Lambda_J^\perp[A,\Lambda_J]=\Lambda_I\Lambda_J^\perp A\Lambda_J=\Lambda_IA\Lambda_J\in\calK$. Thus $A\in\calL_\mathrm{V}$.
    \end{itemize}

    \begin{rem}[Distinction between $\calL_\mathrm{I}$ and $\calL_\mathrm{II}$]\label{rem:difference between LI and LII}
    The commutator $[A,\Lambda_j]$ is compact iff $\Lambda_j^\perp A\Lambda_j$ and $\Lambda_j A\Lambda_j^\perp$ are both comapct. Indeed, if $[A,\Lambda_j]\in\calK$, then $\Lambda_j^\perp[A,\Lambda_j]=\Lambda_j^\perp A\Lambda_j$ and $-[A,\Lambda_j]\Lambda_j^\perp = \Lambda_jA\Lambda_j^\perp$ are compact; conversely, we have $[A,\Lambda_j]=(\Lambda_j+\Lambda_j^\perp)[A,\Lambda_j] = \Lambda_jA\Lambda_j-\Lambda_jA + \Lambda_j^\perp A\Lambda_j = -\Lambda_jA\Lambda_j^\perp+\Lambda_j^\perp A\Lambda_j$.

    On the other hand, for $A\in\calL_\mathrm{II}$, it says that $A-\bbDelta A \in\calK$ and we have $A-\bbDelta A = \sum_{j\in\calJ}\Lambda_j A\sum_{j\in\calJ}\Lambda_j - \sum_{j\in\calJ}\Lambda_jA\Lambda_j=\sum_{j\neq k}\Lambda_jA\Lambda_j$. In particular \eq{
        \sum_{j\in\calJ}\Lambda_jA\Lambda_j^\perp = \sum_{j\in\calJ}\Lambda_jA\sum_{k\neq j}\Lambda_k = A-\bbDelta A  = \sum_{j\in\calJ}\sum_{k\neq j}\Lambda_kA\Lambda_j = \sum_{j\in\calJ}\Lambda_j^\perp A\Lambda_j\,.
    }
    Therefore, for $\calL_\mathrm{I}$, each infinite block $\Lambda_jA\Lambda_j^\perp$ and $\Lambda_j^\perp A\Lambda_j$ is compact separately, and for $\calL_\mathrm{II}$, the operator comprised of infinitely many infinite blocks is compact. Moreover, if there are only finitely many blocks, then of course the two notions coincide.

    \end{rem}

    \begin{lem}[C-star algebraic structure]\label{lem:C-star algebraic structure}
        Each locality space defined above is furnished with the structure of a C-star algebra.
    \end{lem}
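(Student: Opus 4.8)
The plan is to verify that each $\calL_\sharp$ is a norm-closed, $\ast$-closed, unital subalgebra of $\calB=\calB(\calH)$; since $\calB(\calH)$ is a $C^\ast$-algebra, any such subalgebra is automatically a $C^\ast$-algebra for the inherited norm and involution. The verification separates into the ``formal'' axioms — which hold uniformly for all five types — and closure under multiplication, which is the only point requiring the specific structure of each type.

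For the formal part, observe that in every case the defining condition reads ``$\Phi(A)\in\calK$'' for a norm-continuous linear map $\Phi$, namely $A\mapsto[A,\Lambda_j]$ (Type I), $A\mapsto[A,\Lambda_S]$ (Type III), $A\mapsto[A,\Lambda_J]$ (Type V), $A\mapsto A-\bbDelta(A)$ (Type II), or the family $A\mapsto\Lambda_I A\Lambda_J$ (Type IV). Hence $\calL_\sharp$ is a linear subspace, and it is norm-closed because $\calK$ is. It is $\ast$-closed: for Types I, III, V because the $\Lambda$'s are self-adjoint, so $[A^\ast,\Lambda]=-[A,\Lambda]^\ast\in\calK$; for Type II because $\bbDelta(A^\ast)=\bbDelta(A)^\ast$; for Type IV because $(\Lambda_J A^\ast\Lambda_I)^\ast=\Lambda_I A\Lambda_J$ together with $\dist(I,J)=\dist(J,I)$. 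Finally $\Id$ satisfies all five conditions, so each $\calL_\sharp$ is unital.

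Closure under multiplication for Types I, III, V is the Leibniz identity $[AB,P]=A[B,P]+[A,P]B$ together with the fact that $\calK$ is a two-sided ideal; equivalently, each of these three spaces is the preimage under the quotient $\ast$-homomorphism $\pi\colon\calB\to\calB/\calK$ of the relative commutant in the Calkin algebra of the self-adjoint set $\{\pi(\Lambda_j)\}_j$, $\{\pi(\Lambda_S)\}_S$, resp.\ $\{\pi(\Lambda_J)\}_J$, and the commutant of a self-adjoint subset of a $C^\ast$-algebra is a $C^\ast$-subalgebra, whose $\pi$-preimage is a $C^\ast$-subalgebra of $\calB$. For Type II, write $A-\bbDelta A=K$ and $B-\bbDelta B=K'$ with $K,K'\in\calK$; then $AB=\bbDelta A\,\bbDelta B$ modulo $\calK$, and since $\bbDelta A$ and $\bbDelta B$ commute with every $\Lambda_k$ and that commutant $\{\Lambda_k\}'$ is an algebra, the product $\bbDelta A\,\bbDelta B$ is again block-diagonal, hence fixed by $\bbDelta$; combined with $\bbDelta(\calK)\subseteq\calK$ this gives $\bbDelta(AB)=AB$ modulo $\calK$, so $AB\in\calL_{\mathrm{II}}$. (Equivalently, $\calL_{\mathrm{II}}=\pi^{-1}\!\big(\pi(\{\Lambda_k\}')\big)$, the preimage under $\pi$ of the image of the von Neumann algebra $\{\Lambda_k\}'$.)

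The one genuinely ``two-dimensional'' step is closure under multiplication for Type IV, and this is where I expect the only real work, carried out by a buffer-interval argument (the same device as in the proof of \cref{thm:abstract-L-locality}). Given intervals $I,J\subseteq\calJ$ with $\dist(I,J)=3\ve>0$, put $M:=\{j\in\calJ:\dist(j,I)\ge\ve\}$; because $\ve$-fattenings and complements of arcs of $\bS^1$ are again arcs, both $M$ and $M^c=\{j\in\calJ:\dist(j,I)<\ve\}$ are intervals in $\calJ$ (non-degenerate, since $\dist(I,J)>0$ forces $J\subseteq M$ and $I\subseteq M^c$), and the triangle inequality on the circle gives $\dist(I,M)\ge\ve>0$ and $\dist(M^c,J)\ge 2\ve>0$. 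Inserting $\Id=\Lambda_M+\Lambda_{M^c}$ between $A$ and $B$,
\[
\Lambda_I AB\,\Lambda_J=(\Lambda_I A\Lambda_M)(\Lambda_M B\Lambda_J)+(\Lambda_I A\Lambda_{M^c})(\Lambda_{M^c}B\Lambda_J),
\]
and since $A,B\in\calL_{\mathrm{IV}}$ the factors $\Lambda_I A\Lambda_M$ and $\Lambda_{M^c}B\Lambda_J$ are compact, so the right-hand side lies in $\calK$; hence $AB\in\calL_{\mathrm{IV}}$. (Alternatively, one may note $\calL_{\mathrm{IV}}=\{A:[A,L]\in\calK\}$ for the unitary $L:=\sum_j z_j\Lambda_j$ with $z_j$ the rational-angle point attached to $j$, via \cref{thm:abstract-L-locality}, and then argue as for Types I/III/V using $[A,L]\in\calK\Leftrightarrow[A,L^{-1}]\in\calK$.) Apart from this circular-geometry bookkeeping, the whole lemma is the standard algebra of commutators modulo compacts.
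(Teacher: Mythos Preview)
Your proof is correct and, for Types I, III, V and II, essentially coincides with the paper's (Leibniz rule for the commutator types; the decomposition $A=\bbDelta A+K_A$, $B=\bbDelta B+K_B$ together with $\bbDelta(\calK)\subseteq\calK$ for Type II). You are also more thorough than the paper in spelling out $\ast$-closure and unitality, which the paper leaves implicit.

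The one genuine divergence is Type IV. The paper's proof simply invokes \cref{thm:abstract-L-locality} to rewrite the Type-IV condition as $[A,L]\in\calK$ for a suitable unitary $L$, and then recycles the commutator argument from Types I/III/V. Your primary route is instead a direct buffer-interval decomposition: insert $\Id=\Lambda_M+\Lambda_{M^c}$ with $M$ the complement of an $\ve$-fattening of $I$, so that $\dist(I,M)>0$ and $\dist(M^c,J)>0$ simultaneously, and each summand in $\Lambda_I AB\Lambda_J$ has a compact factor. This is a nice self-contained argument that makes the lemma logically independent of \cref{thm:abstract-L-locality} and its contour-integral machinery; the paper's route is shorter but imports a much heavier result. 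You do mention the paper's approach as your parenthetical alternative, so the two proofs ultimately contain each other.
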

    \begin{proof}
        Let $\sharp\in\Set{\mathrm{I,III,V}}$ the locality conditions are of the form $[A,\Lambda_S]\in\calK$ for some $S\subseteq \calJ$. If $A,B\in\calL_\sharp$, then $[AB,\Lambda_S]=A[B,\Lambda_S]+[A,\Lambda_S]B\in\calK$. If $A_n\to A$ in operator norm and $A_n\in\calL_\sharp$, then $\calK\ni [A_n,\lambda_S]\to [A,\Lambda_S]$. Since operator norm limit of compact operators is compact, it follows that $[A,\Lambda_S]\in\calK$.

        For the space $\calL_{\mathrm{IV}}$, we use \cref{thm:abstract-L-locality} and the locality condition is equivalent to $[A,L]\in\calK$ for some unitary operator $L$. The arguments in the previous paragraph work in this case.

        Consider the space ${\calL_{\mathrm{II}}}$. If $A,B\in{\calL_{\mathrm{II}}}$, write $A=\bbDelta A+K_A$ and $B=\bbDelta B + K_B$ where $K_A,K_B\in\calK$, then $AB-\bbDelta(AB)=\bbDelta A K_B + K_A\bbDelta B + K_AK_B -\bbDelta(\bbDelta A K_B + K_A\bbDelta B + K_AK_B)\in\calK$ since $\bbDelta$ preserves compactness. If $A_n\in\calL_{\mathrm{III}}$ and $A_n\to A$ in norm, then $A_n-\bbDelta A_n\to A-\bbDelta A$ in norm, using the continuity of $\bbDelta$. Thus $A\in{\calL_{\mathrm{II}}}$.

        It follows that each locality space is a C-star algebra.
    \end{proof}
        
    \begin{figure}[t]
    \centering
    \begin{tikzcd}[color=black]
        \boxed{\mathrm{II}} \arrow[r,Rightarrow] \arrow[rd,Rightarrow] & \boxed{\mathrm{III}} \arrow[r,Rightarrow] \arrow[d,Rightarrow] & \boxed{\mathrm{I}}\\
        & \boxed{\mathrm{V}} \arrow[r,Rightarrow] & \boxed{\mathrm{IV}}
    \end{tikzcd}
    \caption{Implications of various types of locality.}
    \label{fig:flow chart localities}
    \end{figure}
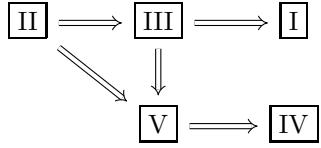

    It should be noted that when $\calJ$ is finite, when it makes sense (which is for Type-I/II/III), there is no difference between the various types.
    
    \begin{rem}[Relevance to two-dimensional locality in physics and the term quasi-2D-locality]
        Since locality with Type IV is equivalent to $L$-locality via \cref{thm:abstract-L-locality}, it would have been plausible that any other Type could have just as well been used to characterize locality in 2D, if we take $\Lambda_j$'s to be the spectral projections associated with the particular choice of Laughlin's $L$ in \cref{eq:Laughlin flux operator}. It will turn out, however, that calculating $\pi_0$ of Type-I and Type-II local operators starkly deviates from what we expect of Type-IV (corresponding to the Kitaev table) and as such we term such locality \emph{quasi-2D}.
    \end{rem}

    We shall only study $\pi_0$ of Type-I and Type-II local operators in the sequel. It will be useful to extract some common features of both before tackling this.
    
    Define $\calD\subseteq \calB$ to be the space of operators that commute with $\Lambda_j$ for all $j\in\calJ$. We can also describe Type-I/II local operators alternatively as:
    \begin{prop}\label{prop:quasi compcat}
    The spaces $\calL_{\mathrm{I,II}}$ may also be characterized as: \eq{
        \calL_\mathrm{I}=\calD+\calK_\mathrm{I}
    } where $\calK_\mathrm{I}$ is the space of operators $B\in\calB$ such that both $\Lambda_jB$ and $B\Lambda_j$ are compact for all $j\in\calJ$; and \eq{
        \calL_\mathrm{II}=\calD+\calK\,.
    }
    \end{prop}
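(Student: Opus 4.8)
The plan is to prove each of the two claimed set equalities by double inclusion, and in both cases the only decomposition I need to exhibit is the canonical one $A = \bbDelta(A) + \bigl(A - \bbDelta(A)\bigr)$. So I would first dispatch an elementary observation that makes both halves work at once: for every $A\in\calB$ one has $\bbDelta(A)\in\calD$. Indeed, since the $\Lambda_j$ are pairwise orthogonal and sum strongly to $\Id$, left- or right-multiplying the strongly convergent sum $\bbDelta(A)=\sum_{j}\Lambda_j A\Lambda_j$ by $\Lambda_m$ kills every term except $j=m$, giving $\Lambda_m\bbDelta(A)=\Lambda_m A\Lambda_m=\bbDelta(A)\Lambda_m$. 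The same bookkeeping yields the two identities $\Lambda_m\bigl(A-\bbDelta(A)\bigr)=\Lambda_m A\Lambda_m^\perp$ and $\bigl(A-\bbDelta(A)\bigr)\Lambda_m=\Lambda_m^\perp A\Lambda_m$ for every $m\in\calJ$ (these are exactly the computations already appearing in \cref{rem:difference between LI and LII}), and in particular $\bbDelta(D)=D$ whenever $D\in\calD$.

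For $\calL_{\mathrm{II}}=\calD+\calK$: the inclusion $\supseteq$ holds because if $A=D+K$ with $D\in\calD$ and $K\in\calK$ then $A-\bbDelta(A)=K-\bbDelta(K)$, which is compact since $\bbDelta$ preserves compactness (invoked in the excerpt as well known); the reverse inclusion $\subseteq$ is immediate from the canonical splitting, because $A\in\calL_{\mathrm{II}}$ means precisely $A-\bbDelta(A)\in\calK$ while $\bbDelta(A)\in\calD$ by the observation above.

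For $\calL_{\mathrm{I}}=\calD+\calK_{\mathrm{I}}$: for $\supseteq$, if $A=D+B$ with $D\in\calD$ and $B\in\calK_{\mathrm{I}}$, then $[A,\Lambda_j]=[B,\Lambda_j]=B\Lambda_j-\Lambda_j B\in\calK$ directly from the definition of $\calK_{\mathrm{I}}$, so $A\in\calL_{\mathrm{I}}$. For $\subseteq$, given $A\in\calL_{\mathrm{I}}$ I would take $D:=\bbDelta(A)\in\calD$ and $B:=A-\bbDelta(A)$; by the two identities above, $\Lambda_m B=\Lambda_m A\Lambda_m^\perp$ and $B\Lambda_m=\Lambda_m^\perp A\Lambda_m$ for all $m$, and both of these are compact for every $m$ by \cref{rem:difference between LI and LII} since $[A,\Lambda_m]\in\calK$; hence $B\in\calK_{\mathrm{I}}$ and $A=D+B$.

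I do not expect a genuine obstacle here: the argument is essentially formal once the observation about $\bbDelta$ is in place. The only points requiring a little care are the strong-operator-topology manipulations that produce $\bbDelta(A)\in\calD$ and the two commutator identities, and the invocation that $\bbDelta$ is norm-continuous and compactness-preserving — both of which the paper has already recorded — together with the equivalence in \cref{rem:difference between LI and LII} relating compactness of $[A,\Lambda_j]$ to compactness of the two off-diagonal corners.
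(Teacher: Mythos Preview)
Your proposal is correct and follows essentially the same route as the paper: both use the canonical decomposition $A=\bbDelta(A)+(A-\bbDelta(A))$, the identities $\Lambda_j(A-\bbDelta(A))=\Lambda_j A\Lambda_j^\perp$ and $(A-\bbDelta(A))\Lambda_j=\Lambda_j^\perp A\Lambda_j$, and the equivalence from \cref{rem:difference between LI and LII}. You are simply a bit more explicit for the $\supseteq$ direction of $\calL_{\mathrm{II}}=\calD+\calK$ (invoking $\bbDelta(D)=D$ and compactness-preservation of $\bbDelta$), which the paper just declares ``clear from definition.''
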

    \begin{rem}[Another space consisting of sums of a diagonal plus a compact operator]
    Let $\Set{\vf_k}_{k\in \NN}$ be an ONB and $\ti\calD$ be the space of operators $D$ that are diagonalized by the basis, i.e., $D\in\ti\calD$ iff $D\vf_k=\lambda_k\vf_k$ for some $\lambda_k\in\CC$ for all $k\in\NN$. The space $\ti\calD +\calK$ is studied in detail by Andruchow, et al. \cite{andruchow2019c}. The space $\ti\calD$ coincides with $\calD$ if we have rank-one decomposition of $\Id$, i.e., the projections $\Lambda_j$ in \cref{defn:non-trivial decomposition of identity} are all rank-one projections. They show that the space of unitary operators in $\ti\calD+\calK$ is path-connected \cite[Proposition 3.6]{andruchow2019c}, and that the space of non-trivial projections in $\ti\calD+\calK$ has uncountably many path-connected components \cite[Corollary 5.10]{andruchow2019c}. In contrast, we will show that the space of $\Lambda$-nontrivial projections in $\calL_\mathrm{II}\equiv \calD+\calK$ is path-connected.
    \end{rem}
    \begin{proof}
    For $A\in\calL_\mathrm{I}$, let $D=\bbDelta(A)$ and we write $A=D+(A-D)$. Then $D\in\calD$ and $A-D\in\calK_\mathrm{I}$. Indeed, we have $(A-D)\Lambda_j = A\Lambda_j-\Lambda_jA\Lambda_j = \Lambda_j^\perp A\Lambda_j$ and $\Lambda_j(A-D) = \Lambda_j A\Lambda_j^\perp$; and $[A,\Lambda_j]\in\calK$ iff both $ \Lambda_j^\perp A\Lambda_j$ and $ \Lambda_j A\Lambda_j^\perp$ are compact. Conversely, if $D+B\in\calD+\calK_\mathrm{I}$, then $[D+B,\Lambda_j]=[B,\Lambda_j]\in\calK$. The fact that $\calL_\mathrm{II}=\calD+\calK$ is clear from definition.
    \end{proof}

    Clearly $\calK$ is an ideal in $\calB$, but also, $\calK$ is an ideal within the C-star algebra $\calD+\calK$ as one may verify. Analogously, the subspace $\calK_\mathrm{I}$ is an ideal within the C-star algebra $\calD+\calK_\mathrm{I}$. Indeed, if $D+B\in\calD+\calK_\mathrm{I}$ and $T\in\calK_\mathrm{I}$, then $(D+B)T\Lambda_j\in\calK$ and $\Lambda_j(D+B)T = D\Lambda_jBT + \Lambda_j BT\in\calK$, and similarly $T(D+B)\in\calK_\mathrm{I}$.

    For the spaces $\calK$ and $\calK_\mathrm{I}$, we have the following
    \begin{lem}\label{lem:compact operators}
    Let $\ve>0$ be given. 
    \begin{enumerate}
    \item If $K\in\calK$, then there exists $m<\infty$ and $P=\sum_{j=1}^m\Lambda_j$ such that $\norm{K-PKP}<\ve$. 
    \item If $A\in\calK_\mathrm{I}$, then there exists a projection $P$ such that $\norm{A-PAP}< \ve$ and $PTP\in\calL_\mathrm{I}$ for all $T\in\calB$. (More explicitly, for a given $\{\vf^j_k\}_{k=1}^\infty$ orthonormal basis on each $\im\Lambda_j$, if we denote $E^j_k=\vf^j_k\otimes (\vf^j_k)^\ast$, then the projection is of the form $
        P=\sum_{j=1}^\infty \sum_{k=1}^{n_j} E_k^j
    $ for some $n_j\in\NN$.)
    \end{enumerate}
    \end{lem}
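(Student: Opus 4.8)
The plan is to handle the two parts separately: part~1 is a routine approximation argument, whereas part~2 needs a block-by-block truncation whose convergence is driven by the pairwise orthogonality of the $\Lambda_j$'s rather than by compactness of $A$ itself (which fails for general $A\in\calK_\mathrm{I}$).

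For part~1, if $\calJ$ is finite I would simply take $P=\Id$. Otherwise I would fix an enumeration of $\calJ$ and set $\Pi_m:=\sum_{j=1}^m\Lambda_j$, so that $\Pi_m\to\Id$ in the strong operator topology (the partial sums of a strongly convergent sum of pairwise orthogonal projections summing to $\Id$). Invoking the standard fact that a uniformly bounded strongly convergent sequence becomes norm-convergent after multiplication by a compact operator — legitimate on either side here since the $\Pi_m$ are self-adjoint — one gets $\norm{K-\Pi_m K}\to 0$ and $\norm{K-K\Pi_m}\to 0$, whence $\norm{K-\Pi_m K\Pi_m}\le\norm{K-\Pi_m K}+\norm{K-K\Pi_m}\to 0$. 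Pick $m$ making this smaller than $\ve$ and set $P:=\Pi_m$.

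For part~2 I would argue as follows. For each $j$ put $P^{(j)}_n:=\sum_{k=1}^n E^j_k$ and $Q^{(j)}_n:=\Lambda_j-P^{(j)}_n$; since $\{\vf^j_k\}_k$ is an orthonormal basis of $\im\Lambda_j$, $Q^{(j)}_n\to 0$ strongly as $n\to\infty$. Because $A\in\calK_\mathrm{I}$, both $\Lambda_j A$ and $\Lambda_j A^\ast=(A\Lambda_j)^\ast$ are compact, so $\norm{Q^{(j)}_n A}=\norm{Q^{(j)}_n\Lambda_j A}\to 0$ and $\norm{Q^{(j)}_n A^\ast}\to 0$. Choose $n_j$ with $\norm{Q^{(j)}_{n_j}A}<\tfrac{\ve}{4}2^{-j}$ and $\norm{Q^{(j)}_{n_j}A^\ast}<\tfrac{\ve}{4}2^{-j}$, set $P^{(j)}:=P^{(j)}_{n_j}$, $Q^{(j)}:=\Lambda_j-P^{(j)}$, and $P:=\sum_j P^{(j)}$ (a projection, being a strongly convergent sum of pairwise orthogonal projections), which has the advertised form. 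Since each $P^{(j)}\le\Lambda_j$, the operator $P$ commutes with every $\Lambda_j$ and $P\Lambda_j=\Lambda_j P=P^{(j)}$, so $[PTP,\Lambda_j]=PTP^{(j)}-P^{(j)}TP$ is finite rank for all $T\in\calB$; hence $PTP\in\calL_\mathrm{I}$. For the approximation bound, write $A-PAP=P^\perp A+PAP^\perp$, so $\norm{A-PAP}\le\norm{P^\perp A}+\norm{AP^\perp}$; as $P^\perp=\sum_j Q^{(j)}$ with pairwise orthogonal ranges, for a unit vector $x$ one has $\norm{P^\perp Ax}^2=\sum_j\norm{Q^{(j)}Ax}^2\le\sum_j\norm{Q^{(j)}A}^2<\sum_j(\tfrac{\ve}{4})^2 4^{-j}<(\tfrac{\ve}{2})^2$, and the same estimate for $A^\ast$ gives $\norm{AP^\perp}=\norm{P^\perp A^\ast}<\tfrac{\ve}{2}$, so $\norm{A-PAP}<\ve$.

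The only genuinely non-routine point is this last estimate: it is tempting but wrong to bound the tail $\sum_j Q^{(j)}A$ by $\norm{\Pi_m^\perp A}$ and hope that is small, since $A$ need not be compact; the correct move is to exploit that the $\{Q^{(j)}\}$ have pairwise orthogonal ranges, giving $\norm{P^\perp A}^2\le\sum_j\norm{Q^{(j)}A}^2$ with individually (and summably) controllable terms — this is exactly the per-block compactness that membership in $\calK_\mathrm{I}$, as opposed to $\calK$, still supplies. Everything else (self-adjointness of the truncations, finiteness of rank in the locality claim, summability of $\sum 4^{-j}$) is bookkeeping.
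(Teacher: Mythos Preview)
Your proof is correct and follows essentially the same route as the paper: both parts use that the relevant projections converge strongly and that compactness upgrades this to norm convergence, then in part~2 choose $n_j$ so that the tails $Q^{(j)}A$ and $AQ^{(j)}$ are summably small and observe that $[PTP,\Lambda_j]$ is finite rank. The only cosmetic difference is your final estimate: you use the Pythagorean bound $\norm{P^\perp A}^2\le\sum_j\norm{Q^{(j)}A}^2$ coming from orthogonality of the ranges, whereas the paper simply invokes the triangle inequality $\norm{P^\perp A}\le\sum_j\norm{Q^{(j)}A}$ (valid since the chosen bounds are $\ell^1$-summable), arriving at $\norm{A-PAP}<3\ve$ rather than your sharper $\ve$.
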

    \begin{proof}
    If $K\in\calK$, then there exists $m$ large enough such that 
    \eq{
        \max\left(\norm{\sum_{j=m+1}^\infty\Lambda_jK},\norm{\sum_{j=m+1}^\infty K\Lambda_j}\right)<\ve\,.
    }
    Let $P = \sum_{j=1}^m\Lambda_j$. We have $\norm{K-PKP}<3\ve$.

        Let $\{\vf^j_k\}_{k=1}^\infty$ be an orthonormal basis for $\im\Lambda_j$ and let $E^j_k=\vf^j_k\otimes (\vf^j_k)^\ast$. Since $\Lambda_jA$ and $A\Lambda_j$ are compact, there exists $n_j$ large enough such that
    \eq{
        \max\left(\norm{\sum_{k=n_j+1}^\infty E_k^j A},\norm{\sum_{k=n_j+1}^\infty  A E_k^j}\right) < \ve/2^j\,.
    }
    Define $P=\sum_{j=1}^\infty \sum_{k=1}^{n_j} E_k^j$. We have
    \eq{
        \norm{A-PAP} \leq \norm{PAP^\perp}+\norm{P^\perp AP}+\norm{P^\perp AP^\perp}< 3\ve\,.
    }
    For arbitrary $T\in\calB$, since each $\sum_{k=1}^{n_j}E_k^j$ is finite-rank, we have that
    \eq{
        [PTP,\Lambda_j]=PTP\Lambda_j-\Lambda_jPTP =\sum_{k=1}^{n_j} (PT E_k^j - E_k^jTP)
    }
    is finite-rank.
    \end{proof}

    \section{Classification of quasi-2D unitaries}\label{sec:classification of quasi2D unitaries}
    Let $\calU^\calL:=\calU\cap\calL$. Denote $\calG$ as the space of invertible operators. Let $\calG^\calL:=\calG\cap\calL$. We define the super operator $\bbLambda_j:\calB\to\calB$ via $A\mapsto \bbLambda_j A := \Lambda_jA\Lambda_j +\Lambda_j^\perp\cong A_j\oplus \Id_{\im(\Lambda_j^\perp)}$.
    \begin{thm}\label{thm:pi0 of quasi-2D unitaries} The set of path-connected component for $\calU^{\calL_\mathrm{I}}$ is
        \eql{ \pi_0(\calU^{\calL_\mathrm{I}}) \cong \prod_{j=1}^{|\calJ|-1} \ZZ }
        the bijection given by \eql{\label{eq:bijection with direct product}
            \calU^{{\calL_{\mathrm{I}}}}\ni U\mapsto \Set{\findex \bbLambda_j U}_{j\in\calJ}\,,
        } and the set of path-connected component for $\calU^{\calL_\mathrm{II}}$ is
        \eql{ \pi_0(\calU^{\calL_\mathrm{II}}) \cong \bigoplus_{j=1}^{|\calJ|-1} \ZZ } the bijection again given by \eql{\label{eq:bijection with direct sum}
            \calU^{{\calL_{\mathrm{II}}}}\ni U\mapsto \Set{\findex \bbLambda_j U}_{j\in\calJ}\,,\qquad|\Set{j\in\calJ|\findex \bbLambda_j U\neq0}|<\infty\,.
        } 
        
        I.e., if $|\calJ|=\infty$, then the latter is to be interpreted as the space of sequences $\calJ\to\ZZ$ with finitely-many non-zero values and the former is general sequences $\calJ\to\ZZ$.

    \end{thm}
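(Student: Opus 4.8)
The plan is to exhibit the map $\Phi\colon\calU^\calL\ni U\mapsto(\findex\bbLambda_j U)_{j\in\calJ}$ as a group isomorphism, after passing to $\pi_0$, onto the asserted subgroup of $\prod_{j\in\calJ}\ZZ$. First I would check $\Phi$ is well defined. Since $\calL_\mathrm{II}\subseteq\calL_\mathrm{I}$, for $U\in\calU^{\calL_\mathrm{I}}$ the relation $[U,\Lambda_j]\in\calK$ gives $\Lambda_jU\Lambda_jU^*\Lambda_j=\Lambda_j-(\Lambda_jU\Lambda_j^\perp)(\Lambda_j^\perp U^*\Lambda_j)$ and its companion with $U^*,U$ swapped, so $U_j:=\Lambda_jU\Lambda_j|_{\im\Lambda_j}$ is Fredholm and $\bbLambda_jU$ is Fredholm of index $\findex U_j$. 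For $U\in\calU^{\calL_\mathrm{II}}$, writing $U=\bbDelta(U)+K$ with $K$ compact (\cref{prop:quasi compcat}) and using $\norm{\Lambda_jK},\norm{K\Lambda_j}\to0$, one gets $U_j^*U_j,U_jU_j^*\to\Lambda_j$ in norm, hence $U_j$ is invertible for all but finitely many $j$ and $\Phi(U)$ is finitely supported. Next, $\Lambda_jU\Lambda_j^\perp\in\calK$ forces $\bbLambda_j(UV)=(\bbLambda_jU)(\bbLambda_jV)+\calK$, so $\findex\bbLambda_j$ is additive and $\Phi$ is a homomorphism; and each coordinate $U\mapsto\findex\bbLambda_jU$ is locally constant (norm-continuity of $U\mapsto\bbLambda_jU$ together with local constancy of the Fredholm index), so $\Phi$ is constant on path components and descends to a homomorphism $\bar\Phi\colon\pi_0(\calU^\calL)\to\prod_\calJ\ZZ$. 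It then remains to identify the image of $\bar\Phi$ and prove it is injective.

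For the image, which is a subgroup of $\prod_\calJ\ZZ$: when $|\calJ|<\infty$ — hence also for $\calL_\mathrm{I}$ in that case, since the types then coincide — and more generally for $\calL_\mathrm{II}$, the block-diagonal $\bbDelta(U)=\bigoplus_jU_j$ is Fredholm (almost all blocks invertible, by the estimate above) with $\sum_j\findex U_j=\findex\bbDelta(U)=\findex U=0$, so $\Phi(U)$ is finitely supported with vanishing sum; for $\calL_\mathrm{I}$ with $|\calJ|=\infty$ there is no such constraint. Conversely, given a target sequence $(n_j)$ in the relevant set I would build $U$ as a \emph{variable-current conveyor}: set $c_j:=\sum_{k\le j}n_k$ with $c_0:=0$, split each $\im\Lambda_j$ into a finite-dimensional ``channel'' of dimension $|c_{j-1}|+|c_j|$ and an infinite-dimensional bulk, let $U$ transport $|c_j|$ orthonormal vectors between blocks $j$ and $j{+}1$ in the direction $\sgn c_j$, and let $U$ act as an isometry from the bulk of block $j$ onto the orthogonal complement in $\im\Lambda_j$ of the vectors received from its two neighbours. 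These isometries have mutually orthogonal ranges covering $\calH$, so $U$ is unitary; its off-diagonal blocks $\Lambda_{j\pm1}U\Lambda_j$ have rank $\le|c_j|+|c_{j-1}|<\infty$, so $U\in\calL_\mathrm{I}$, and $U\in\calL_\mathrm{II}$ precisely when $c_j$ is eventually $0$; and a one-line computation identifies $\findex U_j$ with the net outflow $c_j-c_{j-1}=n_j$, so $\Phi(U)=(n_j)$. Up to the identification dropping the one redundant coordinate in the zero-sum cases, this image is exactly $\bigoplus_{j=1}^{|\calJ|-1}\ZZ$ for $\calL_\mathrm{II}$ and $\prod_{j=1}^{|\calJ|-1}\ZZ$ for $\calL_\mathrm{I}$.

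For injectivity it suffices, $\calU^\calL$ being a topological group (\cref{lem:C-star algebraic structure}) and $\bar\Phi$ a homomorphism, to show $\Phi(U)=0\Rightarrow U\simeq\Id$ in $\calU^\calL$. Each $U_j$ is then index-$0$ Fredholm, so pick unitaries $V_j\in\calU(\im\Lambda_j)$ with $U_j-V_j$ of finite rank and set $V:=\bigoplus_jV_j\in\calU(\calD)$. One checks $U-V\in\calK_\mathrm{I}$ in the Type~I case (diagonal blocks $U_j-V_j$ compact, and $U-\bbDelta(U)\in\calK_\mathrm{I}$), and $U-V\in\calK$ in the Type~II case (additionally $U_j$ invertible and $\norm{U_j-V_j}\to0$ for large $j$, so $\bigoplus_j(U_j-V_j)$ is compact, while $U-\bbDelta(U)\in\calK$). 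Since $\calD=\{\Lambda_j\}_j'$ is a von Neumann algebra, $\calU(\calD)$ is path-connected (self-adjoint logarithm via the bounded Borel functional calculus), so $V\simeq\Id$ inside $\calU(\calD)\subseteq\calU^\calL$, and it is enough to connect $W:=UV^*$, a unitary of the form $\Id+T$ with $T\in\calK$ (Type~II) or $T\in\calK_\mathrm{I}$ (Type~I), to $\Id$. In the Type~II case $\sigma(W)$ is countable with sole accumulation point $1$, so $\arg(W)$ is a compact self-adjoint operator and $t\mapsto\exp(\ii t\arg(W))$ is a path in $\calU(\Id+\calK)\subseteq\calU^{\calL_\mathrm{II}}$. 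In the Type~I case I would invoke the second part of \cref{lem:compact operators}: choose $P=\sum_j\sum_{k\le n_j}E^j_k$ with $\norm{T-PTP}$ small; then $\Id+PTP\in\calL_\mathrm{I}$ is invertible and equals $A\oplus\Id_{P^\perp\calH}$ with $A$ invertible on $P\calH$, so its polar part is $\hat W\oplus\Id_{P^\perp\calH}$ with $\hat W\in\calU(P\calH)$, and $\hat W\oplus\Id$ is norm-close to $W$, hence joined to it by a logarithm path; finally $P\Lambda_j$ has finite rank for every $j$, so \emph{every} unitary on $P\calH$ extended by $\Id_{P^\perp\calH}$ lies in $\calL_\mathrm{I}$, and $\calU(P\calH)$ is path-connected, whence $\hat W\oplus\Id\simeq\Id$. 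In all cases $W\simeq\Id$, so $U=WV\simeq\Id$, which gives injectivity and hence $\pi_0(\calU^\calL)\cong\mathrm{im}(\bar\Phi)$ as asserted.

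The step I expect to be the main obstacle is Type~I injectivity. The difficulty is that $\calK_\mathrm{I}\supsetneq\calK$ and a unitary $\Id+T$ with $T\in\calK_\mathrm{I}$ can have spectrum filling all of $\bS^1$, so the spectral/logarithm argument that handles $\calU(\Id+\calK)$ is unavailable; the role of \cref{lem:compact operators}(2) is precisely to compress onto a projection $P$ that is still large in the required sense but for which $P\Lambda_j$ is finite rank, so that the residual unitary is supported on a Hilbert space carrying \emph{no} locality constraint at all and connectedness of $\calU(P\calH)$ closes the argument. A secondary technical point is to write out the gluing in the conveyor construction and confirm it produces a bounded unitary with finite-rank off-diagonal blocks; this is elementary but does need to be checked.
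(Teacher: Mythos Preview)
Your proof is correct and follows the same skeleton as the paper's: reduce injectivity to the zero-index case via the group structure, write $U$ as a block-diagonal piece plus a remainder in $\calK_\mathrm{I}$ (resp.\ $\calK$), contract the diagonal piece to $\Id$, and dispose of the remainder by compressing through the projection $P$ of \cref{lem:compact operators}. Your implementation differs in a few pleasant ways---you contract the diagonal unitary $V$ via the Borel logarithm in the von Neumann algebra $\calD$, whereas the paper takes diagonal \emph{invertibles} $G_j$, uses block-wise Kuiper, and lifts at the end via \cref{lem:lift invertible local path to unitary local path}; for Type~II you connect $\calU(\Id+\calK)$ directly by $t\mapsto\exp(\ii t\arg W)$, whereas the paper reduces to finitely many blocks and recycles the Type-I argument; and your surjectivity ``conveyor'' replaces the paper's $L$-shaped copies of $\ell^2(\ZZ)$ carrying bilateral shifts---but none of this changes the essential strategy.

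Two small corrections. First, you cannot in general choose $V_j$ unitary with $U_j-V_j$ of \emph{finite} rank: were that so, $U_j^*U_j-\Id_{\im\Lambda_j}=-\Lambda_jU^*\Lambda_j^\perp U\Lambda_j$ would be finite rank, but Type-I locality only makes it compact. Taking $V_j$ to be the polar part of (a finite-rank correction of) $U_j$ gives $U_j-V_j\in\calK$, which is all your argument actually uses; your Type-II estimate $\|U_j-V_j\|\to0$ still holds with this choice for large $j$. Second, in the conveyor the ``channel'' of block $j$ should be the \emph{outgoing} subspace, of dimension $\max(0,c_j)+\max(0,-c_{j-1})$, with bulk its orthogonal complement; if the channel also contains the incoming slots (total dimension $|c_{j-1}|+|c_j|$) then those incoming vectors sit in the domain with no specified image under $U$. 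With that adjustment the gluing is routine, as you anticipated.
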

        Clearly when $|\calJ|=2$ we are back to the usual one-dimensional notion of $\Lambda$-locality from \cite{ChungShapiro2023}, with $\Lambda_1 := \Lambda$ and $\Lambda_2 := \Lambda^\perp$.
    \begin{proof}
        Let $U\in\calU^{\calL_\mathrm{I}}$. We first argue that if $\findex \bbLambda_j U=0$ for each $j$, then $U\leadsto \Id$ within $\calG^{\calL_\mathrm{I}}$. Recall we use the notation $U_j$ for the map $U_j:\im(\Lambda_j)\to\im(\Lambda_j)$ defined via $U_j\psi:=\Lambda_jU\Lambda_j\psi$. By locality and unitarity, it is clear that $U_j$ is Fredholm with parametrix $U_j^\ast$, and $\findex U_j = \findex \bbLambda_j U = 0$ by hypothesis. Thus we have $U_j=G_j+K_j$ for some $G_j\in\calG$ and $K_j\in\calK$. Let $G$ be the direct sum of $G_j$. Then $U=G+A$, where $A\equiv U-G$ is \emph{not} necessarily compact, but has the property that $\Lambda_j A,A\Lambda_j\in\calK$ (so it's in $\calK_I$). Indeed, we have $A=U-G\in{\calL_\mathrm{I}}$, which implies $\Lambda_j^\perp A\Lambda_j,\Lambda_j A\Lambda_j^\perp \in\calK$. It follows from $\Lambda_jA\Lambda_j=K_j\in\calK$ that $\Lambda_jA=\Lambda_jA\Lambda_j^\perp +\Lambda_jA\Lambda_j\in\calK$ and similarly for $A\Lambda_j\in\calK$. Let $B=AG^{-1}$. We have
        \eq{
            U = G+A = (\Id +AG^{-1})G \leadsto \Id + B
        }
        where we deform each $G_j\leadsto \Id$ within $\calG$ (using Kuiper), which gives a path $G\leadsto \Id$ that commutes with $\Lambda_j$, and hence is local. It also follows from $[G^{-1},\Lambda_j]=0$ that $\Lambda_jB,B\Lambda_j\in\calK$. Indeed, $B\Lambda_j=AG^{-1}\Lambda_j = A\Lambda_j G^{-1}\in\calK$. So $B\in\calK_I$ in fact.

        Let $\ve>0$. Using \cref{lem:compact operators}, there exists a projection $P$ such that $\norm{B-PBP}\leq \ve$ and $PTP\in\calL_\mathrm{I}$ for all $T\in\calB$. If we choose $\ve$ small enough, using the openness of the space of inveritble operators, we can deform $\Id+B\leadsto \Id +PBP$ within $\calG^{\calL_\mathrm{I}}$ via straight-line homotopy. If we write $\Id+PBP=P^\perp + P(\Id+B)P$, we see that $P(\Id+B)P$ is invertible on $\im P$. We can then deform $P(\Id+B)P\leadsto \Id_{\im P}$ within $\calG$. In particular, this path will not violate type-I locality since $[PTP,\Lambda_j]\in\calK$ for all $T\in\calB$. Therefore, we have deformed $\Id+PBP\leadsto \Id$ within $\calG^{\calL_\mathrm{I}}$. We can lift the path $U\leadsto \Id$ in $\calG^{\calL_\mathrm{I}}$ to a path in $\calU^{\calL_\mathrm{I}}$ using \cref{lem:lift invertible local path to unitary local path}.

    Before completing the analysis of ${\calL_{\mathrm{I}}}$, we proceed to consider ${\calL_{\mathrm{II}}}$. Let $U\in\calU^{{\calL_\mathrm{II}}}$. We first argue that all but finitely many $\findex\bbLambda_j U$ must vanish. Write $U=\bbDelta U+K$ with $K\in\calK$. Since $K$ is compact, using \cref{lem:compact operators}, there is $m$ large enough and $P=\sum_{j=1}^m\Lambda_j$ such that $\norm{K-PKP}<\ve$. If we choose $\ve$ small enough, then by the openness of the space of invertible operators, we can deform $U=\bbDelta U+K\leadsto \bbDelta U+PKP$ in $\calG^{{\calL_\mathrm{II}}}$ via straight-line homotopy. Since $\bbDelta U$ commutes with $\Lambda_j$, we can write
    \eq{
        \bbDelta U+PKP = P^\perp \bbDelta U P^\perp + P(\bbDelta U+K)P\,.
    }
    Thus $U_j\in\calG$ for $j\geq m+1$ (we use the same definition of $U_j$ as in the previous paragraph) and hence $\findex\bbLambda_j U=\findex U_j=0$ for $j\geq m+1$, i.e., all but finitely many of the indices are zero. 
    
    We now deform $U\in\calU\cap{\calL_{\mathrm{II}}}$ to $\Id$ within $\calU\cap{\calL_{\mathrm{II}}}$. Initially, as before, we assume that \emph{all} indices are zero. For the operator $P^\perp \bbDelta U P^\perp$, since $U_j\in\calG$ for $j\geq m+1$, we deform each $U_j\leadsto \Id$ within $\calG$ (this path is local as it is diagonal). Next, the operator $P(\bbDelta U+K)P$ belongs to $\calG^{\calL_\mathrm{I}}$ on $\im P$, because there are only finitely many projections in the decomposition $\Id_{\im P}=\sum_{j=1}^m\Lambda_j$, see \cref{rem:difference between LI and LII}. Since all indices are assumed to be zero, we can use the previous result to deform $P(\bbDelta U+K)P\leadsto \Id_{\im P}$ within $\calG^{\calL_\mathrm{I}}$ which coincides with $\calG^{{\calL_\mathrm{II}}}$ in this case (again using \cref{rem:difference between LI and LII}). The combined path gives $\bbDelta U+PKP\leadsto \Id$ within $\calG^{{\calL_\mathrm{II}}}$. We can lift the path to $\calU^{{\calL_\mathrm{II}}}$ using \cref{lem:lift invertible local path to unitary local path}.

    We have thus far established that the zero index set is path-connected, for $\calU^{\calL_\sharp}$ both with $\sharp\in \Set{\mathrm{I,II}}$. To deal with non-zero indices, suppose $U,V\in\calU^{\calL_{\sharp}}$ have the same indices $\findex\bbLambda_jU = \findex\bbLambda_jV$ for each $j$. Then $UV^*\in\calU^{\calL_\sharp}$ has $\findex\bbLambda_j UV^* = 0$ for each $j$. Thus we have a path $\gamma_t$ that deforms $UV^\ast\leadsto \Id$ within $\calU^{\calL_\sharp}$. In particular, $\gamma_tV$ deforms $U\leadsto V$ within $\calU^{\calL_\sharp}$. This completes the proof that the maps \cref{eq:bijection with direct product,eq:bijection with direct sum} are injective.

    \begin{figure}[t]    
    \centering
    \begin{subfigure}{0.4\linewidth}
        \centering
        \begin{tikzpicture}[scale=0.5]
        \draw[step=1cm,gray,very thin] (-1,-1) grid (4.9,6.9);
        \draw[->,gray] (-1,-1) -- (4.9,-1);
        \draw[->,gray] (-1,-1) -- (-1,6.9);
        
        \def\colors{{"blue","red","orange","teal"}};
        \foreach \x in {0,1,2}{
        \foreach \y in {0,1,2,3,4,5,6}{
        \pgfmathsetmacro{\c}{\colors[\x]}
        \node at (\x,\y) [circle,fill,inner sep=1pt,\c]{};
        }
        }
        
        \foreach \y in {0,1,2}{
        \foreach \x in {\y,...,4}{
        \pgfmathsetmacro{\c}{\colors[\y]}
        \node at (\x,\y) [circle,fill,inner sep=1pt,\c]{};
        }
        }
        \node at (2,3) [above right, yshift=-2,xshift=-2,black]{$\scriptstyle \varphi^3_4$};
        \node at (2,4) [above right, yshift=-2,xshift=-2,black]{$\scriptstyle\varphi^3_5$};
        \node at (2,5) [above right, yshift=-2,xshift=-2,black]{$\scriptstyle\varphi^3_6$};
        \node at (2,6) [above right, yshift=-2,xshift=-2,black]{$\scriptstyle\varphi^3_7$};
        
        \node at (2,2) [above right, yshift=-2,xshift=-2,black]{$\scriptstyle\varphi^3_3$};
        \node at (3,2) [above right, yshift=-2,xshift=-2,black]{$\scriptstyle\varphi^4_3$};
        \node at (4,2) [above right, yshift=-2,xshift=-2,black]{$\scriptstyle\varphi^5_3$};

        \end{tikzpicture}
        \caption{Illustration of the decomposition $\calH=\oplus_{n\in\NN}\calH_n$ and its natural ordering. Each column $j$ sits the orthonormal basis that spans $\im\Lambda_j$. The blue basis elements span $\calH_1$, the red spans $\calH_2$, and the orange spans $\calH_3$. Some of the basis elements for $\calH_3$ are displayed. Each $\calH_n$ is naturally isomorphic to $\ell^2(\ZZ)$, with the vertical part of $\calH_n$ corresponding to the left part of $\ell^2(\ZZ)$, and the horizontal to the right part. On each $\calH_n$, one can therefore define a  bilateral shift.}
        \label{fig:surjectivity infinite calJ}
    \end{subfigure}
    \qquad
    \begin{subfigure}{0.4\linewidth}
    \centering
        \begin{tikzpicture}[scale=0.5]
        \draw[step=1cm,gray,very thin,opacity=0.5] (-1,-1) grid (4.9,6.9);
        \draw[->,gray] (-1,-1) -- (4.9,-1);
        \draw[->,gray] (-1,-1) -- (-1,6.9);
        
        \def\colors{{"blue","red","orange","teal"}};
        \foreach \x in {0,1,2}{
        \foreach \y in {0,1,2,3,4,5,6}{
        \node at (\x,\y) [circle,fill,inner sep=1pt,black]{};
        }
        }
        
        \foreach \y in {1,...,7}{
        \draw[-stealth,blue,thick] (0,\y-0.2) -- (0,\y-1+0.2);
        }
        \foreach \y in {1,...,7}{
        \draw[-stealth,blue,thick] (1,\y-0.2) -- (1,\y-1+0.2);
        }
        \foreach \y in {7,5,3}{
        \draw[stealth-,red,thick] (2,\y-0.2) .. controls (2+0.4,\y-1) .. (2,\y-2+0.2);
        }
        \foreach \y in {6,4,2}{
        \draw[stealth-,red,thick] (2,\y-0.2) .. controls (2-0.4,\y-1) .. (2,\y-2+0.2);
        }
        
        \draw[-stealth,orange,thick] (0+0.14,0-0.14) .. controls (1.5,-1) and (4.5,-1)  .. (2+0.14,1-0.14);
        \draw[-stealth,orange,thick] (1+0.2,0) -- (2-0.2,0);
        
        \node at (0,0) [above left, yshift=-2,xshift=2,black]{$\scriptstyle\varphi^1_1$};
        \node at (1,0) [above left, yshift=-2,xshift=2,black]{$\scriptstyle\varphi^2_1$};
        \node at (2,0) [right, yshift=0,xshift=-2,black]{$\scriptstyle\varphi^3_1$};
        \node at (2,1) [above right, yshift=-2.5,xshift=-2,black]{$\scriptstyle\varphi^3_2$};
        
        \end{tikzpicture}
        \caption{Example of an operator in $\calU^{\calL_\mathrm{I}}$ when $|\calJ|=3$ and having indices $1,1,-2$. In each column sits the basis elements that spans $\im\Lambda_j$. We define unilateral shifts according the the indices. Here $\vf_1^1,\vf_1^2$ are initially mapped to zero, while $\vf_1^3,\vf_2^3$ are not mapped to. The orange arrows modify the operator in such a way that makes it unitary.}
        \label{fig:surjectivity finite calJ}
    \end{subfigure}
    \caption{Construction of operators in $\calU^\calL{_{\mathrm{I}}}$.}
    \end{figure}

    We proceed to show surjectivity. We shall first reduce the surjectivity in ${\calL_{\mathrm{II}}}$ to that in ${\calL_{\mathrm{I}}}$. To do so, let a sequence $s:\calJ\to\ZZ$ be given so that only finitely many of its elements are non-zero. Then we decompose $\calJ=\calJ_0\sqcup\calJ_1$ where the former set is the infinite of zero indices. This induces a direct sum decomposition on the Hilbert space \eq{
        \left(\bigoplus_{j\in\calJ_0}\im\Lambda_j\right)\oplus\left(\bigoplus_{j\in\calJ_1}\im\Lambda_j\right) =: \calH_0 \oplus \calH_1\,.
    } Then it is clear that to find an operator whose index is $s$, we can set it equal to $\Id$ on $\calH_0$, and merely look for an operator on $\calH_1$ whose index is $s$. However, since $|\calJ_1|<\infty$, this problem, by \cref{rem:difference between LI and LII} reduces to finding an operator within ${\calL_{\mathrm{I}}}$. Thus, let us show that each path-connected component of $\calU^{\calL_\mathrm{I}}$ is nonempty. 

    We first consider when $\calJ$ is an infinite set so WLOG $\calJ=\NN$ and let $\Set{s_j}_{j\in\NN}\subseteq\ZZ$. Let $\Set{\vf_k^j}_{k\in\NN}$ be an ONB for $\im\Lambda_j$ for each $j\in\NN$. Define "L"-shaped subspace \eq{
        \calH_n = \szpan\Set{\vf^n_k,\vf^j_n | k\geq n,j>n},\qquad (n\in\NN)\,.
    }
    For $n\neq m$, the subspaces $\calH_n$ and $\calH_m$ are orthogonal since they contain distinct basis elements, and we have $\calH=\oplus_{n\in\NN}\calH_n$. There is a natural ordering (in $\ZZ$) of elements in $\calH_n$: \eq{
    \dots,\vf^n_{n+1},\vf^n_n,\vf^{n+1}_n,\vf^{n+2}_n,\dots
    }
    which makes $\calH_n$ isomorphic with $\ell^2(\ZZ)$ in a way that respects ${\calL_{\mathrm{I}}}$ locality; see \cref{fig:surjectivity infinite calJ} for an illustration. With respect to this ordering, on $\calH_n\cong\ell^2(\ZZ)$, we consider the operator $R_n:=R^{s_n}$ where $R$ is the bilateral right shift. We define the unitary operator $U=\oplus_{n\in\NN}R_n$ on $\calH$. It is clear that $U\in\calL_\mathrm{I}$. In fact, there appears only finite hopping between basis elements in different $\im\Lambda_j$'s, and therefore $\Lambda_j U\Lambda_j^\perp$ and $\Lambda_j^\perp U\Lambda_j$ are finite-rank. It is clear by construction that $\findex \bbLambda_j U = s_j$.

    We now consider the case when $\calJ$ is finite. Let $m:=|\calJ|$ and let $\Set{s_j}_{j=1}^{m}\subseteq \ZZ$ be a finite set of integers such that \eql{
        s_1+\cdots+s_m=0\,. \label{eq:finite calJ index constraint}
    }
    In general, on $\im\Lambda_j$, we define $\hat{R}$ to be the unilateral shift operator \eq{
        \hat{R}\vf^j_k = \begin{cases}
            \vf^j_{k-1} & k > 1 \\
            0 & k\leq 1
        \end{cases}\,.
    }
    We consider the operator $\hat{R}^{s_j}$ on $\im\Lambda_j$ so that $\findex \hat{R}^{s_j} = s_j$. If $s_j>0$, then the elements in $\Set{\vf^j_1,\dots,\vf^j_{s_j}}$ are mapped to zero; while if $s_j<0$, then those elements are \emph{not} mapped to. The operator $\bigoplus_{j=1}^m \hat{R}^{s_j}$ is in $\calL_\mathrm{I}$ but not unitary. To make it unitary, we can modify its operation between elements in $\bigcup_{j=1}^m \Set{\vf^j_1,\dots,\vf^j_{s_j}}_j$. Because of the constraint \cref{eq:finite calJ index constraint}, the number of elements mapping to zero is equal to the number of elements that are not mapped to. We will map those elements initially mapping to zero to those elements that are not mapped to. Therefore, the operator so constructed becomes unitary. See \cref{fig:surjectivity finite calJ} for an example.

    \end{proof}

    \begin{lem}\label{lem:lift invertible local path to unitary local path}
    Let $U,V\in\calU^\calL$ and suppose there is a path $U\leadsto V$ within $\calG^\calL$. Then there is a path $U\leadsto V$ within $\calU^\calL$.
    \end{lem}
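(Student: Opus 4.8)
The plan is to use the standard polar decomposition retraction, but carried out carefully so that it stays inside the locality space $\calL$. Given a path $t\mapsto G_t$ in $\calG^\calL$ with $G_0=U$ and $G_1=V$ both unitary, I would set $W_t := G_t (G_t^\ast G_t)^{-1/2}$, the unitary part of the polar decomposition. Since $G_t$ is invertible, $G_t^\ast G_t$ is a positive invertible operator, its spectrum is bounded away from $0$, so the inverse square root is given by a power series (or holomorphic functional calculus) in $G_t^\ast G_t$ and $W_t$ is well-defined and continuous in $t$. At the endpoints, $U^\ast U = \Id = V^\ast V$, so $W_0 = U$ and $W_1 = V$, as required.

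The key point to check is that $W_t \in \calL$ for each $t$. This follows because each of the locality spaces $\calL_\sharp$ is a C-star algebra by \cref{lem:C-star algebraic structure}; in particular it is norm-closed, closed under products and adjoints, and (being a C-star subalgebra of $\calB(\calH)$) closed under continuous functional calculus applied to its self-adjoint elements. Concretely: $G_t \in \calL$ implies $G_t^\ast G_t \in \calL$; this is a positive element whose spectrum lies in a compact subset of $(0,\infty)$; the function $x\mapsto x^{-1/2}$ is continuous there, so $(G_t^\ast G_t)^{-1/2}\in\calL$ by functional calculus within the C-star algebra $\calL$; finally $W_t = G_t (G_t^\ast G_t)^{-1/2}$ is a product of two elements of $\calL$, hence in $\calL$. (If one prefers to avoid invoking abstract functional calculus, one can note $(G_t^\ast G_t)^{-1/2}$ is a norm-limit of polynomials in $G_t^\ast G_t$ with no constant term, or a Riemann-integral of resolvents $(G_t^\ast G_t - z)^{-1}$ over a contour in $(0,\infty)$, each of which visibly stays in $\calL$; but the C-star algebra statement already packages this.) Continuity of $t\mapsto W_t$ in operator norm follows from continuity of $t\mapsto G_t$ together with continuity of the map $A\mapsto A^{-1/2}$ on positive invertibles on any compact norm-neighbourhood, since the spectra $\sigma(G_t^\ast G_t)$ stay in a common compact subinterval of $(0,\infty)$ as $t$ ranges over $[0,1]$.

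There is essentially no serious obstacle here; the only thing requiring a moment's care is the \emph{uniform} spectral bound, i.e. that $\inf_{t}\;\dist(0,\sigma(G_t^\ast G_t)) > 0$. This holds because $t\mapsto \norm{G_t^{-1}}$ is continuous on the compact interval $[0,1]$ (invertibility is preserved along the path and inversion is norm-continuous on $\calG$), hence bounded, and $\dist(0,\sigma(G_t^\ast G_t)) = \norm{(G_t^\ast G_t)^{-1}}^{-1} = \norm{G_t^{-1}}^{-2}$. Given this uniform bound, $x\mapsto x^{-1/2}$ may be taken continuous (indeed real-analytic) on a fixed compact interval containing all the spectra, so the functional calculus and all the continuity estimates are uniform in $t$. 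I would therefore organize the proof as: (1) note $G_t^\ast G_t$ is positive invertible with spectrum in a fixed $[\delta,M]\subseteq(0,\infty)$, using compactness of $[0,1]$; (2) define $W_t := G_t (G_t^\ast G_t)^{-1/2}$, observe $W_t$ is unitary and equals $U,V$ at the endpoints; (3) invoke \cref{lem:C-star algebraic structure} to conclude $W_t\in\calL$; (4) check norm-continuity of $t\mapsto W_t$. This gives the desired path $U\leadsto V$ inside $\calU^\calL$.
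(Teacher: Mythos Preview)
Your proof is correct and takes essentially the same approach as the paper: the paper also applies the polar part $\gamma_t\mapsto \gamma_t|\gamma_t|^{-1}$ and invokes the C-star algebra structure of $\calL$ (\cref{lem:C-star algebraic structure}) to keep the path inside $\calL$. Your write-up is simply more explicit about the uniform spectral bound and the continuity in $t$, which the paper leaves implicit.
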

    \begin{proof}
    Let $\gamma_t\in\calG^\calL$ deforming $U\leadsto V$. Since $\calL$ is a C-star algebra, and $z\mapsto |z|$ is continuous, it follows that $|\gamma_t|\in\calL$. Then $\polar(\gamma_t)=\gamma_t|\gamma_t|^{-1}\in\calL$. Therefore, the path $t\mapsto \polar(\gamma_t)$ deforms $U\leadsto V$ within $\calU^\calL$.
    \end{proof}

    \section{Classification of quasi-2D self-adjoint unitaries}\label{sec:classification of quasi 2D SAUs}
    We denote the space of all self-adjoint unitaries (SAUs henceforth) by $\calSU$ (in \cite{ChungShapiro2023} the symbol for this set was $\calS$ but since here we shall refer also to self-adjoint invertibles we make this modification). A SAU $U$ is termed non-trivial iff $\dim\ker (U\pm \Id)$ are infinite-dimensional, i.e., $\sigmaess(U)=\Set{\pm 1}$. Denote the space of all self-adjoint invertible operators (SAIs henceforth) by $\calSG$. Define the $\sgn:\RR\to\RR$ function via \eq{
        \sgn(\lambda) := \begin{cases}
            1 & \lambda > 0 \\
            -1 & \lambda < 0 \\
            0 & \lambda = 0
        \end{cases}\,.
    }  
    Clearly if $G$ is a SAI then $\sgn(G)$ is a SAU. A SAI $G$ is termed non-trivial iff $\sgn(G)$ is non-trivial.
    
    As before in \cite[Section 4]{ChungShapiro2023}, we focus on $\Lambda$-non-trivial SAUs, but in the present context, $\Lambda$ is not a single projection operator, but rather, a fixed non-trivial decomposition of $\Id$ as in the foregoing sections. To that end, let $\pi$ be the $\ast$-homomorphism from $\calB(\calH)$ to the Calking algebra. We define
    \begin{defn}[$\Lambda$-non-trivial operators]\label{def:Lambda NT} 
    Let $A\in\calB(\calH)$ and $A_j:=\Lambda_jA\Lambda_j:\im\Lambda_j\to\im\Lambda_j$. If $A_j$ is self-adjoint and Fredholm, then $A$ is termed $\Lambda$-non-trivial iff 
    \eql{\label{eq:Lambda NT}
        \sigma(\sgn(\pi(A_j)))=\Set{\pm 1},\qquad (j\in\calJ)\,.
    }
    \end{defn}

    \begin{rem}\label{rem:local SAI Fredholm}
    Let $\sharp\in\Set{\mathrm{I,II}}$. If $G\in\calSG\cap \calL_\sharp$, then $G_j$ is self-adjoint and Fredholm. Indeed, we have \eq{
        (\Lambda_j G\Lambda_j) (\Lambda_j G^{-1}\Lambda_j) -\Lambda_j = \Lambda_j[G,\Lambda_j]G_j^{-1}\Lambda_j + \Lambda_j - \Lambda_j \in\calK 
    }
    and hence $G_jG_j^{-1}-\Id_{\im\Lambda_j}\in\calK$; similarly $G_j^{-1}G_j-\Id_{\im\Lambda_j}\in\calK$. Therefore, it makes sense to talk about $\Lambda$-non-triviality for operators in $\calSG\cap \calL_\sharp$. Thus, for $\sharp\in\Set{\mathrm{I,II}}$, we will denote $\calSG^{\calL_\sharp}_\LamNT$ to be the space of operators in $\calSG\cap\calL_\sharp$ that are $\Lambda$-non-trivial. We also let $\calSU^{\calL_\sharp}_\LamNT$ to be the space of operators in $\calSU\cap\calL_\sharp$ that are $\Lambda$-non-trivial. We let $\calSG^\calL:=\calSG\cap\calL$ and $\calSU^\calL:=\calSU\cap\calL$.
    \end{rem}
    
    \begin{rem}[Comparison with the notion of $\Lambda$-non-triviality in \cite{ChungShapiro2023}]
        It may be worthwhile to pause briefly and comment on how this notion of $\Lambda$-non-triviality differs from that presented in \cite{ChungShapiro2023}. There, $\Lambda$ is a \emph{single, fixed} non-trivial projection and a SAU $U$ was defined as $\Lambda$-non-trivial iff there were a SAU $V$ which has the following three properties: (1) $[\Lambda,V]=0$, (2) $\Lambda V \Lambda:\im(\Lambda)\to\im(\Lambda), \Lambda^\perp V \Lambda^\perp:\im(\Lambda^\perp)\to\im(\Lambda^\perp)$ are both non-trivial SAUs, and (3) $U-V\in\calK$. 
        
        This third condition automatically implies that a $\Lambda$-non-trivial SAU is $\Lambda$-local, since it says two things: both that the off-diagonal elements $\Lambda U \Lambda^\perp$ are compact (locality) and that the diagonal elements $\Lambda U \Lambda, \Lambda^\perp U \Lambda^\perp$ are compactly away from a non-trivial element, and are hence themselves non-trivial. That double-statement was a convenient way, there, to phrase $\Lambda$-non-triviality.

        Here, however, since we have various different notions of locality, it makes sense to disentangle non-triviality from locality and hence, we emphasize, that \cref{def:Lambda NT} does \emph{not} imply locality.

        In analogy, to disentangle the definition given in \cite{ChungShapiro2023} one would make the following distinction:

        A SAU $U$ is $\Lambda$-non-trivial iff \eq{\sigmaess(\Lambda^\sharp U \Lambda^\sharp:\im(\Lambda^\sharp)\to\im(\Lambda^\sharp))=\Set{\pm1}\qquad(\sharp\in\Set{\text{nothing},\perp})
        } and as expected, a SAU $U$ is $\Lambda$-local iff $[U,\Lambda]\in\calK$. The two together hold iff the three conditions above do.
        
    \end{rem}

    \begin{rem}\label{rem:non-triviality when A_j is invertible}
        Let $A\in\calB(\calH)$ be an operator such that $A_j$ is self-adjoint and Fredholm. Suppose $A$ is $\Lambda$-non-trivial. If $A_j$ is, in fact, invertible, then we have \eq{
            \sigmaess(\sgn(A_j)) = \Set{\pm 1}\,.
        }
        I.e., $A_j$ is non-trivial. Indeed, since $A_j$ is invertible, the $\sgn$ function becomes continuous on $\sigma(A_j)$, and we have \eq{
            \sigmaess(\sgn(A_j)) \equiv \sigma(\pi(\sgn(A_j))) = \sigma(\sgn(\pi (A_j))) = \Set{\pm 1}
        }
        where we have used \cite[Prop. 4.4.7]{kadison_fundamentals_1997} in the third equality and the $\Lambda$-non-triviality on the last equality.
    \end{rem}

    \begin{thm}\label{thm:pi0 of quasi-2D self-adjoint unitaries}
    The space $\calSU^{\calL_\sharp}_\LamNT$ is path-connected for $\sharp\in\Set{\mathrm{I,II}}$.
    \end{thm}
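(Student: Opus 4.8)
The plan is to mirror the proof of \cref{thm:pi0 of quasi-2D unitaries}; the essential new feature is that the block "index" of a self-adjoint operator is a self-adjoint Fredholm index, which always vanishes, so one expects no obstruction to connectivity. First I would prove a lifting lemma in the spirit of \cref{lem:lift invertible local path to unitary local path}: if two elements of $\calSU^{\calL_\sharp}_\LamNT$ are joined by a path $\gamma_t$ inside $\calSG^{\calL_\sharp}_\LamNT$, then they are joined by a path inside $\calSU^{\calL_\sharp}_\LamNT$. The construction applies the $\sgn$ functional calculus to $\gamma_t$: since $\calL_\sharp$ is a C-star algebra (\cref{lem:C-star algebraic structure}) we have $\sgn(\gamma_t)\in\calL_\sharp$; since $\gamma([0,1])$ is compact it is uniformly bounded below, so $t\mapsto\sgn(\gamma_t)$ is norm-continuous; and $\Lambda$-non-triviality is preserved because $\gamma_t\in\calL_\sharp$ forces $[\gamma_t,\Lambda_j]\in\calK$ (Type II implies Type I), hence $\pi(\gamma_t)$ commutes with $\pi(\Lambda_j)$ in the Calkin algebra, and in the corner $\pi(\Lambda_j)\Calkin{\calH}\pi(\Lambda_j)$ one computes $\sgn\bigl(\pi(\Lambda_j\sgn(\gamma_t)\Lambda_j)\bigr)=\sgn(\pi(\gamma_t))\pi(\Lambda_j)=\sgn\bigl(\pi(\Lambda_j\gamma_t\Lambda_j)\bigr)$, which has spectrum $\Set{\pm1}$ by hypothesis. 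So it suffices to connect an arbitrary $U\in\calSG^{\calL_\sharp}_\LamNT$ to a fixed canonical element within $\calSG^{\calL_\sharp}_\LamNT$; the same corner computation will be reused to verify that each homotopy below preserves $\Lambda$-non-triviality, since they alter blocks only by compacts or by functional calculi that commute with $\sgn$.

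Next I would reduce the Type II case to the Type I case exactly as in \cref{thm:pi0 of quasi-2D unitaries}: writing $U=\bbDelta U+K$ with $K$ compact and using \cref{lem:compact operators}(1) to peel off a tail, one deforms (by a short straight-line homotopy staying invertible) $U\leadsto\bigl(\bigoplus_{j>m}U_j\bigr)\oplus PUP$ with $P=\sum_{j\le m}\Lambda_j$. The tail blocks $U_j$ ($j>m$) are then invertible and self-adjoint, hence non-trivial SAIs by \cref{rem:non-triviality when A_j is invertible}, so each can be retracted (diagonally) to $\sgn(U_j)$ and afterwards $\bigoplus_{j>m}\sgn(U_j)$ can be carried to a fixed diagonal SAU by conjugating along a norm-continuous path in the unitary group $\prod_{j>m}\Unitaries{\im\Lambda_j}$ of the von Neumann algebra $\bigoplus_{j>m}\BLO{\im\Lambda_j}$, whose unitary group is norm-path-connected. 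The finite piece $PUP$ lives on a Hilbert space carrying only finitely many blocks, where Type II and Type I coincide (\cref{rem:difference between LI and LII}), so it is handled by the Type I argument.

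For Type I, given $U\in\calSG^{\calL_\mathrm{I}}_\LamNT$ I would first retract $U\leadsto\sgn(U)$ and thus assume $U$ is a SAU. Then $U_j^2=\Lambda_j-(\text{compact})$ (from $U^2=\Id$ and $[U,\Lambda_j]\in\calK$), so $|U_j|=\Lambda_j-(\text{compact})$ and the polar decomposition yields $U_j=\sgn(U_j)\,|U_j|=\sgn(U_j)+(\text{compact})$, where $\sgn(U_j)$ is a self-adjoint partial isometry with finite-dimensional kernel $\ker U_j$; extending $\sgn(U_j)$ by $\Id$ on $\ker U_j$ gives a genuine SAU $G_j$ on $\im\Lambda_j$ with $U_j=G_j+(\text{compact})$ and $\sigmaess(G_j)=\Set{\pm1}$ (since $\pi(G_j)=\sgn(\pi(U_j))$ and $U$ is $\Lambda$-non-trivial). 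Put $G:=\bigoplus_j G_j\in\calD$, a SAU on $\calH$, so $U=G+A$ with $A:=U-G\in\calK_\mathrm{I}$, and take the canonical target $V:=\bigoplus_j V_j$ for fixed non-trivial SAUs $V_j$ on $\im\Lambda_j$. As $G,V$ are blockwise unitarily equivalent, $G\leadsto V$ by conjugation along a norm-continuous path in $\prod_j\Unitaries{\im\Lambda_j}$, which stays in $\calD\subseteq\calL_\mathrm{I}$ and preserves self-adjointness and non-triviality. To kill $A$: choose an orthonormal basis of each $\im\Lambda_j$ consisting of eigenvectors of $G_j$ ordered with alternating signs, apply \cref{lem:compact operators}(2) to $A$ in this basis to get $R=\bigoplus_j R_j$ with $R_j$ a finite sum of the $E^j_k$ (hence $[R_j,G_j]=0$, so $[R,G]=0$, and with $G_j$ still restricting to a non-trivial SAU on $\im R_j^\perp$), $\norm{A-RAR}<\ve$, and $RTR\in\calL_\mathrm{I}$ for all $T$. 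A straight-line homotopy deforms $U=G+A\leadsto G+RAR$, and since $[R,G]=0$ the endpoint splits as $R^\perp GR^\perp\oplus RUR$ with respect to $\im R^\perp\oplus\im R$, where $R^\perp GR^\perp=\bigoplus_j G_j|_{\im R_j^\perp}$ is diagonal and $RUR$ is a self-adjoint invertible on $\im R$ that (as $RTR\in\calL_\mathrm{I}$ always) may be freely retracted to $\sgn(RUR)$.

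The genuine obstacle, I expect, is the final step: recombining the finite-dimensional corners $\im R_j$ with the infinite pieces $\im R_j^\perp$ and standardizing, i.e. deforming $\bigl(\bigoplus_j G_j|_{\im R_j^\perp}\bigr)\oplus\sgn(RUR)$ to $V$ inside $\calSG^{\calL_\mathrm{I}}_\LamNT$. One must absorb the residual off-diagonal (in $j$) part of $\sgn(RUR)$, supported on the locally finite subspace $\im R=\bigoplus_j\im R_j$, into the infinite blocks by a rotation respecting Type I locality; the point making this possible --- the analogue of the vanishing-index hypothesis in the connected case of \cref{thm:pi0 of quasi-2D unitaries} --- is that every block $\Lambda_j(\cdot)\Lambda_j$ of a self-adjoint operator is self-adjoint Fredholm and hence of index $0$, so no $\ZZ$-valued data remains to match. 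Equivalently, this last paragraph can be run in the language of orthogonal projections via $P=\tfrac12(\Id-U)$, connecting $P$ to a canonical non-trivial projection through the homotopy classification of projections that are compact perturbations of one another, where again the relevant index is forced to vanish. I would make sure along the way that "canonical" is chosen uniformly (the same $V_j$ on every block, identified via fixed unitaries), so that connecting an arbitrary $U$ and an arbitrary $W$ is achieved by connecting each to this common $V$.
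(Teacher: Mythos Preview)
Your proposal is correct and follows essentially the same strategy as the paper: lift from $\calSG$ to $\calSU$ via $\sgn$, reduce Type~II to Type~I by peeling off a compact tail with \cref{lem:compact operators}(1), and for Type~I write $U=G+A$ with $G\in\calD$ a block-diagonal non-trivial SAU and $A\in\calK_\mathrm{I}$, then approximate $A$ by $RAR$ via \cref{lem:compact operators}(2) and standardize. The paper merely reorders the operations slightly, conjugating $G$ to the canonical $X$ of \cref{eq:canonical operator} \emph{before} approximating, so that the split reads $P^\perp XP^\perp\oplus P(X+B)P$.

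The step you flag as the ``genuine obstacle'' is not one, and you already have the ingredient that dissolves it: since $RTR\in\calL_\mathrm{I}$ for \emph{every} $T\in\calB$, any path of SAIs on $\im R$ (extended by $R^\perp GR^\perp$ on $\im R^\perp$) stays in $\calSG^{\calL_\mathrm{I}}$. Hence you may deform $\sgn(RUR)$ on $\im R$ to \emph{any} SAU with the same $\pm1$ eigenspace dimensions via \cref{lem:path between SAIs}, in particular to one that is block-diagonal with respect to $\im R=\bigoplus_j\im R_j$. After this move the full operator lies in $\calD$, each $j$-block is a non-trivial SAU (your alternating-sign ordering of the eigenbasis guarantees $G_j|_{\im R_j^\perp}$ remains non-trivial), and block-by-block standardization to $V$ is immediate. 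This is precisely the content of the paper's Steps~3--4; no ``absorption by rotation'' across blocks is needed.
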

    \begin{proof}
    Throughout the proof, we fix an ONB $\{\vf^j_k\}_{k=1}^\infty$ for each $\im\Lambda_j$ and let $E^j_k=\vf^j_k\otimes (\vf^j_k)^\ast$.
    
    We first prove that $\calSU^{{\calL_{\mathrm{I}}}}_\LamNT$ is path-connected. The proof consists of five steps:
    \begin{itemize}
        \item \emph{Step 1}: Let $U\in \calSU^{\calL_\mathrm{I}}_\LamNT$. Define the canonical operator $X$ to be \eql{\label{eq:canonical operator}
            X\vf^j_k = (-1)^{k}\vf^j_k,\qquad (j\in\calJ,k\in\NN)\,.
        }
        Then $X\in\calSU^{{\calL_{\mathrm{I}}}}_\LamNT$. We will deform $U\leadsto X+B$ within $\calSG^{{\calL_{\mathrm{I}}}}$ where $B\in\calK_I$.
        \item \emph{Step 2}: There exists a projection $P$ of the form $P=\sum_{j=1}^\infty \sum_{k=1}^{n_j} E_k^j$ for some $n_j\in\NN$. We will deform $X+B\leadsto P^\perp X P^\perp + P(X+B)P$ within $\calSG^{{\calL_{\mathrm{I}}}}$.
        \item \emph{Step 3}: Construct $PYP:\im P\to \im P$ that commutes with $E_k^j$ for $j\in\calJ$ and $1\leq k\leq n_j$ and satisfies \eql{\label{eq:dimensions of PYP and P(X+B)P}
            \dim\ker(\sgn(PYP)\pm \Id_{\im P}) = \dim\ker(\sgn(P(X+B)P\pm \Id_{\im P}:\im P\to\im P))\,.
        } We will deform $P(X+B)P\leadsto Y$ within $\calSG(\im P)$.
        \item \emph{Step 4}: Deform $P^\perp XP^\perp+PYP\leadsto X$ within $\calSG^{{\calL_{\mathrm{I}}}}$. 
        \item \emph{Step 5}: Lift the paths from $\calSG^{{\calL_{\mathrm{I}}}}$ to $\calSU^{{\calL_{\mathrm{I}}}}$.
    \end{itemize}

    \paragraph{Step 1}
    Let $U\in \calSU^{\calL_\mathrm{I}}_\LamNT$. The operator $U_j$ is Fredholm by \cref{rem:local SAI Fredholm} and it has $\findex U_j=0$ by sef-adjointness. In particular, we can write \eql{\label{eq:U_j is Fredholm zero}
        U_j=G_j+K_j \qquad(j\in\calJ)
    } for some non-trivial SAI $G_j$ and some $K_j\in\calK$ (indeed, $G_j$ may be chosen not only invertible but also self-adjoint). The non-triviality of $G_j$ comes from the fact that $U$ is $\Lambda$-non-trivial. Indeed \eq{
        \sigmaess(\sgn(G_j)) \equiv \sigma(\pi(\sgn(G_j)))=\sigma(\sgn(\pi(G_j)))=\sigma(\sgn(\pi(U_j))) = \Set{\pm 1}\,.
    }
    See also \cref{rem:non-triviality when A_j is invertible}. Let $G:=\bigoplus_j G_j$ and also define $A:=U-G$. We remark that $A$ is not necessarily compact, but it is in fact in $\calK_I$ (recall that space from \cref{prop:quasi compcat}). Indeed, we have $A=U-G\in{\calL_\mathrm{I}}$. Thus $\Lambda_j^\perp A\Lambda_j$ and $\Lambda_jA\Lambda_j^\perp$ are compact. It follows from \eq{
        \Lambda_jA\Lambda_j=K_j\in\calK\tag{Using \cref{eq:U_j is Fredholm zero}}
    } that $\Lambda_j A=\Lambda_jA\Lambda_j^\perp + \Lambda_jA\Lambda_j \in\calK$
    and similarly for $A\Lambda_j\in\calK$. Let $X$ be a diagonal operator \cref{eq:canonical operator}. It is clear by construction that $X$ is a $\Lambda$-non-trivial SAU. 
    Since $G_j$ and $X_j$ are both non-trivial SAIs, using \cref{lem:path between SAIs} right below, there exists SAIs $F_j$ such that $G_j=F_j^*X_jF_j$. Let $F$ be the direct sum of $F_j$. Let $B=(F^\ast)^{-1}AF^{-1}$. Then
    \eq{
        U=G+A = F^\ast (X+B) F \leadsto X+B
    }
    where we deform each $F_j\leadsto \Id$ within invertible operators. Since $[F,\Lambda_j]=0$, it follows that the path lies in $\calSG^{\calL_\mathrm{I}}$. We have $\Lambda_jB,B\Lambda_j\in\calK$ since $B\in{\calL_\mathrm{I}}$ and $\Lambda_jB\Lambda_j=(F^\ast)^{-1}\Lambda_jA\Lambda_jF^{-1}\in\calK$, i.e., $B\in\calK_I$ and hence we deformed $U$ into the space $X+\calK_I$.

    \paragraph{Step 2}
    Using \cref{lem:compact operators}, there exists a projection $P$ of the form $P=\sum_{j=1}^\infty \sum_{k=1}^{n_j} E_k^j$ for some $n_j\in\NN$ such that $\norm{B-PBP}<\ve$ and \eql{\label{eq:PTP}
        PTP \in {\calL_{\mathrm{I}}} \qquad(T\in\calB)\,.
    } 
    Using the fact that the space of invertible operators is open, for $\ve$ sufficiently small, we can use the straight-line homotopy to deform $X+B$ to $X+PBP$ within $\calSG^{\calL_\mathrm{I}}$. In particular, since $X$ is diagonal, we have \eq{X+PBP=P^\perp X P^\perp + P(X+B)P\,.} 
    Thus $P(X+B)P$ is a SAI on $\im P$.
    
    \paragraph{Step 3}
    Construct $PYP:\im P\to \im P$ that commutes with $E_k^j$ for $j\in\calJ$ and $1\leq k\leq n_j$, and satisfies \cref{eq:dimensions of PYP and P(X+B)P}. Indeed, we can define $PYP \vf_k^j = y_k^j\vf_k^j$ where $y_k^j\in\Set{\pm1}$ such that $\#\Set{y_k^j|y_k^j=\pm 1}$ is equal to the expression in \cref{eq:dimensions of PYP and P(X+B)P}. This condition allows us to use \cref{lem:path between SAIs} to deform $P(X+B)P\leadsto PYP$ within $\calSG(\im P)$.

    \paragraph{Step 4}
    We want to deform $P^\perp XP^\perp Y + PYP$ to $X$. Let $Z_j:=\Lambda_j(P^\perp XP^\perp Y + PYP)\Lambda_j:\im\Lambda_j\to\im\Lambda_j$. Since $\Lambda_jP=P\Lambda_j=\sum_{k=1}^{n_j}E_k^j$ consists of finitely many rank-one projections, and by the non-trivial construction of $X_j$ in \cref{eq:canonical operator}, it follows that $Z_j$ is a non-trivial SAU. Therefore, we can deform $Z_j\leadsto X_j$ within $\calSG(\im \Lambda_j)$ using \cref{lem:path between SAIs}. The direct sum of these paths deforms $P^\perp XP^\perp Y + PYP$ to $X$ within $\calSU^{{\calL_{\mathrm{I}}}}$. The Type-I locality is preserved since the path commutes with $\im\Lambda_j$ for all $j\in\calJ$.

    \paragraph{Step 5}
    Let $\gamma_t:U\leadsto X$ be the path that combines the paths constructed in the previous paragraphs. So far, we have $\gamma_0=U\in\calSU^{{\calL_{\mathrm{I}}}}_\LamNT$ and $\gamma_t\in\calSG^{{\calL_{\mathrm{I}}}}$. We can use \cref{lem:lift SAI local path to SAU local path} to lift the path $\gamma_t$ to one that lies in $\calSU^{{\calL_{\mathrm{I}}}}_\LamNT$. This concludes the proof for the path-connectedness of $\calSU^{{\calL_{\mathrm{I}}}}_\LamNT$.

    We now show that $\calSU^{{\calL_\mathrm{II}}}_\LamNT$ is path-connected. Let $U\in \calSU^{{\calL_\mathrm{II}}}$. We can write $U=\bbDelta U+K$ and $K\in\calK$. Since $K$ is compact, using \cref{lem:compact operators}, for any $\ve>0$ there is $m$ sufficiently large so that with $P:=\sum_{j=1}^m\Lambda_j$ such that $\norm{K-PKP}<\ve$. If we choose $\ve$ small enough, then by the openness of the space of invertible operators, we can deform $U\leadsto \bbDelta U +PAP$ in $\calSG^{{\calL_\mathrm{II}}}$ via the straight-line homotopy (since the sum of two local operators is local). Since $\bbDelta U$ commutes with $\Lambda_j$, we have
    \eq{
        \bbDelta U+PAP = P^\perp \bbDelta U P^\perp + P(\bbDelta U +A)P\,.
    }
    This implies that $P^\perp \bbDelta U P^\perp\in \calSG^{{\calL_\mathrm{II}}}$ and $(\bbDelta U)_j:=\Lambda_j\bbDelta U\Lambda_j:\im\Lambda_j\to\im\Lambda_j$ is a non-trivial SAI for each $j\geq m+1$. Zooming into the subspace $\im P$, we see that $P(\bbDelta U+A)P\in\calSG^{\calL_\mathrm{I}}$ where the locality is with respect to the decomposition $\Id_{\im P}=\sum_{j=1}^m\Lambda_j$. We can thus use the result for $\calSG^{\calL_\mathrm{I}}$ before (the theorem is stated for $\calSU^{\calL_\mathrm{I}}$ but the proof works just as well for $\calSG^{\calL_\mathrm{I}}$), and deform $P(\bbDelta U+A)P$ to $PXP$ for some $X$ that commutes with $\Lambda_j$ and has each $X_j$ being non-trivial SAI. The path stays type-II local since the type-II locality coincides with type-I locality when there are finitely many $\Lambda_j$'s.
    In summary, we have deformed $U$ to a $\Lambda$-non-trivial SAI $Y$ within $\calSG^{{\calL_\mathrm{II}}}$ such that $[Y,\Lambda_j]=0$; such operators are path-connected. We can lift this path in $\calSG^{{\calL_\mathrm{II}}}$ to a path in $\calSU^{{\calL_\mathrm{II}}}$ using \cref{lem:lift SAI local path to SAU local path}.
    
    \end{proof}

    \begin{lem}\label{lem:path between SAIs}
        Let $A$ and $B$ be SAIs. If there is a continuous path $A\leadsto B$ within SAIs then either both $A$ and $B$ are non-trivial, or \eql{\label{eq:same dimensions of pm1 eigenspaces}
        \dim\ker (\sgn(A)\pm \Id)=\dim\ker (\sgn(B)\pm \Id)\,.
        }
        The converse is true and we can write $A=G^\ast B G$ for some invertible operator $G$.
    \end{lem}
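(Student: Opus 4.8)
The plan is to read this as an infinite-dimensional Sylvester law of inertia. Recall that for a SAI $C$ the spectrum avoids $0$, so $\sgn(C)$ is a well-defined SAU via the continuous functional calculus; writing $P_+(C),P_-(C)$ for the complementary spectral projections of $C$ onto its positive and negative parts, we have $\ker(\sgn(C)\mp\Id)=\im P_\pm(C)$. Thus the data attached to $C$ is the pair of Hilbert-space dimensions $\bigl(\dim\im P_+(C),\dim\im P_-(C)\bigr)$, and $C$ is non-trivial precisely when both entries equal $\aleph_0$. The two implications of the lemma then amount to: (i) this pair is a homotopy invariant of SAIs, and (ii) conversely a SAI is determined, up to congruence $A\mapsto G^\ast A G$ and up to a path of SAIs, by this pair.

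For the forward implication I would argue homotopy invariance directly. Given a norm-continuous path $t\mapsto A_t$ of SAIs on $[0,1]$, the inversion map $t\mapsto A_t^{-1}$ is norm-continuous, so $\delta:=\inf_{t}\norm{A_t^{-1}}^{-1}>0$ by compactness of $[0,1]$, and every $\sigma(A_t)$ lies in the fixed compact set $K:=\Set{\lambda\in\RR : \delta\leq|\lambda|\leq\sup_t\norm{A_t}}$. Since $\sgn$ is continuous (indeed locally constant) on $K$, uniform polynomial approximation of $\sgn|_K$ gives that $t\mapsto\sgn(A_t)$ is norm-continuous, hence so is the path of orthogonal projections $t\mapsto P_+(A_t)$. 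Because orthogonal projections at norm-distance strictly less than $1$ are unitarily equivalent, a finite subdivision of $[0,1]$ into pieces on which $P_+$ stays that close shows that $\dim\im P_+(A_t)$ and $\dim\im P_-(A_t)$ are constant in $t$; thus $\dim\im P_\pm(A)=\dim\im P_\pm(B)$, which is exactly \cref{eq:same dimensions of pm1 eigenspaces}, and a fortiori the stated dichotomy holds.

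For the converse, suppose either both $A,B$ are non-trivial or \cref{eq:same dimensions of pm1 eigenspaces} holds; in either case $\dim\im P_+(A)=\dim\im P_+(B)$ and $\dim\im P_-(A)=\dim\im P_-(B)$ (all four equal to $\aleph_0$ in the non-trivial case). I would then choose orthonormal bases of $\im P_+(A)$ and $\im P_+(B)$ in bijection, likewise for the negative parts, and assemble them into a unitary $W$ with $W\,\sgn(A)\,W^\ast=\sgn(B)$, i.e.\ $\sgn(A)=W^\ast\sgn(B)W$. Using the functional-calculus identities $A=|A|^{1/2}\sgn(A)|A|^{1/2}$ and $\sgn(B)=|B|^{-1/2}B|B|^{-1/2}$ (with $|A|^{1/2},|B|^{\pm1/2}$ positive and invertible) gives $A=|A|^{1/2}W^\ast|B|^{-1/2}\,B\,|B|^{-1/2}W|A|^{1/2}=G^\ast B G$ with $G:=|B|^{-1/2}W|A|^{1/2}$, a composition of invertibles and hence invertible. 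Finally, since the group of invertible operators on the underlying Hilbert space is path-connected, I would fix a path $G_t$ from $G$ to $\Id$; then $t\mapsto G_t^\ast B G_t$ is a norm-continuous path of SAIs realizing $A\leadsto B$.

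The routine but load-bearing points are the norm-continuity of $t\mapsto\sgn(A_t)$ — which is precisely what the uniform lower bound $\delta>0$ on $\dist(0,\sigma(A_t))$ buys — and the dimension bookkeeping: in infinite dimensions $\dim\im P_+$ does not determine $\dim\im P_-$, so one genuinely needs the hypothesis to control both, which is the role played by the "$\pm$" in \cref{eq:same dimensions of pm1 eigenspaces} and, alternatively, by the non-triviality clause (which forces all four dimensions to be $\aleph_0$). The main obstacle, such as it is, is organizing these two observations cleanly; everything else is standard continuous functional calculus together with the path-connectedness of the invertible group.
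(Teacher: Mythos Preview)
Your proof is correct and follows essentially the same route as the paper's. For the forward direction the paper simply cites \cite[Proposition 2.2.6]{rordam2000introduction} to pass from the path $\sgn(\gamma_t)$ to a unitary $W$ with $W^\ast\sgn(B)W=\sgn(A)$, whereas you unpack this more explicitly via the uniform spectral gap and the ``projections at distance $<1$ are unitarily equivalent'' argument; the converse direction and the construction $G=|B|^{-1/2}W|A|^{1/2}$ followed by a Kuiper path $G\leadsto\Id$ are identical to the paper's.
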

    \begin{proof}
    Suppose $A$ and $B$ be SAIs and there is a continuous path $\gamma_t:A\leadsto B$ within SAIs. Then $\sgn(\gamma_t)$ is a path deforming $\sgn(A)\leadsto \sgn(B)$ in SAUs. It follows from \cite[Proposition 2.2.6]{rordam2000introduction} that since $\calB$ is a C-star algebra, there is a unitary $W$ such that $W^\ast \sgn(B) W=\sgn(A)$. Thus $\dim\ker (\sgn(B)\pm \Id) = \dim\ker(\sgn(A)\pm\Id)$. 

    Conversely, suppose $A$ and $B$ are either both non-trivial, or satisfy \cref{eq:same dimensions of pm1 eigenspaces}. Write $A=\sgn(A)|A|$ and $B=\sgn(B)|B|$. In either case, there exists a unitary operator $W$ such that $\sgn(A)=W^\ast \sgn(B) W$. In particular, we have
    \eq{
        A=|A|^{1/2}\sgn(A)|A|^{1/2} = |A|^{1/2}W^\ast \sgn(B) W|A|^{1/2} = |A|^{1/2}W^\ast |B|^{-1/2}B|B|^{-1/2} W|A|^{1/2}\,.
    }
    If we let $G=|B|^{-1/2} W|A|^{1/2}$, then we have $A=G^\ast BG$. Using $G\leadsto \Id$ within the space of invertible operators, we have a continuous path in $\calSG$ deforming $A\leadsto B$.
    \end{proof}
    
    \begin{lem}\label{lem:lambda-non-triviality preserves in continuous path} 
        Let $\sharp\in\Set{\mathrm{I,II}}$. If $A\in\calSG_{\LamNT}^{\calL_\sharp}$ and $B\in\calSG^{\calL_\sharp}$ and there is a path $A\leadsto B$ within $\calSG^{\calL_\sharp}$, then $B\in \calSG_\LamNT^{\calL_\sharp}$.
    \end{lem}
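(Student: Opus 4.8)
The plan is to transport the question, block by block, into the Calkin algebra and to exploit that "being a nontrivial symmetry" is a clopen condition there. Fix $j\in\calJ$ and let $\pi_j$ denote the quotient onto the Calkin algebra $\calQ_j$ of $\im\Lambda_j$. By \cref{rem:local SAI Fredholm}, for every $t$ the block $(\gamma_t)_j:=\Lambda_j\gamma_t\Lambda_j$ is a self-adjoint Fredholm operator on $\im\Lambda_j$, so $s_t:=\pi_j\bigl((\gamma_t)_j\bigr)$ is a self-adjoint invertible element of $\calQ_j$. Since $t\mapsto\gamma_t$ is norm-continuous, the compression $T\mapsto\Lambda_jT\Lambda_j$ is norm-contractive, and $\pi_j$ is norm-contractive, the path $t\mapsto s_t$ is norm-continuous in the self-adjoint invertibles of $\calQ_j$; composing with the norm-continuous map $a\mapsto\sgn(a)=a\,|a|^{-1}$ on self-adjoint invertibles yields a norm-continuous path $t\mapsto\sgn(s_t)$ of symmetries (self-adjoint unitaries) in $\calQ_j$.

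The one point requiring a little care is the following elementary fact, which I would record next: the set $\calN_j$ of symmetries $s\in\calQ_j$ with $\sigma(s)=\Set{\pm1}$ is clopen inside the space of all symmetries of $\calQ_j$. Indeed, the complement is $\Set{\Id,-\Id}$, and each of these is an isolated point: if $s=s^\ast=s^{-1}$ and $\norm{s\mp\Id}<2$ then, since $\norm{s\mp\Id}$ equals $\max_{\lambda\in\sigma(s)}|\lambda\mp1|$, we must have $-1\notin\sigma(s)$ (resp.\ $+1\notin\sigma(s)$), forcing $s=\pm\Id$. Hence $\Set{\Id}$ and $\Set{-\Id}$ are open, and being a finite set they are also closed, so $\calN_j$ is open and closed.

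With these two observations the proof is a connectedness argument. The path $t\mapsto\sgn(s_t)$ is continuous, and at $t=0$ we have $\gamma_0=A\in\calSG^{\calL_\sharp}_\LamNT$, so $\sgn(s_0)=\sgn(\pi(A_j))\in\calN_j$ by \cref{def:Lambda NT}. Since $\calN_j$ is clopen, the preimage of $\calN_j$ under $t\mapsto\sgn(s_t)$ is a nonempty clopen subset of $[0,1]$, hence equals $[0,1]$; in particular $\sgn(s_1)=\sgn(\pi(B_j))\in\calN_j$, i.e.\ $\sigma(\sgn(\pi(B_j)))=\Set{\pm1}$. As $j\in\calJ$ was arbitrary and each $B_j$ is self-adjoint and Fredholm (again \cref{rem:local SAI Fredholm}, using $B\in\calSG^{\calL_\sharp}$), we conclude $B\in\calSG^{\calL_\sharp}_\LamNT$.

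There is no serious obstacle here; the content is entirely soft functional analysis. The only thing worth emphasizing is \emph{why} one must pass to the Calkin algebra instead of tracking, say, the dimension of the negative eigenspace of $(\gamma_t)_j$: that dimension is not a homotopy invariant, as it can jump via spectral flow along a path of self-adjoint Fredholm operators, whereas the property $\sgn(\pi((\gamma_t)_j))\neq\pm\Id$ is precisely the essential-spectrum version of non-triviality that is stable along such a path.
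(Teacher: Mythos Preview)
Your proof is correct and follows the same overall route as the paper: pass block by block to the Calkin algebra of $\im\Lambda_j$, note that $t\mapsto\sgn(\pi((\gamma_t)_j))$ is a norm-continuous path of self-adjoint unitaries there, and conclude that $\Lambda$-non-triviality is preserved at the endpoint. The only difference is in the last step: the paper invokes \cite[Proposition~2.2.6]{rordam2000introduction} to get a unitary $u$ in the Calkin algebra with $u^\ast\sgn(\pi(B_j))u=\sgn(\pi(A_j))$, and then reads off that $\pm1$ remain in the spectrum; you instead observe directly that among symmetries in any unital C*-algebra the set $\{s:\sigma(s)=\{\pm1\}\}$ is clopen (its complement being the two isolated points $\pm\Id$) and run a connectedness argument. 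Your variant is slightly more elementary and self-contained, while the paper's buys a stronger conclusion (unitary equivalence of the two symmetries) that is not actually needed here.
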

    \begin{proof}
        Let $\sharp\in\Set{\mathrm{I,II}}$, and $A\in\calSG^{\calL_\sharp}_\LamNT$ and $B\in\calSG^{\calL_\sharp}$ and $\gamma_t:A\leadsto B$ in $\calSG^{\calL_\sharp}$. Let $\pi$ be the projection from $\calB$ onto the Calkin C-star algebra. Let $A_j,B_j,(\gamma_j)_t$ be operators $\Lambda_j A\Lambda_j,\Lambda_jB\Lambda_j,\Lambda_j\gamma_t\Lambda_j$ viewed as operators $\im\Lambda_j\to\im\Lambda_j$. Since $A,B,\gamma_t\in\calSG^{\calL_\sharp}$, it follows that $A_j,B_j$ and $(\gamma_j)_t$ are Fredholm. Thus, $\pi(A_j),\pi(B_j),\pi((\gamma_j)_t)$ are SAIs with $\pi((\gamma_j)_t)$ deforming $\pi(A_j)\leadsto \pi(B_j)$. In particular, $\sgn(\pi((\gamma_j)_t))$ is a continuous path (of SAUs in the Calkin algebra) that deforms $\sgn(\pi(A_j))\leadsto \sgn(\pi(B_j))$. We use again \cite[Proposition 2.2.6]{rordam2000introduction} on the Calkin C-star algebra to get a unitary element $u$ in the Calkin algebra such that $u^*\sgn(\pi(B_j)) u=\sgn(\pi(A_j))$. Since $\pm 1$ lies in the spectrum of $\sgn(\pi(A_j))$, it follows that $\pm 1$ is present in the spectrum of $\sgn(\pi(B_j))$.

    \end{proof}

    \begin{lem}\label{lem:lift SAI local path to SAU local path}
    Let $\sharp\in\Set{\mathrm{I,II}}$. Suppose $U,V\in \calSU^{\calL_\sharp}_\LamNT$ and we have a path $U\leadsto V$ within $\calSG^{\calL_\sharp}$. Then there is a path $U\leadsto V$ within $\calSU^{\calL_\sharp}_\LamNT$.
    \end{lem}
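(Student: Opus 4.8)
The plan is to imitate the proof of \cref{lem:lift invertible local path to unitary local path}, replacing the polar retraction by the $\sgn$-retraction, and then to invoke \cref{lem:lambda-non-triviality preserves in continuous path} to propagate $\Lambda$-non-triviality along the lifted path. Concretely, let $\gamma_t\in\calSG^{\calL_\sharp}$ be a continuous path with $\gamma_0=U$ and $\gamma_1=V$. Since each $\gamma_t$ is self-adjoint and invertible, $0\notin\sigma(\gamma_t)$, so $\sgn$ is continuous on $\sigma(\gamma_t)$ and $\sgn(\gamma_t)=\gamma_t|\gamma_t|^{-1}$ (with $|\gamma_t|=(\gamma_t^2)^{1/2}$) is a well-defined operator; it is self-adjoint, being a product of two commuting self-adjoint operators, and $(\sgn(\gamma_t))^2=\gamma_t^2|\gamma_t|^{-2}=\Id$, so it is a SAU. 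Because $\calL_\sharp$ is a C-star algebra (\cref{lem:C-star algebraic structure}) and $x\mapsto|x|$ is continuous, $|\gamma_t|\in\calL_\sharp$; it is invertible, hence $|\gamma_t|^{-1}\in\calL_\sharp$, and therefore $\sgn(\gamma_t)\in\calL_\sharp$. Thus $t\mapsto\sgn(\gamma_t)$ is a family of SAUs inside $\calSU^{\calL_\sharp}$, running from $\sgn(U)=U$ to $\sgn(V)=V$ (here $\sgn$ acts as the identity on $U$ and $V$ since their spectra lie in $\Set{\pm1}$).

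Next I would verify continuity of $t\mapsto\sgn(\gamma_t)$ exactly as in \cref{lem:lift invertible local path to unitary local path}: the image of the compact interval $[0,1]$ under $\gamma$ is a compact subset of the open set of invertible operators, so $M:=\sup_t\norm{\gamma_t^{-1}}<\infty$ and $R:=\sup_t\norm{\gamma_t}<\infty$, whence $\sigma(\gamma_t)\subseteq K:=[-R,-1/M]\cup[1/M,R]$ uniformly in $t$. Approximating $\sgn$ uniformly on $K$ by polynomials $p_n$ (Weierstrass), one gets $\sup_t\norm{\sgn(\gamma_t)-p_n(\gamma_t)}\le\sup_{\lambda\in K}|\sgn(\lambda)-p_n(\lambda)|\to0$; since each $t\mapsto p_n(\gamma_t)$ is norm-continuous, so is the uniform limit $t\mapsto\sgn(\gamma_t)$.

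Finally I would settle $\Lambda$-non-triviality. The path $t\mapsto\sgn(\gamma_t)$ lies in $\calSG^{\calL_\sharp}$ (each term is a self-adjoint invertible — in fact unitary — element of $\calL_\sharp$), and its left endpoint $\sgn(U)=U$ lies in $\calSG^{\calL_\sharp}_\LamNT$. By \cref{lem:lambda-non-triviality preserves in continuous path}, every $\sgn(\gamma_t)$ lies in $\calSG^{\calL_\sharp}_\LamNT$, and being a SAU it therefore lies in $\calSU^{\calL_\sharp}_\LamNT$. Hence $t\mapsto\sgn(\gamma_t)$ is the desired path from $U$ to $V$ within $\calSU^{\calL_\sharp}_\LamNT$. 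The only step that is not pure bookkeeping is the continuity of the retraction $t\mapsto\sgn(\gamma_t)$, which rests on the uniform spectral gap along the compact path; the C-star structure of $\calL_\sharp$ and \cref{lem:lambda-non-triviality preserves in continuous path} do the rest.
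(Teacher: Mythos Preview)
Your proof is correct and follows essentially the same approach as the paper: apply $\sgn$ to the given path $\gamma_t$ to land in $\calSU^{\calL_\sharp}$, and then invoke \cref{lem:lambda-non-triviality preserves in continuous path} to secure $\Lambda$-non-triviality along the lifted path. The paper's proof is considerably terser (it simply asserts continuity of $t\mapsto\sgn(\gamma_t)$ from the spectral gap and cites the preservation lemma), whereas you spell out the C-star algebra membership of $\sgn(\gamma_t)$ via \cref{lem:C-star algebraic structure} and the uniform-gap/Weierstrass argument for continuity; these are correct elaborations of steps the paper leaves implicit.
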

    \begin{proof}
    Let $\gamma_t$ be the path $U\leadsto V$ in $\calSG^{\calL_\sharp}$. Since $\sgn$ is continuous on the spectrum of $\gamma_t$ which has a gap, it follows that $\sgn(\gamma_t)$ is continuous and deform $U\leadsto V$ within $\calSU^{\calL_\sharp}$. Moreover, using \cref{lem:lambda-non-triviality preserves in continuous path}, the path lies in $\calSU^{\calL_\sharp}_\LamNT$.
    \end{proof}

    \section{Physical consequences}\label{sec:physical consequences}
    Let us briefly indicate what are the physical implications of the foregoing sections.

    \subsection{Star-shaped-wire chiral systems}
    Consider a physical system in the shape of a star with $k$-legs, $k\in\NN_{\geq3}$ (the case $k=2$ is a one-dimensional system, already dealt with in \cite{ChungShapiro2023}). Its Hilbert space is \eq{
        \calH_k:=\ell^2(\NN\times\Set{1,\dots,k})\otimes \CC^N\,.
    } While this Hilbert space is isomorphic to \eq{
        \ell^2(\NN)\otimes\CC^k\otimes\CC^N
    } we want to emphasize a particular mode of locality and for that reason we prefer not to use this isomorphism. Indeed, in writing the latter we imply that electrons may hop, on one one-dimensional strip, between any $k\times N$ degrees of freedom. This is depicted in \cref{fig:incorrect wire}; here, we have $k=3$ and $N=1$, and the electron may hop freely within each red dimer, e.g., the hop indicated by the blue arrow. On the other hand, in the former we mean that there would be spatial distance between the various $k$ degrees of freedom, because they sit on different legs of a system. This is depicted in \cref{fig:correct wire} when $k=3$ and $N=1$. Using this picture, a very physical form of locality in this system would be the graph distance between any two points. For instance, we may embed this system within $\ZZ^2$ just so long as each leg sits on a distinct ray.

    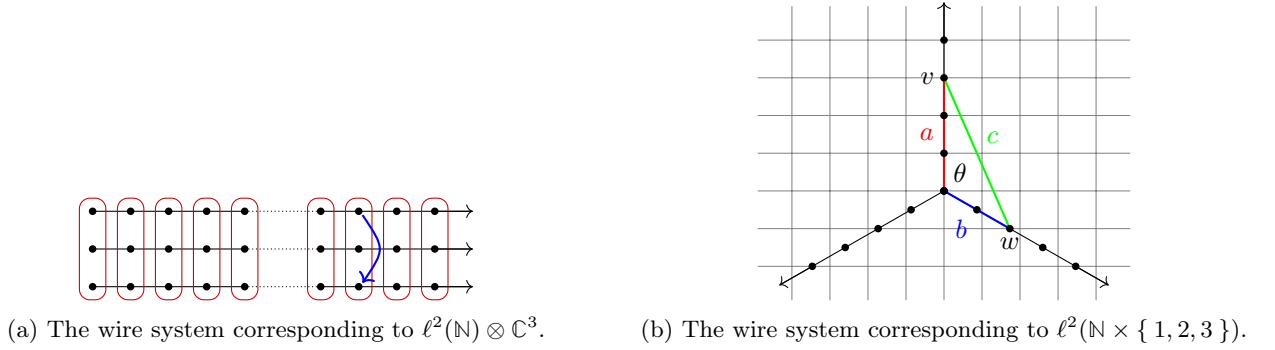
\begin{figure}[t]
    \begin{subfigure}{0.5\textwidth}
    	\centering
        \begin{tikzpicture}[scale=0.5,draw=black]
        \foreach \x in {0,1,2,3,4,6,7,8,9}{
            \foreach \y in {0,1,2}{
                \node (\x\y) at (\x,\y) [circle,fill,inner sep=1pt,black] {};
                \ifthenelse{\NOT\x=4}{\draw[black] (\x,\y)  -- (\x+1,\y)}{};
            }
            \node[draw=red!80!black,rounded corners,fit=(\x0)(\x1)(\x2)] {};
            
        }
        \foreach \y in {0,1,2}{
            \draw[densely dotted] (4,\y) -- (6,\y);
            \draw[->](9,\y) -- (10,\y);
        }
        \draw[blue,thick,->] (7+0.1,2-0.1) .. controls (7.7,1) .. (7+0.1,0+0.1);
        \end{tikzpicture}
            \caption{The wire system corresponding to $\ell^2(\NN)\otimes \CC^3$.}
        \label{fig:incorrect wire}
        \end{subfigure}
    \qquad
    \begin{subfigure}{0.5\textwidth}
        \centering  
        \begin{tikzpicture}[scale=0.5,draw=black]
        \draw[gray,very thin] (-4.9,-2.9) grid (4.9,4.9);
        \foreach \x in {0,1,2,3,4}{
            \draw (0,\x)  -- (0,\x+1);
            \draw ({\x*1.7320508075/2},-\x/2)  -- ({(\x+1)*1.7320508075/2},{-(\x+1)/2});
            \draw ({-\x*1.7320508075/2},-\x/2) -- ({-(\x+1)*1.7320508075/2},{-(\x+1)/2});   
        }
        
        \draw[red,thick] (0,0) -- (0,3) node[midway,left] {$a$};
        \draw[blue,thick] (0,0) -- ({2*1.7320508075/2},-2/2) node[midway,below left] {$b$};
        \draw[green,thick] (0,3) -- ({2*1.7320508075/2},-2/2) node[midway,above right] {$c$};
        \draw (0,0) node[black,above right]{$\theta$};
        \node at (0,3) [left,black]{$v$};
        \draw ({2*1.7320508075/2},-2/2) node[below,black]{$w$};

        \foreach \x in {0,1,2,3,4}{
            \draw (0,\x)  node[circle,fill,inner sep=1pt,black] {};
            \draw ({\x*1.7320508075/2},-\x/2) node[circle,fill,inner sep=1pt,black] {} ;
            \draw ({-\x*1.7320508075/2},-\x/2) node[circle,fill,inner sep=1pt,black] {};   
        }

        \def\x{4}
        \draw[->] (0,\x)  -- (0,\x+1);
        \draw[->] ({\x*1.7320508075/2},-\x/2)  -- ({(\x+1)*1.7320508075/2},{-(\x+1)/2});
        \draw[->] ({-\x*1.7320508075/2},-\x/2) -- ({-(\x+1)*1.7320508075/2},{-(\x+1)/2});   

        \end{tikzpicture} 
          
        \caption{The wire system corresponding to $\ell^2(\NN\times\Set{1,2,3})$.}
        \label{fig:correct wire}
        \end{subfigure}
    
    \caption{Two versions of wire systems.}
    \label{fig:wires}
    \end{figure}

    \begin{lem}
        The graph distance is comparable to the Euclidean distance if the graph is embedded in $\ZZ^2$.
    \end{lem}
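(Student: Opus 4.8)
The plan is to compare the two metrics on the vertex set of the star graph, identified with its image in $\ZZ^2\subseteq\RR^2$, by splitting into the only two regimes the star offers: two vertices on the same leg, and two vertices on different legs. Write the embedding so that the $m$-th vertex of leg $i$ sits at $m\,w_i$, where $w_i\in\ZZ^2\setminus\Set{0}$ is the primitive lattice vector along the $i$-th ray ($m\in\NN$, and the $k$ legs share a common hub). Since there are only $k$ legs, the step lengths are controlled: put $\rho_{\min}:=\min_i\norm{w_i}\geq 1$ and $\rho_{\max}:=\max_i\norm{w_i}<\infty$. Moreover the unit vectors $\hat w_1,\dots,\hat w_k$ are \emph{pairwise distinct} — this is exactly the "distinct rays" hypothesis — and finite in number, so $\delta:=\min\bigl(1,\ \min_{i\neq j}(1-\hat w_i\cdot\hat w_j)\bigr)$ is strictly positive. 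Finally, recall that the intrinsic star-graph distance between the $m$-th vertex of leg $i$ and the $n$-th vertex of leg $j$ equals $|m-n|$ if $i=j$ and $m+n$ if $i\neq j$, up to a bounded additive correction depending only on the convention for the hub — immaterial below, since the graph distance is $\geq 1$ off the diagonal.

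The same-leg case is immediate: the Euclidean distance between the $m$-th and $n$-th vertices of leg $i$ is exactly $\norm{w_i}\,|m-n|$, hence lies between $\rho_{\min}$ and $\rho_{\max}$ times the graph distance. The different-leg case carries the real content, and the only direction that is not automatic is the lower bound: the triangle inequality gives $\norm{m w_i-n w_j}\leq\norm{m w_i}+\norm{n w_j}\leq\rho_{\max}(m+n)$ for free, but one must rule out the Euclidean distance collapsing when legs $i$ and $j$ are nearly aligned. For this I would invoke the law of cosines, $\norm{m w_i-n w_j}^2=\norm{m w_i}^2+\norm{n w_j}^2-2\norm{m w_i}\norm{n w_j}(\hat w_i\cdot\hat w_j)$, and use $\hat w_i\cdot\hat w_j\leq 1-\delta$ together with $\delta\leq 1$ and AM--GM to bound the right-hand side below by $\delta\bigl(\norm{m w_i}^2+\norm{n w_j}^2\bigr)\geq\tfrac{\delta}{2}\bigl(\norm{m w_i}+\norm{n w_j}\bigr)^2\geq\tfrac{\delta}{2}\rho_{\min}^2(m+n)^2$. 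Hence $\sqrt{\delta/2}\,\rho_{\min}(m+n)\leq\norm{m w_i-n w_j}\leq\rho_{\max}(m+n)$, so the Euclidean distance is comparable to $m+n$, i.e. to the graph distance, in this regime too.

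Assembling the two cases, with $C:=\max\bigl(\rho_{\max},\ \rho_{\min}^{-1}\sqrt{2/\delta}\bigr)$ one gets $C^{-1}d_{\mathrm{graph}}\leq d_{\mathrm{Eucl}}\leq C\,d_{\mathrm{graph}}$ for every pair of distinct vertices; since both metrics vanish precisely on the diagonal, this is the asserted comparability, and any bounded additive discrepancy from the hub convention is absorbed by enlarging $C$. The sole genuine obstacle is the different-leg lower bound, and what makes it work is exactly the finiteness of the number of legs, which forces a strictly positive minimal angle between distinct rays — the statement would fail for a "star" with infinitely many legs whose directions accumulate.
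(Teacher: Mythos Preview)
Your proof is correct and follows essentially the same approach as the paper: split into the same-leg and different-leg cases, and in the latter use the law of cosines together with the fact that the angle between any two distinct legs is bounded away from zero. The paper's version is terser --- it fixes unit step length along each leg and simply quotes a constant $D(\theta)=\sqrt{1+2/|\sin\theta|}$ --- whereas you carry through explicit $\rho_{\min},\rho_{\max},\delta$ and verify the AM--GM step, but the underlying argument is the same.
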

    \begin{proof}
        For two vertices on the same leg, their Euclidean distance and graph distance are the same. Consider two vertices $v$ and $w$ on different legs (see \cref{fig:correct wire}.) The graph distance between $v$ and $w$ is $a+b$, while their Euclidean distance is $c$. There exists an angle-dependent constant $D(\theta)<\infty$ such that
        \eq{
            c\leq a + b \leq D(\theta) c\,.
        }
        Using the law of sines and cosines, we can roughly estimate the constant to be $D(\theta)=\sqrt{1+2/|\sin\theta|}$ which remains finite if $\theta>0$. Therefore, the two notions of distance are comparable.
    \end{proof}

    On $\calH_k\equiv\ell^2(\NN\times\Set{1,\dots,k})\otimes \CC^N$, we say that $A$ is exponentially local if there exists some $C<\infty$, $\mu>0$ such that
    \eq{
       \norm{A_{xy}}\leq C\exp(-\mu d(x,y)) \qquad (x,y\in \NN\times \Set{1,\dots,k})\,.
    }
    Here the distance $d(x,y)$ can be the graph distance or the Euclidean distance if we embed $\NN\times \Set{1,\dots,k}$ within $\ZZ^2$.

    \begin{lem}
        Any one of these two notions of locality (they are the same as they are comparable) implies $\calL_\mathrm{I}$ or $\calL_\mathrm{II}$ (they are the same since $k<\infty$).
    \end{lem}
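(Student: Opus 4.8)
The plan is to work with the non-trivial decomposition of $\Id$ dictated by the geometry: for $j\in\{1,\dots,k\}$ let $\Lambda_j:=\chi_{\NN\times\{j\}}\otimes\Id_{\CC^N}$ be the orthogonal projection onto the $j$-th leg. These are pairwise orthogonal, sum to $\Id$, and each is non-trivial, since $\im\Lambda_j\cong\ell^2(\NN)\otimes\CC^N$ and its complement $\bigoplus_{i\neq j}\im\Lambda_i$ are both infinite-dimensional; thus $\{\Lambda_j\}_{j=1}^k$ is a non-trivial decomposition of $\Id$ in the sense of \cref{defn:non-trivial decomposition of identity}. (Indeed, when each leg is placed along a distinct ray of $\ZZ^2$, the Laughlin operator $L$ is constant on each leg, so these $\Lambda_j$ are exactly its spectral projections, consistently with \cref{thm:abstract-L-locality}.) Because $k=|\calJ|<\infty$, Types I and II coincide by \cref{rem:difference between LI and LII}, so it suffices to show that an exponentially local $A$ satisfies $[A,\Lambda_j]\in\calK$ for every $j$. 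Since the graph and Euclidean distances are comparable by the preceding lemma, exponential locality for one implies it for the other after adjusting the rate, so we may assume $A$ is exponentially local for the graph distance $d$.

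The geometric input is that in a star graph every path joining a vertex $(n,i)$ on leg $i$ to a vertex $(m,j)$ on leg $j$ with $i\neq j$ must run through the common origin, so there is a constant $c<\infty$, independent of $i,j,n,m$, with $d\bigl((n,i),(m,j)\bigr)\geq n+m-c$ whenever $i\neq j$. Inserting this into \cref{eq:exp-locality} yields $\norm{A_{(n,i),(m,j)}}\leq Ce^{\mu c}e^{-\mu(n+m)}$ for $i\neq j$. Consequently each off-diagonal block $\Lambda_iA\Lambda_j$ ($i\neq j$) is Hilbert--Schmidt: its squared Hilbert--Schmidt norm is at most $N^2\sum_{n,m\in\NN}\norm{A_{(n,i),(m,j)}}^2$, a convergent geometric double sum. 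Hilbert--Schmidt operators are compact, hence $\Lambda_iA\Lambda_j\in\calK$ for all $i\neq j$.

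Finally, using $\Lambda_j^\perp=\sum_{i\neq j}\Lambda_i$ we assemble $[A,\Lambda_j]=\Lambda_j^\perp A\Lambda_j-\Lambda_jA\Lambda_j^\perp=\sum_{i\neq j}\bigl(\Lambda_iA\Lambda_j-\Lambda_jA\Lambda_i\bigr)$, a \emph{finite} sum of compact operators --- this is the only place $k<\infty$ is used --- and therefore compact. Thus $A\in\calL_\mathrm{I}=\calL_\mathrm{II}$, as claimed. The one step deserving genuine attention is the geometric lower bound on $d$ for points on distinct legs, which should be formulated robustly enough to cover whatever convention is used to glue the legs at the centre; granted that, the remainder is the routine principle that exponential off-diagonal decay forces the cross-blocks to be Hilbert--Schmidt and hence compact.
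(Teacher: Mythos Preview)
Your proof is correct and follows essentially the same route as the paper's: set up the leg projections $\Lambda_j$, note that $|\calJ|=k<\infty$ collapses Types I and II, and then use the key geometric fact that points on distinct legs are at graph distance roughly $n+m$ to control the cross-leg blocks. The only cosmetic difference is that the paper estimates $\sum_{x,y}\norm{[\Lambda_j,A]_{xy}}$ directly and concludes $[\Lambda_j,A]$ is trace-class, whereas you show each individual block $\Lambda_iA\Lambda_j$ is Hilbert--Schmidt and then sum finitely many of them; both arguments rest on the same convergent double geometric series $\sum_{n,m}e^{-\mu(n+m)}$.
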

    \begin{proof}
        Suppose $A$ is exponentially local and we use the graph distance. Let $j\in\Set{1,\dots,k}$. We will show that $[\Lambda_j,A]$ is trace-class. This is standard, see e.g. \cite[Lemma 2.(b)]{Graf_Shapiro_2018_1D_Chiral_BEC}. We repeat it for convenience:
        \eq{
            \sum_{x,y\in\NN\times\Set{1,\dots,k}} \norm{[\Lambda_j,A]_{xy}}&=\sum_{x,y\in\NN\times\Set{1,\dots,k}} |\Lambda_j(x)-\Lambda_j(y)|\norm{A_{xy}} \\
            &\leq 2C\sum_{x\in\NN\times\Set{j}}\sum_{y\in\NN\times\Set{1,\dots,j-1,j+1,\dots,k}} \exp(-\mu d(x,y)) \\
            &=2Ck\sum_{x\in\NN} e^{-\mu x} \sum_{y\in\NN}e^{-\mu y} < \infty\,.
        }

        Since $j\in\Set{1,\dots,k}$ was arbitrary, this implies ${\calL_{\mathrm{I}}}$. Since $k<\infty$, ${\calL_{\mathrm{I}}}={\calL_{\mathrm{II}}}$ and we are finished.
    \end{proof}

    \begin{cor}Chiral systems, i.e., systems of the form \eq{
        H = \begin{bmatrix}
            0 & S^\ast\\
            S & 0
        \end{bmatrix}\,,
    } where $S$ is some not-necessarily self-adjoint operator on $\calH_k$ which is invertible, has $k-1$ indices, each of which is $\ZZ$-valued. Moreover, we have proven above that the set of path-connected components of such systems (if we are allowed to make deformations within $\calL_\sharp$, $\sharp=\mathrm{I},\mathrm{II}$) is $\ZZ^{k-1}$. It is clear that this index has something to do with the chiral number of edge states at the central vertex. We are not aware that such an experiment with wires of Polyacetylene has been carried out.
    \end{cor}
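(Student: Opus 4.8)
The plan is to recognize the space of such chiral Hamiltonians as a deformation retract of the space $\calU^{\calL_\sharp}$ of $\calL_\sharp$-local unitaries over $\calH_k$ and then invoke \cref{thm:pi0 of quasi-2D unitaries}. First I would pass to the doubled Hilbert space $\calH_k\oplus\calH_k$ on which $H$ acts, carrying the chiral grading $\Pi=\Id\oplus(-\Id)$; the decomposition of the identity witnessing locality is $\Set{\Lambda_j\oplus\Lambda_j}_{j=1}^k$, where $\Lambda_j$ is the orthogonal projection of $\calH_k$ onto the $j$-th leg $\ell^2(\NN\times\Set{j})\otimes\CC^N$. A one-line computation with the block form of $H$ shows that $[H,\Lambda_j\oplus\Lambda_j]\in\calK$ for every $j$ is equivalent to $[S,\Lambda_j]\in\calK$ for every $j$ (using that each $\Lambda_j$ is self-adjoint and $\calK$ is $\ast$-closed), i.e.\ to $S\in\calL_\sharp(\calH_k)$; and $H$ is invertible iff $S$ is. Consequently the space of chiral $\calL_\sharp$-local insulators, classified up to continuous deformation through chiral $\calL_\sharp$-local insulators, is homeomorphic to $\calG^{\calL_\sharp}(\calH_k)$, the invertible $\calL_\sharp$-local operators on $\calH_k$.

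Next I would flatten: since $\calL_\sharp$ is a $C^\ast$-algebra by \cref{lem:C-star algebraic structure}, for $S\in\calG^{\calL_\sharp}$ the positive operators $\lvert S\rvert^{-t}=(S^\ast S)^{-t/2}$ again lie in $\calL_\sharp$ by continuous functional calculus, so $t\mapsto S\lvert S\rvert^{-t}$ is a path inside $\calG^{\calL_\sharp}$ from $S$ to the unitary $\polar(S)$. This exhibits $\calU^{\calL_\sharp}(\calH_k)$ as a deformation retract of $\calG^{\calL_\sharp}(\calH_k)$, so it suffices to compute $\pi_0\bigl(\calU^{\calL_\sharp}(\calH_k)\bigr)$, with the invariant of $H$ extracted from $\polar(S)$.

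Then I would check that $\Set{\Lambda_j}_{j=1}^k$ is a non-trivial decomposition of $\Id$ on $\calH_k$ in the sense of \cref{defn:non-trivial decomposition of identity}: each $\Lambda_j$ is non-trivial since $\im\Lambda_j\cong\ell^2(\NN)\otimes\CC^N$ and $\ker\Lambda_j\supseteq\im\Lambda_{j'}$ ($j'\ne j$) are both infinite-dimensional (this uses $k\ge 2$; the statement is only of interest for $k\ge 3$), the projections are pairwise orthogonal, and they sum to $\Id$. With $\calJ=\Set{1,\dots,k}$ finite, $\calL_\mathrm{I}=\calL_\mathrm{II}$ by \cref{rem:difference between LI and LII}, and \cref{thm:pi0 of quasi-2D unitaries} yields
\eq{ \pi_0\bigl(\calU^{\calL_\sharp}(\calH_k)\bigr)\;\cong\;\prod_{j=1}^{k-1}\ZZ\;=\;\bigoplus_{j=1}^{k-1}\ZZ\;\cong\;\ZZ^{k-1}\,, }
with the class of $H$ determined by $\Set{\findex\bbLambda_j\polar(S)}_{j=1}^{k}$. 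The $k$-th index is not free: because $k<\infty$, $\calL_\sharp$-locality makes every off-diagonal block $\Lambda_i\polar(S)\Lambda_{j}$ ($i\ne j$) compact, so $\polar(S)\equiv\bigoplus_{j}\bigl(\polar(S)\bigr)_j$ modulo $\calK$; additivity of the Fredholm index together with $\findex\polar(S)=0$ forces $\sum_{j=1}^{k}\findex\bbLambda_j\polar(S)=0$, which is precisely the constraint \cref{eq:finite calJ index constraint} built into \cref{thm:pi0 of quasi-2D unitaries}. This produces the advertised $k-1$ independent $\ZZ$-valued indices and the identification $\ZZ^{k-1}$.

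I do not expect a genuine obstacle here: the entire argument is a translation of the concrete star-wire setting into the abstract framework already built, and the only two points requiring care are (i) that the flattening homotopy $t\mapsto S\lvert S\rvert^{-t}$ stays $\calL_\sharp$-local, which is exactly the content of the $C^\ast$-algebra property, and (ii) the bookkeeping of the single linear relation among the $k$ Fredholm indices. The interpretation of these integers as the net chiral edge-mode counts along the legs meeting at the central vertex is physical commentary and would be stated as a remark rather than proved.
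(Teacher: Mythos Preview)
Your proposal is correct and follows the same line the paper intends: the paper states this corollary without proof, simply asserting it as a consequence of \cref{thm:pi0 of quasi-2D unitaries} together with the preceding lemma that exponential locality on the star graph implies $\calL_\mathrm{I}=\calL_\mathrm{II}$ locality (since $|\calJ|=k<\infty$). Your write-up is a faithful and more explicit unpacking of that implication---the reduction from chiral $H$ to invertible $S$, the flattening to $\polar(S)$ via the $C^\ast$-structure, and the observation that the indices sum to zero---none of which the paper spells out but all of which are implicit in the phrase ``we have proven above.''
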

    \subsection{Two-dimensional systems}
    We now turn to actual two-dimensional systems on $\ZZ^2$. Here of course the initial notion of locality to employ is exponential locality inherited from $\ZZ^2$. As has been mentioned, this form of locality implies $L$-locality (where $L$ is the Laughlin flux insertion operator \cref{eq:Laughlin flux operator}).
    \begin{example}
        Exponential locality does not imply ${\calL_{\mathrm{I}}}$ locality.
    \end{example}
    \begin{figure}[t]
        \centering
        \begin{tikzpicture}[scale=0.5]
        \draw[step=1cm,gray,opacity=0.5,very thin] (-3.9,-1.9) grid (3.9,5.9);
        \draw[thick,->,black] (0,0) -- (4.2,0);
        \draw[thick,->,black] (0,0) -- (0,6.2);
        
        \foreach \x in {-4,...,3}{
        \foreach \y in {-1,...,5}{
        \draw[very thick,->,teal] (\x+0.2,\y) -- (\x+1-0.2,\y);
        }
        }
        \end{tikzpicture}
    \caption{Bilateral right shift on $\ZZ^2$.}
    \label{fig:2d right shift}
    \end{figure}
    \begin{proof}
        Consider the right shift operator $R_1$ in \cref{ex:counterexample}, see \cref{fig:2d right shift}. Let $\calJ=\bS^1_\QQ$ and consider $Q:=\Lambda_{\{\pi/2\}}\equiv\chi_{\{\pi/2\}}(L)$. Since $R_1$ is finite-hopping, it is exponentially local. If $R_1\in\calL_\mathrm{I}$, then $[R_1,Q]\in\calK$ and $Q^\perp[R_1,Q]=Q^\perp R_1Q\in\calK$. However, the operator $Q^\perp R_1 Q$ shifts infinitely many position basis elements on the $y$-axis to the right, see \cref{fig:2d right shift}, and hence cannot be compact. More formally, we have that $R_1^\ast Q^\perp R_1 Q = Q\in\calK$. However, $Q$ cannot be compact since $\im Q$ is infinite-dimensional. 
    \end{proof}
    Thus, we have presented a counterexample that is exponentially local but is not in $\calL_\mathrm{I}$. Since $\calL_{\mathrm{II}}\subseteq\calL_{\mathrm{III}}\subseteq \calL_\mathrm{I}$ (see \cref{fig:flow chart localities}), it follows that exponential locality also does not imply $\calL_\mathrm{II}$ nor $\calL_{\mathrm{III}}$.

    As such, it is only the somewhat unphysical, \emph{quasi-2D-locality-obeying} systems that have an infinite number of indices. It is not clear to us how to implement such systems physically or what kind of Hamiltonians they correspond to.

	\bigskip
	\bigskip
	\noindent\textbf{Acknowledgments.} 
	We are  indebted to Gian Michele Graf for stimulating discussions. J.H.C. thanks Shouda Wang for helpful discussions.
	\bigskip
	\appendix

	\section{Operator Integrals}\label{sec:operator integrals}

    Let $(\Omega,\nu)$ be some complex measure space. Let $F:\Omega\to \calB(\calH)$ be an operator-valued function. Let us recall how the (weak) integral $\int_\Omega F(w)d\nu(w)\in\calB(\calH)$ can be constructed. We refer the reader to \cite[pp. 77]{rudin1991functional} for a more general discussion (there, consider the space $\calB(\calH)$ endowed with strong operator topology). We say that $F$ is weakly measurable if $w\mapsto \braket{\psi,F(w)\vf}$ is measurable for all $\psi,\vf\in\calH$. We say that $F$ is integrable if $\int_\Omega \norm{F(w)}\dif |\nu|(w)<\infty$ where $\norm{F(w)}$ is the operator norm. It is clear that $(\psi,\vf)\mapsto \int \braket{\psi,F(w)\vf}\dif \nu (w)$ is a bounded sesquilinear form. Thus, by Riesz representation theorem, there exists a bounded operator $L$ such that $\braket{\psi,L\vf} = \int \braket{\psi,F(w)\vf}\dif \nu (w)$ for all $\psi,\vf\in\calH$. We write $L=\int_\Omega F(w)\dif \nu(w)$ and call it the integral of $F$.

    \begin{prop}\label{prop:properties of weak integrals}
    Here are some properties for the operator integrals. \begin{enumerate}
    \item $\norm{\int F \dif\nu } \leq \int \norm{F}\dif|\nu|$.
    \item $\norm{\int F \dif\nu }_1 \leq \int \norm{F}_1\dif|\nu|$ where $\norm{\cdot}_1$ is the trace-class norm.
    \item 
    If $F$ maps to the trace-class operators and $\int F\dif \nu$ is also trace class and $\int \norm{F}_1\dif |\nu|<\infty$, then
        \eq{
            \tr \int F\dif \nu = \int \tr F \dif \nu\,.
        }
    \item If $T$ is any bounded operator, then $T\int F(w) \dif \nu(w) = \int TF(w) \dif \nu(w)$.
    \end{enumerate}
    \end{prop}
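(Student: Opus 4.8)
The plan is to derive each of the four items directly from the defining property of the weak integral $L:=\int_\Omega F\,d\nu$, namely that $\braket{\psi,L\vf}=\int_\Omega\braket{\psi,F(w)\vf}\,d\nu(w)$ for all $\psi,\vf\in\calH$, combined with the elementary scalar inequality $\bigl|\int_\Omega g\,d\nu\bigr|\le\int_\Omega|g|\,d|\nu|$ valid for any $\nu$-integrable scalar $g$. For item (1), take unit vectors $\psi,\vf$: then $|\braket{\psi,L\vf}|\le\int\braket{\psi,F(w)\vf}\,d|\nu|(w)$ is bounded (absolute value inside) by $\int\norm{F(w)}\,d|\nu|(w)$, and taking the supremum over unit vectors gives $\norm{L}\le\int\norm{F}\,d|\nu|$. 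For item (4), note first that $w\mapsto TF(w)$ is weakly measurable (since $\braket{\psi,TF(w)\vf}=\braket{T^\ast\psi,F(w)\vf}$) and integrable ($\norm{TF(w)}\le\norm{T}\norm{F(w)}$); then for all $\psi,\vf$ one computes $\braket{\psi,TL\vf}=\braket{T^\ast\psi,L\vf}=\int\braket{T^\ast\psi,F(w)\vf}\,d\nu(w)=\int\braket{\psi,TF(w)\vf}\,d\nu(w)$, which is $\braket{\psi,(\int TF\,d\nu)\vf}$, and since $\psi,\vf$ are arbitrary this forces $TL=\int TF\,d\nu$.

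For item (2), use the characterization $\norm{A}_1=\sup\sum_n|\braket{e_n,Af_n}|$, the supremum over all pairs of finite orthonormal systems $\{e_n\},\{f_n\}$ (valid, with value in $[0,\infty]$, for arbitrary bounded $A$). For such a system, $\sum_n|\braket{e_n,Lf_n}|=\sum_n\bigl|\int\braket{e_n,F(w)f_n}\,d\nu(w)\bigr|\le\sum_n\int|\braket{e_n,F(w)f_n}|\,d|\nu|(w)=\int\sum_n|\braket{e_n,F(w)f_n}|\,d|\nu|(w)\le\int\norm{F(w)}_1\,d|\nu|(w)$, where the interchange of the finite sum with the integral is trivial and the last step applies the same characterization pointwise in $w$. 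Taking the supremum over the orthonormal systems yields $\norm{L}_1\le\int\norm{F}_1\,d|\nu|$. In particular, once $\int\norm{F}_1\,d|\nu|<\infty$, the operator $L$ is automatically trace-class, so that clause in the hypothesis of item (3) is redundant.

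For item (3), fix an orthonormal basis $\{e_n\}$ of $\calH$ and let $P_N$ be the orthogonal projection onto $\szpan\{e_1,\ldots,e_N\}$. For each fixed $N$, $\tr(P_NLP_N)=\sum_{n=1}^N\braket{e_n,Le_n}=\sum_{n=1}^N\int\braket{e_n,F(w)e_n}\,d\nu(w)=\int\tr(P_NF(w)P_N)\,d\nu(w)$, the interchange being of a finite sum with the integral. Now let $N\to\infty$: on the left, $\norm{L-P_NLP_N}_1\to0$ since $L$ is trace-class, so $\tr(P_NLP_N)\to\tr L$; on the right, $\tr(P_NF(w)P_N)\to\tr F(w)$ pointwise, with $|\tr(P_NF(w)P_N)|\le\norm{P_NF(w)P_N}_1\le\norm{F(w)}_1$, which is $|\nu|$-integrable by hypothesis, so dominated convergence gives $\int\tr(P_NF(w)P_N)\,d\nu(w)\to\int\tr F(w)\,d\nu(w)$. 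Combining the two limits yields $\tr L=\int\tr F\,d\nu$.

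The only genuinely delicate item is (3): one must pass to finite-rank truncations \emph{before} attempting any exchange of limits, and the key input is the uniform bound $\norm{P_NF(w)P_N}_1\le\norm{F(w)}_1$ (an instance of the Banach-ideal property $\norm{BAC}_1\le\norm{B}\norm{A}_1\norm{C}$) together with the trace-norm convergence $\norm{L-P_NLP_N}_1\to0$; everything else is bookkeeping with the defining sesquilinear form and the standard trace-ideal inequalities. A minor recurring point is the measurability of $w\mapsto\norm{F(w)}_1$ and of the partial sums $\sum_n|\braket{e_n,F(w)f_n}|$, which follows from weak measurability of $F$ by routine arguments and which I would only mention in passing.
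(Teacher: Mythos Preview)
Your proof is correct and follows essentially the same approach as the paper. The paper only writes out items (2) and (3), leaving (1) and (4) implicit; your arguments for those are fine. For item (3) your use of the finite-rank truncations $P_N$ and dominated convergence with the bound $|\tr(P_NF(w)P_N)|\le\norm{F(w)}_1$ is exactly the paper's argument, just phrased with projections rather than partial sums. The only cosmetic difference is in item (2): the paper uses the duality characterization $\norm{T}_1=\sup_X|\tr(XT)|/\norm{X}$ over finite-rank $X$, whereas you use the equivalent characterization via pairs of finite orthonormal systems; both lead to the same one-line estimate. Your remark that the trace-class hypothesis on $\int F\,d\nu$ in item (3) is redundant once $\int\norm{F}_1\,d|\nu|<\infty$ is a correct consequence of item (2) that the paper does not make explicit.
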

    \begin{proof}
    For item 2., recall that $\norm{T}_1 = \sup_X |\tr (XT)|/\norm{X}$ where the supremum is taken over all finite-rank operators $X$ (see, e.g., \cite[Lemma 10, Chapter 11]{birman2012spectral}.) Finite-rank operators are of the form $X=\sum_{i=1}^n\alpha_i\vf_i\otimes \psi_i^\ast$. We have
    \eq{
        \left|\tr X\int F\dif\nu \right| = \left|\sum_{i=1}^n\alpha_i \tr \vf_i\otimes \psi_i^\ast \int F\dif\nu \right|  = \left|\sum_{i=1}^n\alpha_i \int \braket{\psi_i,F\vf_i}\dif \nu\right| = \left|\int \tr X F \dif \nu\right| \leq \int |\tr XF|\dif|\nu|\,.
    }
    We can take supremum over all finite-rank operators to conclude the result.
    
    Let us prove item 3. Let $\{\vf_i\}_{i=1}^\infty$ be an orthonormal basis. Then 
    \eq{
        \tr \int F\dif \nu = \sum_{i=1}^\infty \int \braket{\vf_i,F\vf_i} \dif \nu\,.
    }
    We have $\left|\sum_{i=1}^n \braket{\vf_i,F(w)\vf_i}\right| \leq \norm{F(w)}_1$ and $\int \norm{F}_1 \dif |\nu|<\infty$. Thus, by the dominated convergence theorem, we can interchange the sum and integral above. In particular, we have $\sum_{i=1}^\infty \braket{\vf_i,F\vf_i}=\tr F$.
    \end{proof}

    \begin{prop}\label{prop:double operator integrals over compact operators}
    Let $F,G:\Omega\to\calB(\calH)$ be weakly measurable such that
    \eql{
        C=\int_\Omega \norm{F(w)}\norm{G(w)} \dif{|\nu|(w)} < \infty \label{eq:integralbility condition}
    }
    and let $A\in\calB(\calH)$ be arbitrary. Then \begin{enumerate}
        \item $w\mapsto F(w)AG(w)$ is integrable. 
        \item if $A$ is trace-class, then so is $\int_\Omega F(w) AG(w)\dif\nu(w)$ and we have $\norm{\int_\Omega F(w) AG(w)\dif\nu(w)}_1\leq C\norm{A}_1$. 
        \item if $A$ is compact, then so is $\int_\Omega F(w) AG(w)\dif\nu(w)$.
    \end{enumerate} 
    \end{prop}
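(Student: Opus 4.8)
The plan is to prove the three items in order, deducing item~3 from item~2; the only inputs beyond elementary norm estimates are \cref{prop:properties of weak integrals} and the fact that the compact (resp.\ trace-class) operators form a norm-closed subspace of $\calB(\calH)$.

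For item~1, I would first check that $w\mapsto F(w)AG(w)$ is weakly measurable. Since $\calH$ is separable, weak measurability of $F$ and $G$ is equivalent to strong measurability, which is stable under pointwise products and under left/right multiplication by the fixed operator $A$; alternatively one verifies weak measurability by hand, approximating $A$ in operator norm by finite-rank operators and using the identity $F(w)(\varphi\otimes\psi^\ast)G(w)=(F(w)\varphi)\otimes(G(w)^\ast\psi)^\ast$, whose matrix coefficients are products of (measurable) matrix coefficients of $F$ and $G$. Integrability then follows from the pointwise bound $\norm{F(w)AG(w)}\le\norm{F(w)}\,\norm{A}\,\norm{G(w)}$ together with \cref{eq:integralbility condition}, which give $\int_\Omega\norm{F(w)AG(w)}\dif|\nu|(w)\le C\norm{A}<\infty$; in particular the weak integral $\int_\Omega F(w)AG(w)\dif\nu(w)$ is well-defined and this same estimate bounds its norm.

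For item~2, assume $A$ is trace-class. Then $F(w)AG(w)$ is trace-class for every $w$, with $\norm{F(w)AG(w)}_1\le\norm{F(w)}\,\norm{A}_1\,\norm{G(w)}$, so $\int_\Omega\norm{F(w)AG(w)}_1\dif|\nu|(w)\le C\norm{A}_1<\infty$. Feeding the trace-class-valued function $w\mapsto F(w)AG(w)$ into item~2 of \cref{prop:properties of weak integrals} (whose proof, via $\norm{T}_1=\sup_X|\tr(XT)|/\norm{X}$ over finite-rank $X$, also yields that the integral is trace-class once the right-hand side is finite) shows that $\int_\Omega F(w)AG(w)\dif\nu(w)$ is trace-class with trace-norm at most $C\norm{A}_1$.

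For item~3, pick finite-rank operators $A_n$ with $\norm{A-A_n}\to0$. Each $A_n$ is trace-class, so by item~2 the operator $\int_\Omega F(w)A_nG(w)\dif\nu(w)$ is trace-class, hence compact. Applying the estimate of item~1 to $A-A_n$ in place of $A$ gives $\norm{\int_\Omega F(w)AG(w)\dif\nu(w)-\int_\Omega F(w)A_nG(w)\dif\nu(w)}\le C\norm{A-A_n}\to0$, so $\int_\Omega F(w)AG(w)\dif\nu(w)$ is a norm-limit of compact operators and is therefore compact. The routine part is all the norm estimates; the one genuinely delicate point is the measurability bookkeeping in item~1, i.e.\ justifying that the weak integral exists so that the subsequent manipulations (and the invocation of \cref{prop:properties of weak integrals}) are legitimate — after that everything is formal.
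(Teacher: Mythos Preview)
Your proof is correct and follows essentially the same route as the paper's: the pointwise norm bound for item~1, the trace-class estimate via \cref{prop:properties of weak integrals} for item~2, and the finite-rank approximation argument for item~3. The only difference is that you address the weak-measurability of $w\mapsto F(w)AG(w)$ explicitly, whereas the paper silently takes this for granted.
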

    \begin{proof}
    The map $w\mapsto F(w)AG(w)$ is integrable. Indeed, we have $\int_\Omega \norm{F(w)AG(w)} \dif{|\nu|}(w) \leq C\norm{A}$. If $A$ is trace-class, using \cref{prop:properties of weak integrals}, we have
    \eq{
        \norm{\int F(w)AG(w)\dif \nu(w)}_1\leq \int \norm{F(w)AG(w)}_1 \dif|\nu|(w) \leq \int \norm{F(w)}\norm{G(w)}\norm{A}_1\dif |\nu|(w) \leq C\norm{T}_1
    }
    where we have used the fact that $\norm{XYZ}\leq \norm{X}\norm{Z}\norm{Y}_1$ if $X,Z\in\calB$ and $Y\in\calT_1$ (see, e.g., \cite[Eq. (2), Section 2, Chapter 11]{birman2012spectral}.)

    If $A$ is compact, then there is a sequence of finite-rank operators $A_n$ that converges to $A$ in operator norm. Thus $\int F(w)A_nG(w)\dif\nu(w)$ converges to $\int F(w)AG(w)\dif\nu(w)$ in operator norm with the estimate \eq{
        \norm{\int F(w)A_nG(w)\dif\nu(w)-\int F(w)AG(w)\dif\nu(w)} &\leq \int \norm{F(w)(A_n-A)G(w)} \dif|\nu|(w) \\
        &\leq C\norm{A_n-A}\,.
    } Since $\int F(w)A_nG(w)\dif\nu(w)$ is in fact trace-class, it follows that $\int F(w)AG(w)\dif\nu(w)$ is compact.
    \end{proof}

		\begingroup
		\let\itshape\upshape
		\printbibliography
		\endgroup
	\end{document}